\documentclass[11 pt]{amsart}

\usepackage{amssymb, amsmath, mathrsfs}
\usepackage[margin=3.5 cm]{geometry}
\usepackage{mathabx}
\usepackage{upgreek}
\usepackage{graphicx}
\usepackage{color}
\usepackage{subfigure}
\newtheorem{theorem}{Theorem}[section]
\newtheorem{prop}{Proposition}[section]
\newtheorem{claim}{Claim}[section]
\newtheorem{coro}{Corollary}[section]
\newtheorem{lemma}[theorem]{Lemma}

\newtheorem{fact}{Fact}[section]

\theoremstyle{definition}
\newtheorem{definition}[theorem]{Definition}

\theoremstyle{remark}
\newtheorem{remark}[theorem]{Remark}

\numberwithin{equation}{section}

\newcommand{\abs}[1]{\lvert#1\rvert}
\newcommand{\norm}[1]{\lVert#1\rVert}

\DeclareMathOperator{\im}{Im}
\newcommand{\ud}{\mathrm{d}}

\begin{document}

\title[Spectral data from rational frequency approximants]{Analytic quasi-periodic Schr\"odinger operators and rational frequency approximants}

\author{S. Jitomirskaya and C. A. Marx}
\address{Department of Mathematics, University of California, Irvine CA, 92717}
\thanks{The work was supported by NSF Grant DMS-1101578.}





\begin{abstract}
Consider a quasi-periodic Schr\"odinger operator  $H_{\alpha,\theta}$ with analytic potential and irrational frequency $\alpha$. Given any rational approximating $\alpha$, let $S_+$ and $S_-$ denote the union, respectively, the intersection of the spectra taken over $\theta$. We show that up to sets of zero Lebesgue measure, the absolutely continuous spectrum can be obtained asymptotically from $S_-$ of the periodic operators associated with the continued fraction expansion of $\alpha$. This proves a conjecture of Y. Last in the analytic case. Similarly, from the asymptotics of $S_+$, one recovers the spectrum of $H_{\alpha,\theta}.$ 
\end{abstract}

\maketitle

\section{introduction}

Let $\mathbb{T}:=\mathbb{R}/\mathbb{Z}$. Consider the quasi-periodic
Schr\"odinger operator with potential generated from an analytic function $v \in \mathcal{C}^\omega(\mathbb{T},\mathbb{R})$,
\begin{eqnarray} \label{eq_defn_op}
H_{\alpha, \theta}: \mathit{l}^2(\mathbb{Z}) \to \mathit{l}^2(\mathbb{Z}) ~\mbox{,}
~(H_{\alpha, \theta} \psi)_n := \psi_{n-1} + \psi_{n+1} + v(\alpha n + \theta) \psi_n ~\mbox{.}
\end{eqnarray}
Here, $\alpha \in \mathbb{T}$ is a fixed irrational usually referred to as frequency and $\theta \in \mathbb{T}$ is the phase. For fixed $\theta$, denote by $\sigma(\alpha, \theta)$ and $\sigma_{\mathrm{ac}}(\alpha, \theta)$ the spectrum of $H_{\alpha,\theta}$ and its absolutely continuous (ac)-component, respectively. It is well known that since $\alpha$ is irrational, the spectrum and ac spectrum do not depend on $\theta$ \cite{pastur,ls1}:
\begin{eqnarray}
\sigma(\alpha, \theta) & 
=:&\Sigma(\alpha) ~\mbox{,} \label{eq_spectrum} \\
\sigma_{\mathrm{ac}}(\alpha, \theta) &
=:&\Sigma_{\mathrm{ac}}(\alpha) ~\mbox{,} ~\forall \theta
\in \mathbb{T} ~\mbox{.} \label{eq_acspectrum}
\end{eqnarray}

Operators of the form (\ref{eq_defn_op}) arise as effective Hamiltonians in the description of a crystal layer immersed in an external magnetic field. In this application, $\alpha$ represents the magnetic flux through a unit cell and $v$ contains information about the lattice geometry as well as the interaction between lattice sites. Such operators, beginning with their prototype,  
the almost Mathieu (or Harper's) operator where 
%
$v(x) = 2 \lambda \cos(2 \pi x) ~\mbox{, } \lambda \in \mathbb{R} ~\mbox{,}$
have been subject of extensive rigorous, heuristic and numerical studies. The latter, naturally, always deal only with rational frequencies $p_n/q_n$ approximating $\alpha,$ with conclusions then made about the irrational case. For example, the famous Hofstadter butterfly \cite{hof} is a plot of the almost Mathieu spectra for 50 rational values of $\alpha.$ It is therefore an important and  natural question if and in what sense the spectral properties of such rational approximants relate to those of the quasi-periodic operator $H_{\alpha, \theta}$. The purpose of this article is to show that spectrum and ac spectrum of $H_{\alpha, \theta}$ can be associated with natural limits of spectra of the approximants, in a rather strong sense.

The basic spectral properties of operators associated with rational values of $\alpha$, $(\frac{p}{q})
$ with $(p,q)=1$ 
are well understood. For each $\theta \in \mathbb{T}$, $H_{\frac{p}{q}, \theta}$ is a  {\em{periodic}} operator whose spectrum, $\sigma(\frac{p}{q}, \theta)$, is given in terms of the discriminant, $t_{\frac{p}{q}}(\theta, E)$,
\begin{equation} \label{eq_spectrumper}
\sigma(\frac{p}{q}, \theta) = t_{\frac{p}{q}}(\theta, .)^{-1} ( [-2,2] ) ~\mbox{,} 
\end{equation}
where
\begin{eqnarray}
t_{\frac{p}{q}}(\theta, E) := \mathrm{tr} \left\{ \prod_{j=q - 1}^{0} A^E\left(j \frac{p}{q} + \theta \right) \right\} ~\mbox{,} \\
~A^E(x) = \begin{pmatrix} E- v(x) & -1 \\ 1 & 0  \end{pmatrix} ~\mbox{.} \label{eq_transferm}
\end{eqnarray}
Standard arguments show that $\sigma(\frac{p}{q}, \theta)$ is purely absolutely continuous and consists of $q$, possibly touching, bands (see also Fact \ref{fact_floquet} below).

In order to treat rational and irrational frequencies on the same footing, similar to Avron, v. Mouche, and Simon \cite{A}, given $\beta \in \mathbb{T}$, we introduce the sets
\begin{eqnarray}
S_{+}\left(\beta \right) := \bigcup_{\theta \in \mathbb{T}} \sigma\left(\beta,\theta\right) ~\mbox{,}  \\
S_{-}\left(\beta \right) := \bigcap_{\theta \in \mathbb{T}} \sigma_{\mathrm{ac}} \left(\beta,\theta\right) ~\mbox{.}
\end{eqnarray}
In order to avoid confusion, $\beta$ will always denote an arbitrary rational {\em{or}} irrational element of $\mathbb{T}$, whereas $\alpha \in \mathbb{T}$ is reserved for an irrational. 
From (\ref{eq_spectrum}) and (\ref{eq_acspectrum}), we infer that $S_+(\alpha)=\Sigma(\alpha)$ and $S_-(\alpha) = \Sigma_{\mathrm{ac}}(\alpha)$.

Given $\beta \in \mathbb{T}$, the set $S_+(\beta)$ has a neat interpretation as the spectrum of the decomposable operator
\begin{equation} 
H_\beta^\prime:=\int_\mathbb{T}^\oplus H_{\beta,\theta} \ud \theta ~\mbox{,}
\end{equation}
acting on the space $\int_\mathbb{T}^\oplus \mathit{l}^2(\mathbb{Z}) \ud \theta$. A proof of this simple, but useful fact is given in Proposition \ref{prop_ints+}, Sec. \ref{sec_genduality}. In particular, this illustrates that for any $\beta$, not necessarily irrational, $S_+(\beta)$ is really the natural quantity in the study of the spectrum of the family of operators $\{H_{\beta,\theta}\}_{\theta \in \mathbb{T}}$. 

In this article we analyze the continuity of the sets $S_\pm$ upon rational approximation of $\alpha$. To this end, let $\mathscr{B}$ denote the quotient space of the Borel sets of $\mathbb{R}$ modulo sets of zero Lebesgue measure. For convenience we will suppress the distinction between an equivalence class and its representatives. Given $(B_n)_{n\in \mathbb{N}}$  and $B, $ Borel subsets of $\mathbb{R},$ we write
\begin{eqnarray} \label{eq_def_topology}
\lim_{n\to \infty} B_n  = B ~\mbox{(in $\mathscr{B}$)} ~:\Longleftrightarrow & 
\limsup_{n \to \infty} B_n = \liminf_{n \to \infty} B_n = B 
\nonumber \\
\Longleftrightarrow & \lim_{n \to \infty} \chi_{B_n} 
 = \chi_B ~\mbox{Lebesgue a.e. ,}
\end{eqnarray}
which induces a topology on $\mathscr{B}$. 
Here, 
as usual, 
\begin{equation}
\liminf_{n \to \infty} B_n :=\bigcup_{n \in \mathbb{N}} \bigcap_{k \geq n} B_k ~\mbox{,} ~ \limsup_{n \to \infty} B_n :=\bigcap_{n \in \mathbb{N}} \bigcup_{k \geq n} B_k ~\mbox{.}
\end{equation}
Following, all set limits are understood in the topology given in (\ref{eq_def_topology}). Also, if not mentioned explicitly, all relations between Borel sets are understood as relations of the associated equivalence classes in $\mathscr{B}$.

Trivially, 
\begin{equation} \label{eq_trivial}
B_n \to B \Rightarrow \vert B_n \vert \to \vert B \vert ~\mbox{,}
\end{equation}
where $\vert . \vert$ is the Lebesgue measure.\\
%


Our main result recovers the sets $S_\pm(\alpha)$ from the asymptotics of $S_\pm(p_n/q_n)$ for a sequence of convergents $\frac{p_n}{q_n}\to\alpha:$
\begin{theorem} \label{thm_main}
For any irrational $\alpha$ there is a sequence $\frac{p_n}{q_n}\to\alpha$ such that
\begin{itemize}
\item[(i)] $\lim_{n\to\infty} S_-\left(\frac{p_n}{q_n}\right) = S_-(\alpha)  =  \Sigma_{\mathrm{ac}}\left(\alpha\right)$,
\item[(ii)] $\lim_{n\to\infty} S_+\left(\frac{p_n}{q_n}\right) = S_+(\alpha)  =  \Sigma\left(\alpha\right)$.
\end{itemize}
\end{theorem}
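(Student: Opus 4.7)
For rational $\beta = p/q$, introduce the (periodic) Lyapunov exponent $L(p/q, \theta, E) := q^{-1}\log\rho\bigl(M^E_{p/q}(\theta)\bigr)$ of the one-period transfer matrix $M^E_{p/q}(\theta) := \prod_{j=q-1}^{0} A^E(jp/q + \theta)$, and its phase average $\bar L(p/q, E) := \int_\mathbb{T} L(p/q, \theta, E)\,d\theta$. Since $\det M^E_{p/q} = 1$, one has $\rho(M^E_{p/q}(\theta)) = 1 \Leftrightarrow |t_{p/q}(\theta, E)| \le 2$, whence $\sigma(p/q, \theta) = \{E : L(p/q, \theta, E) = 0\}$. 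As $L(p/q, \cdot, E) \ge 0$ is continuous on $\mathbb{T}$, this yields
\[
S_-(p/q) = \{E : \bar L(p/q, E) = 0\}.
\]
The central analytic input is the Bourgain--Jitomirskaya theorem for analytic $v$: $\bar L(\beta, E)$ is jointly continuous at every $(\alpha, E)$ with $\alpha$ irrational, so along any sequence of continued-fraction convergents $p_n/q_n \to \alpha$, $\bar L(p_n/q_n, \cdot) \to L(\alpha, \cdot)$ uniformly on compacts.

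\textbf{Easy direction.} Combined with Ishii--Pastur--Kotani ($\Sigma_{\mathrm{ac}}(\alpha) = \{L(\alpha, \cdot) = 0\}$ in $\mathscr{B}$), uniform convergence gives
\[
\limsup_n S_-(p_n/q_n) \subseteq \Sigma_{\mathrm{ac}}(\alpha) \quad \text{in } \mathscr{B},
\]
since any $E$ with $L(\alpha, E) > 0$ forces $\bar L(p_n/q_n, E) > 0$ for all large $n$. The analogous upper bound $\limsup_n S_+(p_n/q_n) \subseteq \Sigma(\alpha)$ follows from $L > 0$ off $\Sigma(\alpha)$ and the Hausdorff continuity of $\Sigma(\beta)$ in $\beta$ (Avron--van Mouche--Simon).

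\textbf{Main obstacle: the matching lower bound.} The hard step is the reverse direction. Uniform convergence $\bar L(p_n/q_n, E) \to L(\alpha, E) = 0$ for $E \in \Sigma_{\mathrm{ac}}(\alpha)$ does not give $\bar L(p_n/q_n, E) = 0$, so one cannot conclude $\Sigma_{\mathrm{ac}}(\alpha) \subseteq \liminf_n S_-(p_n/q_n)$ pointwise. The plan is instead to prove the measure convergence $|S_-(p_n/q_n)| \to |\Sigma_{\mathrm{ac}}(\alpha)|$; combined with the upper inclusion, dominated convergence gives $L^1$-convergence $\chi_{S_-(p_n/q_n)} \to \chi_{\Sigma_{\mathrm{ac}}(\alpha)}$, which by passage to a subsequence (accommodated by the existential phrasing of the theorem) yields $\mathscr{B}$-convergence. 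The measure equality is to be extracted from the Thouless formula $\bar L(\beta, E) = \int \log|E - E'|\,dN_\beta(E')$, weak convergence of the integrated densities of states $dN_{p_n/q_n} \to dN_\alpha$, and the subharmonicity of $\bar L(p/q, \cdot)$ in complex $E$, which together bound $|\{\bar L(p_n/q_n, \cdot) = 0\}|$ from below by $|\{L(\alpha, \cdot) = 0\}|$ in the limit.

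\textbf{Part (ii).} Part (ii) follows the same scheme: the direct-integral identification $S_+(\beta) = \sigma(H_\beta')$ of Proposition \ref{prop_ints+} yields spectral lower semi-continuity $\Sigma(\alpha) \subseteq \liminf_n S_+(p_n/q_n)$ via strong convergence of the family $\{H_\beta'\}$, the matching upper inclusion was sketched above, and a measure estimate $|S_+(p_n/q_n)| \to |\Sigma(\alpha)|$ (using $S_+(p/q) = \mathrm{supp}(dN_{p/q})$ together with IDS continuity) upgrades these to $\mathscr{B}$-convergence exactly as in part (i).
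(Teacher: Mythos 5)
Your reduction scheme is logically sound as far as it goes: the identification $S_-(p/q)=\{E:\bar L(p/q,E)=0\}$ is correct, the $\limsup$ inclusions do follow from continuity of the Lyapunov exponent (resp.\ Hausdorff continuity of $S_+$), and the implication ``upper inclusion $+$ convergence of measures $\Rightarrow$ $L^1$-convergence of characteristic functions $\Rightarrow$ a.e.\ convergence along a subsequence'' is valid. The genuine gap is that the entire difficulty of the theorem has been displaced into the one step you do not prove: the lower bound $\liminf_n\lvert\{E:\bar L(p_n/q_n,E)=0\}\rvert\ge\lvert\{E:L(\alpha,E)=0\}\rvert$. Uniform convergence $\bar L(p_n/q_n,\cdot)\to L(\alpha,\cdot)$ is perfectly compatible with $\bar L(p_n/q_n,E)>0$ for \emph{every} $E$ (all gaps open by a tiny amount, or the discriminant's $\theta$-oscillation pushes every energy slightly outside one of the bands for some phase), so no pointwise or measure-theoretic lower bound on the zero set follows. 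The Thouless formula and weak convergence of the IDS control the potentials $\int\log|E-E'|\,dN(E')$, not the Lebesgue measure of the set where they vanish, and subharmonicity of $\bar L(p/q,\cdot)$ in $E$ gives upper envelopes, not lower bounds on zero sets. Indeed, $\lim_n|S_-(p_n/q_n)|=|\Sigma_{\mathrm{ac}}(\alpha)|$ for general analytic $v$ was itself open prior to this paper (it is Corollary \ref{coro_mainthm}, deduced \emph{from} Theorem \ref{thm_main}, not the other way around). What the paper actually supplies at this point is: (a) a generalized Chambers formula (Proposition \ref{prop_chambers}, via Avila's quantization of acceleration) showing that on the reducibility set $\mathcal{R}_\alpha$ the discriminant's phase variation is $O(e^{-c_lq})$, so membership in $S_-(p_n/q_n)$ reduces to locating the single number $a_{q_n,0}(E)$ relative to $[-2,2]$; (b) a Borel--Cantelli/duality argument (Lipschitz continuity of $S_+$ from localized dual eigenfunctions) excluding $|a_{q_n,0}(E)|>2$ infinitely often; and (c) Remez-type level-set estimates for polynomials with $q$ distinct real roots excluding the boundary case $|a_{q_n,0}(E)\pm2|\le e^{-c_lq_n}$ infinitely often. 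None of this is replaceable by the tools you list.

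Two smaller but real problems: in part (ii) you invoke ``strong convergence of $\{H_\beta'\}$'' to get $\Sigma(\alpha)\subseteq\liminf_nS_+(p_n/q_n)$, but strong (resolvent) convergence only yields lower semicontinuity in the Hausdorff sense --- every $E\in\Sigma(\alpha)$ is a \emph{limit} of points $E_n\in S_+(p_n/q_n)$ --- which is strictly weaker than the set-theoretic $\liminf_n=\bigcup_n\bigcap_{k\ge n}$ required here; the paper instead needs the quantitative (almost-Lipschitz) Hausdorff modulus of \cite{JK} on $\{L>0\}$, plus Theorem \ref{thm_main2} on $\mathcal{R}_\alpha$, together with the band count $\le q_n$ and Borel--Cantelli. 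Likewise $|S_+(p_n/q_n)|\to|\Sigma(\alpha)|$ does not follow from IDS continuity (weak convergence of measures says nothing about Lebesgue measures of supports, and $|S_+(\beta)|$ is genuinely discontinuous at rationals); it is again a consequence of the theorem, not an input to it.
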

In particular, from (\ref{eq_trivial}) we obtain as an immediate corollary
\begin{coro} \label{coro_mainthm}
For $\alpha$ and $\left(\frac{p_n}{q_n}\right)$ as in Theorem \ref{thm_main},
\begin{eqnarray}
\lim_{p_n/q_n \to \alpha} \vert S_+\left(\frac{p_n}{q_n}\right) \vert = \vert \Sigma(\alpha) \vert ~\mbox{,}  \label{eq_meas_coro_1} ~ \\
\lim_{p_n/q_n \to \alpha} \vert S_-\left(\frac{p_n}{q_n}\right) \vert = \vert \Sigma_{\mathrm{ac}}(\alpha) \vert ~\mbox{.} \label{eq_meas_coro_2}
\end{eqnarray}
\end{coro}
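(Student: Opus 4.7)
The plan is to deduce the corollary directly from Theorem \ref{thm_main} and the (trivial) implication (\ref{eq_trivial}), so the content of the argument is essentially bookkeeping. Theorem \ref{thm_main} produces a sequence $p_n/q_n \to \alpha$ for which $S_\pm(p_n/q_n) \to S_\pm(\alpha)$ in the quotient $\mathscr{B}$. By the definition (\ref{eq_def_topology}), this says $\chi_{S_\pm(p_n/q_n)} \to \chi_{S_\pm(\alpha)}$ Lebesgue-a.e. All of these sets sit inside the fixed compact interval $[-2-\norm{v}_\infty,\,2+\norm{v}_\infty]$, since for any $\beta \in \mathbb{T}$ one has $\norm{H_{\beta,\theta}} \leq 2+\norm{v}_\infty$ uniformly in $\theta$; hence dominated convergence forces
\begin{equation*}
|S_\pm(p_n/q_n)| \;=\; \int \chi_{S_\pm(p_n/q_n)}(E)\,\ud E \;\longrightarrow\; \int \chi_{S_\pm(\alpha)}(E)\,\ud E \;=\; |S_\pm(\alpha)|,
\end{equation*}
which is just the content of (\ref{eq_trivial}) applied here.

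It remains only to identify the right-hand sides with $|\Sigma(\alpha)|$ and $|\Sigma_{\mathrm{ac}}(\alpha)|$. This is immediate from the $\theta$-independence (\ref{eq_spectrum})--(\ref{eq_acspectrum}) of $\sigma(\alpha,\theta)$ and $\sigma_{\mathrm{ac}}(\alpha,\theta)$: the union, or the intersection, over $\theta$ of a family that is constant in $\theta$ is simply the common value, giving $S_+(\alpha) = \Sigma(\alpha)$ and $S_-(\alpha) = \Sigma_{\mathrm{ac}}(\alpha)$, as the authors remark just before Theorem \ref{thm_main}. Substituting yields (\ref{eq_meas_coro_1}) and (\ref{eq_meas_coro_2}).

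The proof of the corollary is thus a formal consequence of Theorem \ref{thm_main}; the main obstacle lies not here but in establishing the set-valued convergence of the theorem itself. For the $S_+$ half I would anticipate using Proposition \ref{prop_ints+} to recast $S_+(p_n/q_n)$ as the spectrum of the decomposable operator $H'_{p_n/q_n}$, then combining strong operator convergence with continuity of the Lyapunov exponent in $\alpha$ (available because $v$ is analytic) to promote Hausdorff convergence of spectra to convergence in $\mathscr{B}$. For $S_-$ one must tie the periodic discriminant condition $\max_\theta |t_{p_n/q_n}(\theta,E)| \leq 2$ to the Kotani characterization of $\Sigma_{\mathrm{ac}}(\alpha)$ as the essential closure of $\{L(\alpha,\cdot)=0\}$, which is where the genuine subtlety of Last's conjecture lives.
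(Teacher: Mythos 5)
Your argument is correct and is exactly the paper's: the corollary is an immediate consequence of Theorem \ref{thm_main} via the implication (\ref{eq_trivial}), together with the identifications $S_+(\alpha)=\Sigma(\alpha)$ and $S_-(\alpha)=\Sigma_{\mathrm{ac}}(\alpha)$ from (\ref{eq_spectrum})--(\ref{eq_acspectrum}). Your added observation that all the sets lie in a fixed compact interval (so that a.e.\ convergence of indicators really does give convergence of measures by dominated convergence) is a worthwhile justification of (\ref{eq_trivial}) in this context, which the paper leaves implicit.
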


\begin{remark}
\begin{enumerate}
\item The fact that the limits exist is a part of the statement of the theorem.
\item Theorem \ref{thm_main} is new even for the almost Mathieu operator. The new part here is in establishing statements about the $\liminf$ which show, in some sense, that there are eventually  no traveling gaps.
\item From a practical point of view, mere existence of {\em{some}} sequence along which $S_\pm$ of the quasi-periodic operator can be reconstructed from its periodic approximants is enough, as computations for $S_\pm$ can usually be done for an arbitrary rational (see e.g. \cite{A}, for the almost Mathieu operator). Nevertheless, we can give the following explicit characterization of $(p_n/q_n)$ in Theorem \ref{thm_main}: For Diophantine $\alpha$ (see (\ref{eq_diophdef})) the sequence $\left(\frac{p_n}{q_n}\right)$ can be taken as the sequence of canonical continuous fraction approximants (see (\ref{eq_cf1}). For the non-Diophantine case, for our proof, we have to restrict to a subsequence of sufficiently strong approximants (see (\ref{eq_liouv}) for details). 
We mention that based on a preprint of this article, Artur Avila has pointed out to us that Theorem \ref{thm_main} may however be strengthened so that the approximating sequence is the full sequence of canonical continuous fractions approximants even for non-Diophantine $\alpha$ (see Sec. \ref{subsec_avila} for details).
\item 
Analyticity of $v$ is essential for {\it our proof} of Theorem
\ref{thm_main}, as it allows for a generalization of Chambers' formula
(see Proposition \ref{prop_chambers}). We don't believe however
analyticity should be essential for Theorem \ref{thm_main}, see more below. 
\item $|S_+(\beta)|$ and therefore $S_+(\beta)$ can be discontinuous at rationals (see, e.g., Fact 1, Sec. 7 in \cite{A}). We don't have such evidence, however, for $S_-.$
Indeed, for the almost Mathieu operator, $|S_-(\beta)|=|4-2|\lambda||=\mbox{const}.$ It is therefore an interesting question what in general is true in this regard.
\item There is a tempting analogy between the statement of Theorem \ref{thm_main} and the characterization of essential and ac spectra 
through those of the right limits \cite{ls, ls1, remling}. There is no direct relation, though, and the proofs are completely different.
\item The more difficult and interesting part of Theorem \ref{thm_main} is the statement about $S_-$. The argument for part (ii) is close to a proper subset of the argument for part (i).
\end{enumerate}
\end{remark}

Questions about continuity of the sets $S_\pm$ w.r.t. $\beta$ have attracted much attention in the literature, particularly in context of the almost Mathieu operator, where, based on Chambers' formula and symmetry, some computations can be done explicitly. Most of these known results addressed continuity from the weaker point of view of Corollary \ref{coro_mainthm} and were motivated by the Aubry-Andre conjecture \cite{aa} on the measure of the almost Mathieu spectrum, popularized by B. Simon \cite{Simon_1984,Simon_2000}. 
Eq. (\ref{eq_meas_coro_1}) was first obtained for the almost Mathieu operator for a.e. $\alpha$ \cite{Last2, Last1, A} based on 1/2-H\"older continuity of $S_+(\beta)$ in the Hausdorff metric \cite{A}. It was extended to all irrational $\alpha$ by a combination of \cite{JK} and \cite{AK}. The related results of \cite{JK,Last1} hold for all analytic $v,$\footnote{The result of \cite{Last1} is formulated for the almost Mathieu only but the proof holds for any Lipschitz $v$. See more in Sec. \ref{subsec_histrem}.} as a result, for analytic potentials Eq. (\ref{eq_meas_coro_1})  was known for $\alpha$ with unbounded coefficients in the continued fraction expansion (a full measure set) and for all irrational $\alpha$ in the regime of positive Lyapunov exponents. Similarly, \cite{Last2} essentially contains an inequality ($\leq$) in Eq. (\ref{eq_meas_coro_2}) with the limit in (\ref{eq_meas_coro_2}) replaced by $\limsup$, but for a.e. $\alpha.$
The actual Eq. (\ref{eq_meas_coro_2}) was known for the almost Mathieu operator only (\cite{Last1,A} for a.e. $\alpha,\lambda$ and \cite{Last2,JK,AK} extending to all).

Theorem \ref{thm_main} (i) was roughly conjectured by Y. Last, who
informed us that he can establish a variant of this theorem where $v$
is merely $\mathcal{C}^1$ (rather than analytic), but then the
statement only covers a dense $G_\delta$ set of irrational $\alpha$
and appropriate sequences of rationals that approximate these $\alpha$
sufficiently well. Whether or not the analyticity requirement of $v$
can be relaxed without 
reducing the range of frequencies for which the statement holds is an interesting open problem.

In \cite{F}, M. Shamis obtained an inclusion,
\begin{eqnarray}
 \limsup_{\frac{p}{q} \to \alpha} S_{-}\left( \frac{p}{q} \right) \subseteq & \Sigma_{\mathrm{ac}}(\alpha) 
 ~\mbox{,} \label{eq_shamis}
\end{eqnarray}
as a corollary of continuity of the Lyapunov exponent \cite{C}, for all irrational $\alpha$ and arbitrary sequences of rational approximants, which, of course, immediately implies an inequality ($\leq$) in (\ref{eq_meas_coro_2}), for arbitrary  approximants, where the limit in (\ref{eq_meas_coro_2}) is replaced by $\limsup$. \footnote{Shamis also obtained that $\Sigma_{\mathrm{ac}} $ is contained in the $\liminf$ of intersections of Hausdorff $\epsilon-$neighborhoods of $\sigma\left( \frac{p}{q}, \theta \right)$.} 
 The $\limsup$ part of Theorem \ref{thm_main} (ii) is immediate as a corollary of Hausdorff continuity. As mentioned, the $\liminf$ part of Theorem \ref{thm_main} was not known in any setting and is the main subject of this work.


A more detailed review of the history of continuity statements of $S_\pm$ is given below in Sec. \ref{subsec_histrem}.

An important ingredient for our proof is that ac spectrum implies exponentially small variation (in $q$) of the approximating discriminants, obtained as a corollary to the proof of Avila's quantization of acceleration \cite{B} (``{\em{generalized Chambers' formula}}''). This essentially reduces the argument to showing that the phase-averaged discriminants are not only growing not more than sub-exponentially in the denominator, but eventually belong to $[-2,2]$ for a.e. energy in $\Sigma_{\mathrm{ac}}(\alpha)$. This is achieved, in part, through estimates on the level-sets of discriminants (see Sec. 6), which may also be of independent interest.  Altogether our arguments imply a formulation of Theorem \ref{thm_main} (i), in terms of the discriminants of periodic approximants, which is given in Theorem \ref{thm_mainindividphase}.

%
%

We structure the paper as follows. Section \ref{sec_prelim} serves as a preliminary introducing some basic notions. For Diophantine $\alpha$, we first argue that based on \cite{H, AK}, it is enough to consider energies $E$ for which the cocycle $(\alpha, A^E)$ is reducible to constant rotations (see Defintion \ref{def_conjtorot}). In Sec. \ref{sec_chambers}, we then prove above mentioned generalization of the celebrated Chambers' formula to arbitrary analytic potentials, which will allow the analysis of $S_\pm$. This result, formulated in Theorem \ref{prop_chambers}, is based on Avila's 
proof of quantization of the acceleration (\cite{B}; see also Appendix \ref{app_2} in the present paper). 

Section \ref{sec_outline} reduces the further analysis to three cases, two of which are non-trivial and are the subject of the sections \ref{sec_dualtiy} and \ref{sec_case3}. On the way, some general measure estimates for the sub-level sets of real polynomials, whose number of distinct {\em{real}} roots equals their degree, will be needed (Theorem \ref{thm_sublevel_rII}). These considerations are given in Sec. \ref{sec_polyn}. In this context, we also prove that the contributions of individual bands to the level sets are extremized by Chebyshev polynomials of the first kind (see Theorem \ref{thm_poly}), which extends a result of \cite{D}. Even though the latter is not needed for the proof of the main results, we believe Theorem \ref{thm_poly} to be of independent interest, whence include its proof in Sec. \ref{subsec_cheby}.

Combining the pieces, in Sec. \ref{sec_case3} we prove Theorem \ref{thm_main}. Finally, in Sec. \ref{sec_genduality}, some general facts on duality for arbitrary continuous potentials are presented, extending some known results for the almost Mathieu operator. 

\subsection{An alternative argument with an improvement} \label{subsec_avila}
Based on a preprint of this article, Artur Avila has pointed out to us an idea of an alternative proof of the ``intersection spectrum conjecture'', Theorem \ref{thm_main} (i), which yields the result for the full sequence of canonical approximants even if $\alpha$ is neither Diophantine nor Liouville. We present a sketch of his argument below. 

Fix $\alpha \in \mathbb{T}$ and let $(p_n/q_n)$ be the (full) sequence of approximants in a continued fraction expansion of $\alpha$. Define $\mathcal{K} \subseteq \mathbb{T}$ as the set of all $\kappa \in [0,1)$ so that 
\begin{equation}
\inf_{p \in \mathbb{Z}} \vert q_n \kappa - p \vert > \frac{1}{n^2} ~\mbox{, eventually.}
\end{equation}
A simple Borel Cantelli argument shows, $\vert K \vert =1$.

For $\beta \in \mathbb{T}$, denote by $N(\beta, E)$ the integrated density of states (IDS) (see, e.g. \cite{cfks} for the definition). Let $E$ such that $N(\alpha, E) \in \mathcal{K}$. Using \cite{H} 
this can be done for a.e. $E \in \Sigma_{\mathrm{ac}}(\alpha)$. An argument adapted from \cite{AJ2}, 
shows that $E \in \mathcal{R}_\alpha$ (for a definition of $\mathcal{R}_\alpha$, see (\ref{eq_defral})) implies (local) Lipschitz continuity of the IDS in the frequency, i.e. for some $\Gamma=\Gamma(E)$
\begin{equation}
\abs{N(\alpha,E) - N(p_n/q_n)} < \frac{\Gamma}{q_n^2} ~\mbox{.}
\end{equation}

In summary, using the definition of $\mathcal{K}$, we thus conclude that eventually,
\begin{equation}
p - 1 + \frac{1}{2 n^2} < q_n N(\frac{p_n}{q_n}, E) < p - \frac{1}{2 n^2} ~\mbox{,}
\end{equation}
for some $1 \leq p \leq q_n$. By Proposition \ref{prop_chambers} below, the discriminant (see (\ref{eq_discqper}) for a definition) eventually exhibits only exponentially small variation with $\theta$, whence exploring a relation between the IDS and the phase averaged discriminant (see e.g. \cite{AD}
) yields that $E$ must eventually be in the intersection of the $p$th bands.

\subsection{Further historical remarks} \label{subsec_histrem}

The relation between the spectral data of almost periodic operators and those of periodic approximants has enjoyed considerable attention in the literature, in particular in the study of the almost Mathieu operator. In addition to what has been said above, we give a more detailed account of related  results.

An important ingredient for statements of the form of Corollary \ref{coro_mainthm} is a modulus of continuity for $S_+$ with respect to the Hausdorff metric. 
We note that the map $\beta \to S_+(\beta)$ is known to be continuous in Hausdorff metric for continuous potentials \cite{AVS}. In \cite{A}, H\"older $1/2$-continuity for $S_+$ was established for any $\mathcal{C}^1$ potential $v(x)$.\footnote{Actually, Lipschitz continuity of $v(x)$ is enough for the proof given in \cite{A}} In the context of the almost Mathieu operator, this was employed in \cite{A, Thou, Last1}  to obtain statements about the measure and the Hausdorff dimension of the spectrum. 

The arguments in \cite{Last1} are easily seen to hold for a general Lipschitz potential, implying upper-semicontinuity of the map $\beta \to \vert S_+(\beta) \vert$ for {\em{all}} $\beta \in \mathbb{T}$. In terms of lower limits, in the same article, Last moreover showed that for the set of irrational $\alpha$ with unbounded elements in their continued fraction expansion, one has
\begin{equation} \label{eq_lscs+}
\vert S_+(\alpha) \vert \leq \liminf_{n \to \infty} \vert S_+(\frac{p_n}{q_n}) \vert ~\mbox{.}
\end{equation}

The restriction to a.e. $\alpha$ in (\ref{eq_lscs+}) is a consequence of only $1/2$-H\"older continuity of $S_+$ in Hausdorff metric. The modulus of continuity however improves to almost Lipschitz on the set of energies with positive Lyapunov exponent \cite{JK}. This fact was originally proven for analytic $v(x)$ \cite{JK}. Recently, (\ref{eq_lscs+}) has been established for rougher potentials $v(x)$ \cite{JMavi}. As we shall make use of the result for analytic $v(x)$ in the present article, we give a detailed statement in Theorem \ref{thm_jk}. In particular, this implies that for all irrational $\alpha$,
\begin{equation} \label{eq_jk}
\left\vert  S_+(\alpha) \cap \{E: L(\alpha,E) > 0\}  \right\vert = \lim_{n \to \infty} \vert S_+(\frac{p_n}{q_n}) \cap \{E: L(\alpha,E) > 0\} \vert ~\mbox{,}
\end{equation}
where $L(\alpha, E)$ is the Lyapunov exponent \cite{JK}.

Based on dynamical systems considerations, which will also play a crucial role in the present article (see Remark \ref{rem_conjtorot}), Avila and Krikorian \cite{AKpreprint} announced and sketched some arguments for
 joint continuity of the maps $(v,\alpha) \to \vert S_+(\alpha) \vert$ and $(v,\alpha) \to \vert S_-(\alpha) \vert$ in $\mathcal{C}^\omega \times DC(\kappa,r)$. Here, $DC(\kappa,r)$ denotes all Diophantines satisfying (\ref{eq_diophdef}) for fixed constants $\kappa, r$.

Finally, we mention that for general {\em{ergodic}} discrete Schr\"odinger operators, the relation between the ac-spectrum and the spectra of certain periodic approximants has been examined by Last in  \cite{Last2, Last3}.

{\bf Acknowledgement:} We are grateful to M. Shamis and S. Sodin for discussions that inspired our work on this subject. We also would like to thank Yoram Last for useful discussions on the history of the subject of this paper. Finally, we are grateful to Artur Avila (see Sec. \ref{subsec_avila}) for useful discussions and his comments following a first preprint of this article. In this respect, we also thank Qi Zhou (see footnote \ref{footnote} following Theorem \ref{thm_main2}) for his remarks.

\section{Preliminaries} \label{sec_prelim}

Throughout the paper, we shall consider analytic, vector-valued functions on $\mathbb{T}$. Given a Banach space $(\mathfrak{X},\norm{.})$, we shall view the analytic $\mathfrak{X}$-valued functions on $\mathbb{T}$ with holomorphic extension to a neighborhood of $\abs{\im{z}} \leq \delta$, $\delta>0$, as a Banach space in its own right equipped with the norm $\norm{X}_\delta:=\sup_{\abs{\im{z}} \leq \delta} \norm{X(z)}$. For our purposes, $\mathfrak{X}$ is either $\mathbb{C}$ or $M_2(\mathbb{C})$.

We start with some preliminaries related to the dynamical properties of solutions to the second order difference equation, 
\begin{equation} \label{eq_seq}
H_{\beta, \theta} \psi = E \psi ~\mbox{,}
\end{equation}
solved over $\mathbb{C}^\mathbb{Z}$. Here, $\beta \in \mathbb{T}$ and $E \in \mathbb{R}$ are fixed. 

Let $A^E(.): \mathbb{T} \to SL(2,\mathbb{C}))$ as defined  in (\ref{eq_transferm}). We call the pair $(\beta, A^E)$ an (analytic) Schr\"odinger -cocycle and understand it as a linear skew-map on $\mathbb{T} \times \mathbb{C}^2$, i.e. 
\begin{equation}
(\beta, A^E)(\theta, v) := (\theta + \beta, A^E(\theta) v) ~\mbox{,} ~\theta \in \mathbb{T}, ~v \in \mathbb{C}^2 ~\mbox{.}
\end{equation}
Iteration of $(\beta, A^E)$ produces the solution of (\ref{eq_seq}) in the sense that
\begin{equation}
(\beta, A^E(\theta))^n \begin{pmatrix} \psi_0 \\ \psi_{-1} \end{pmatrix} = \begin{pmatrix} \psi_{n} \\ \psi_{n-1} \end{pmatrix} ~\mbox{,} ~n \in \mathbb{N} ~\mbox{.}
\end{equation}

The dynamical properties of the cocycle $(\beta, A^E)$ are characterized in terms of the Lyapunov exponent (LE), defined by
\begin{equation} \label{eq_defle}
L(\beta, A^E) := \inf_{n \in \mathbb{N}} \frac{1}{n} \int_{\mathbb{T}} \log \Vert \prod_{j=n-1}^{0} A^E(x + j \beta) \Vert \ud x ~\mbox{.}
\end{equation}
In \ref{eq_defle}, $\norm{.}$ denotes {\em{any}} matrix norm.

Kingman's sub-additive ergodic theorem implies
\begin{equation}
L(\beta, A^E) = \lim_{n \to \infty} \frac{1}{n} \log  \Vert \prod_{j=n-1}^{0} A^E(x + j \beta) \Vert ~\mbox{,}
\end{equation}
for a.e. $x \in \mathbb{T}$, if $\beta$ is irrational, whereas
\begin{equation}
L(\beta, A^E) = \frac{1}{q} \int_{\mathbb{T}} \log \rho\left(\prod_{j=q-1}^{0} A^E(x + j \frac{p}{q}) \right) \ud x ~\mbox{,}
\end{equation}
if $\beta = \frac{p}{q}$ is rational with $(p,q) = 1$. Here, $\rho(M)$ denotes the spectral radius of a given matrix $M \in M_2(\mathbb{C})$. 

In terms of the LE, for any irrational $\alpha$, the set $S_{-}(\alpha)$ is characterized by
\begin{equation} \label{eq_s-le}
S_-(\alpha) = \{E : L(\alpha, A^E) = 0 \} ~\mbox{.}
\end{equation}
Note that (\ref{eq_s-le}) relies on continuity of the LE w.r.t. the energy, which is known for analytic potentials \cite{C}. 


Theorem \ref{thm_main} (i) will follow from upper-semicontinuity of $S_-(\alpha)$ upon rational approximation \cite{F} (the set inclusion in (\ref{eq_shamis})) by establishing 
\begin{equation} \label{eq_theorem}
\{E : L(\alpha, A^E) = 0 \} \subseteq \liminf_{p_n/q_n \to \alpha} S_{-}\left(\frac{p_n}{q_n}\right) ~\mbox{.}
\end{equation}

As mentioned earlier, part (ii) of Theorem \ref{thm_main} is essentially implied by part (i), whence until Sec. \ref{sec_case3} we will focus on the set $S_-$.

Finally, we will need to distinguish Diophantine and non-Diophantine $\alpha$. For each $\alpha$ one can associate the sequence of canonical continued fraction approximants $p_n/q_n$ satisfying
\begin{equation} \label{eq_cf1}
|\alpha -\frac{p_n}{q_n}|<\frac{1}{q_nq_{n+1}}
\end{equation} 
and 
\begin{equation} \label{eq_cf2}
q_{n+1} \geq 2^{n/2} ~\mbox{.}
\end{equation}

We will say that $\alpha$ is {\em{Diophantine}} if 
\begin{equation} \label{eq_diophdef}
\abs{\sin( 2 \pi \alpha n)} > \frac{\kappa}{\abs{n}^r} ~\mbox{,} ~\forall \mathbb{Z}\setminus\{0\} ~\mbox{,}
\end{equation}
for some $0<\kappa < \infty$ and $r > 1$, both, in general, depending on $\alpha$. Equations (\ref{eq_cf1}) and (\ref{eq_diophdef}) imply that for $\alpha$ Diophantine,
\begin{equation} \label{eq_cf3}
q_{n+1} \leq \frac{2 \pi}{\kappa} q_n^r ~\mbox{,}
\end{equation}
where $\kappa$ and $r$ are the same as in  (\ref{eq_diophdef}).

In particular, if $\alpha$ is non-Diophantine then $\forall r>1$, 
\begin{equation} \label{eq_liouv}
q_{n+1} > q_n^r ~\mbox{, infinitely often.}
\end{equation}
For non-Diophantine $\alpha$, it is precisely these subsequences for which Theorem \ref{thm_main} hold. 

If not specified otherwise, to simplify notation, we agree on the following convention: For non-Diophantine $\alpha$, $(p_n/q_n)$ shall always stand for any fixed sub-sequence of the canonical continued fraction approximants satisfying (\ref{eq_liouv}) for some $r$. In the Diophantine case, however, $(p_n/q_n)$ will denote the (full) sequence of canonical continued fraction approximants.

\section{Chambers' formula revisited} \label{sec_chambers}

In order to make statements about the sets $S_\pm$, some information about the phase-dependence of the discriminant is necessary. First, we recall that given $p/q \in \mathbb{Q}$ with $(p,q)=1$, the discriminant is a $q$-periodic function, whence one may write
\begin{equation} \label{eq_discqper}
t_{p/q}(\theta,E) = \sum_{k \in \mathbb{Z}} a_{q,k}(E) \mathrm{e}^{2 \pi i q k \theta} ~\mbox{.}
\end{equation}

For the almost Mathieu operator, the potential $v$ is in fact a trigonometric polynomial of degree 1. Thus in (\ref{eq_discqper}) only the Fourier coefficients with $k=0, \pm q$ survive, resulting in the celebrated Chambers' formula \cite{Chamb} (for a proof see also \cite{BelS}),
\begin{equation} \label{eq_chambersorig}
t_{p/q}(\theta,E) = a_{q,0}(E) - 2 \lambda^q \cos(2 \pi q \theta) ~\mbox{,}
\end{equation}
which gives rise to explicit expressions for $S_\pm$ in terms of the $q$th degree polynomial $a_{q,0}(E)$ \cite{A}. In particular, it shows that phase variations of the discriminant for the sub-critical almost Mathieu operator are exponentially small in $q$.

For arbitrary analytic $v$ and $E\in\Sigma_{ac}(\alpha)$, the following proposition is therefore a generalization of Chambers' formula. It determines $t_{p/q}(\theta,E)$ of rational approximants of $\alpha$, in terms of the phase-average $a_{q,0}(E)$, up to a correction term which is exponentially small in $q$:
\begin{prop} \label{prop_chambers}
There exists a sequence of nested measurable sets $\mathcal{R}_\alpha^{(l)}$, $\Sigma_{\mathrm{ac}}(\alpha) = \cup_{l \in \mathbb{N}} \mathcal{R}_\alpha^{(l)}$, allowing for the following: For each $l \in \mathbb{N}$, $\exists$ $c_l$ such that for $E \in \mathcal{R}_\alpha^{(l)}$ and some $N=N(E) \in \mathbb{N}$ one has:
\begin{equation} \label{eq_chambers}
\sup_{\theta \in \mathbb{T}} \left\vert t_{p/q}(\theta, E) - a_{q,0}(E) \right\vert \leq \mathrm{e}^{- c_l q} ~\mbox{, } 
\end{equation}
whenever $q>N$ and $p/q \in \mathbb{Q}$, $(p,q)=1$.
\end{prop}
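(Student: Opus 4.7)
The proof should cast the desired exponential smallness of $t_{p/q}(\theta, E) - a_{q, 0}(E)$ as an exponential-decay statement for the nonzero Fourier coefficients of $t_{p/q}(\cdot, E)$, viewed as a $1/q$-periodic analytic function of $\theta$. The main input is Avila's global theory of analytic one-frequency cocycles (specifically, the quantization of the acceleration developed in Appendix \ref{app_2}), which identifies a subcritical regime on which the Lyapunov exponent vanishes on a whole complex strip. I would begin by setting
$$\mathcal{R}_\alpha^{(l)} := \left\{ E \in \Sigma_{\mathrm{ac}}(\alpha) : L(\alpha, A^E(\cdot + i\epsilon)) = 0 \mbox{ for all } |\epsilon| \leq 2/l \right\} ~\mbox{.}$$
By quantization of the acceleration (a non-negative integer), every $E$ with $L(\alpha, A^E) = 0$ is either \emph{subcritical} (with $L$ vanishing on a nontrivial strip) or \emph{critical}; the critical set has zero Lebesgue measure. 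Combined with (\ref{eq_s-le}) this yields $\Sigma_{\mathrm{ac}}(\alpha) = \bigcup_l \mathcal{R}_\alpha^{(l)}$ modulo null sets, and the sets $\mathcal{R}_\alpha^{(l)}$ are manifestly measurable and nested.

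The next step is to transfer the vanishing of $L(\alpha, A^E(\cdot + i\epsilon))$ to a quantitative statement for the approximating rational cocycles. Here I would invoke continuity of the Lyapunov exponent in the cocycle for analytic potentials (Bourgain-Jitomirskaya and its refinements) applied in the frequency variable. Combined with convexity of $\epsilon \mapsto L(\beta, A^E(\cdot + i\epsilon))$ and the integer quantization of the slope, this delivers $L(p/q, A^E(\cdot + i\epsilon)) \to 0$ as $p/q \to \alpha$, uniformly in $|\epsilon| \leq 1/l$.

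The central technical step is to upgrade this averaged bound on $\log \rho(M_q)$ to a pointwise bound on the complexified trace. The function $\theta \mapsto \log \|M_q(\theta + i\epsilon)\|$ is subharmonic on the strip $|\epsilon| < \eta$, where $\eta$ is the width of analyticity of $v$, and its boundary values are polynomially controlled in $q$ via the analytic norm of $A^E$. Using standard subharmonic-function machinery --- Cartan-type inequalities or Bourgain's lemma, i.e.\ the tools underlying the Bourgain-Jitomirskaya continuity proof --- one converts the average $\int \log \rho(M_q(\theta + i\epsilon)) \ud\theta = q\, L(p/q, A^E(\cdot + i\epsilon))$ into the uniform bound
$$\sup_{\theta \in \mathbb{T}} \frac{1}{q} \log |t_{p/q}(\theta + i\epsilon, E)| \longrightarrow 0 \qquad (q \to \infty)$$
uniformly in $|\epsilon| \leq 1/l$. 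This passage from average to supremum for a family of subharmonic functions is the main obstacle, and it is the place where analyticity of $v$ is essential.

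Once this pointwise bound is in hand, the conclusion is quick. The $1/q$-periodicity (\ref{eq_discqper}) and Cauchy's estimate on the width-$1/l$ strip give, for $k \neq 0$,
$$|a_{q, k}(E)| \leq \mathrm{e}^{-2\pi q |k|/l} \sup_{\theta \in \mathbb{T}} |t_{p/q}(\theta \pm i/l, E)| ~\mbox{,}$$
with the sign chosen according to the sign of $k$. Substituting the previous step yields $|a_{q, k}(E)| \leq \mathrm{e}^{q\eta_q - 2\pi q|k|/l}$ with $\eta_q \to 0$, and a geometric sum over $k \neq 0$ produces the claimed bound $\mathrm{e}^{-c_l q}$ for any $c_l$ strictly less than $2\pi/l$ and for $q$ larger than a suitable $N = N(E)$.
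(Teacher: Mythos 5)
Your proposal is correct and follows the same strategy as the paper's proof: reduce the claim to exponential decay of the nonzero Fourier modes $a_{q,k}$, $k\neq 0$, of the $1/q$-periodic discriminant, and extract that decay from the vanishing of the complexified Lyapunov exponent $L(\alpha,A^E_\epsilon)$ on a strip together with Lemma \ref{lemma_avila} (whose formula (\ref{eq_chambers1}) is exactly the average-to-supremum passage you describe, proved in Appendix \ref{app_2} via the Remez/Cartan inequality). Two points of divergence are worth recording. First, you define $\mathcal{R}_\alpha^{(l)}$ by subcriticality ($L(\alpha,A^E_\epsilon)=0$ on $\abs{\epsilon}\leq 2/l$) and cover $\Sigma_{\mathrm{ac}}(\alpha)$ via the measure-zero of the critical set, whereas the paper defines $\mathcal{R}_\alpha^{(l)}$ by $1/l$-reducibility to rotations via Avila--Fayad--Krikorian (\ref{eq_afk}); your route gives the covering only modulo null sets, which is all that Theorem \ref{thm_main2} needs, but be aware that the paper's specific choice is forced by its later use in Section \ref{sec_dualtiy}, where reducibility to constant rotations is converted into localized dual eigenfunctions. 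Second, your endgame differs: you read off $\abs{a_{q,k}}\leq \mathrm{e}^{q\eta_q-2\pi q\abs{k}/l}$ directly from the Cauchy estimate at $\epsilon=\pm 1/l$, i.e.\ from $M_q(1/l)\leq \mathrm{e}^{o(q)}$, whereas the paper first argues that the quantized right derivatives of the piecewise-linear function $\frac{1}{q}\log_+M_q(\epsilon)$ must eventually vanish, so that $M_q$ is eventually constant on $[-\frac{1}{2l},\frac{1}{2l}]$, and only then evaluates at the endpoint. Both yield $c_l\sim \pi/l$; your version is the more economical of the two, since the "eventually constant slope" step is not actually needed once (\ref{eq_chambers1}) and $L(\alpha,A^E_\epsilon)\equiv 0$ on the strip are in hand.
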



In order to define $\mathcal{R}_\alpha^{(l)}$, we will need to consider complexifications of the cocycle in the phase. Since $A^E:\mathbb{T} \to SL(2,\mathbb{C})$ is analytic, for $\epsilon \in \mathbb{R}$, we may consider its complex extension $(\beta, A^E(.+i\epsilon))=:(\beta, A_\epsilon^E(.))$, defined for $\abs{\epsilon} \leq \delta$ and some $\delta>0$. In analogy to (\ref{eq_defle}), we associate the LE of the complexified cocycle $(\beta, A_\epsilon^E)$. It is easy to see that $L(\beta, A_\epsilon^E)$ is a convex, even function of $\epsilon$. 

\begin{definition} \label{def_conjtorot}
Given $\zeta>0$, an analytic Schr\"odinger cocycle $(\beta, A^E)$ is called {\em{$\zeta$-reducible to rotations}}  if 
\begin{equation}
N(\theta+\beta)^{-1} A^E(\theta) N(\theta) = \begin{pmatrix}  \mathrm{e}^{2 \pi i \phi(\theta)} & 0 \\ 0 & \mathrm{e}^{-2 \pi i \phi(\theta)} \end{pmatrix} ~\mbox{,} ~\theta \in \mathbb{T} ~\mbox{,}
\end{equation}
for some $N:\mathbb{T} \to SL(2,\mathbb{C})$ and $\phi:\mathbb{T} \to \mathbb{C}$, with holomorphic extension to a neighborhood of $\abs{\im(z)} \leq \zeta$. 
Moreover, $(\beta, A^E)$ is called {\em{$\zeta$-reducible to constant rotations}} if $\phi(\theta) \equiv \phi_0$, some $\phi_0 \in \mathbb{T}$.
\end{definition}

\begin{remark} \label{rem_conjtorot}
\begin{itemize} 
In \cite{H}, Avila, Fayad and Krikorian prove that for any {\em{irrational}} $\alpha$ and analytic potential $v$
\begin{equation} \label{eq_afk}
\{E: L(\alpha, A^E)=0\} = \bigcup_{\zeta>0} \{E: (\alpha, A^E) ~\mbox{is $\zeta$-reducible to rotations}\} ~\mbox{.}
\end{equation}
In particular, for {\em{Diophantine}} $\alpha$, solution of a cohomological equation thus shows that 
\begin{equation} \label{eq_ak}
\{E: L(\alpha, A^E)=0\} = \bigcup_{\zeta>0} \{E: (\alpha, A^E) ~\mbox{is $\zeta$-reducible to constant rotations}\} ~\mbox{.}
\end{equation}
We mention that, originally, (\ref{eq_ak}) had been obtained in \cite{AK} independently of (\ref{eq_afk}).
\end{itemize}
\end{remark}

We set,
\begin{equation} \label{eq_redpart}
\mathcal{R}_{\alpha} := \bigcup_{l \in \mathbb{N}} \{E: (\alpha, A^E) ~\mbox{is $1/l$-reducible to rotations}\} =: \bigcup_{l \in \mathbb{N}} \mathcal{R}_\alpha^{(l)} ~\mbox{.}
\end{equation}
Using Remark \ref{rem_conjtorot}, for Diophantine $\alpha$,
\begin{equation} \label{eq_defral}
\mathcal{R}_\alpha^{(l)} = \{E: (\alpha, A^E) ~\mbox{is $1/l$-reducible to {\em{constant}} rotations}\} ~\mbox{,}
\end{equation}
(equality holds setwise) for all $l \in \mathbb{N}$. \footnote{By construction of \cite{H}, $N(x)=N^E(x)$ is a measurable function of $E$. The condition that it is analytic in $x$ in a neighborhood of $\abs{\im{z}} \leq \zeta$ is defined by countably many conditions on Fourier coefficients whence measurability of $\mathcal{R}_\alpha^{(l)}$ follows.}

Based on Definition \ref{def_conjtorot}, Remark \ref{rem_conjtorot} and (\ref{eq_shamis}), Theorem \ref{thm_main} is hence implied by:
\begin{theorem} \label{thm_main2}
Given $\alpha$ irrational, 
\begin{equation}
\mathcal{R}_\alpha \subseteq \liminf_{\frac{p_n}{q_n} \to \alpha} S_{-}\left(\frac{p_n}{q_n}\right) ~\mbox{.}
\end{equation}
\end{theorem}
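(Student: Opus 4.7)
The plan is to combine the generalized Chambers formula (Proposition~\ref{prop_chambers}) with measure estimates on level sets of the phase-averaged discriminant $a_{q,0}(E)$. Fix $l \in \mathbb{N}$ and $E \in \mathcal{R}_\alpha^{(l)}$. From (\ref{eq_spectrumper}), $E \in S_-(p/q)$ if and only if $\sup_\theta |t_{p/q}(\theta,E)| \leq 2$, so Proposition~\ref{prop_chambers} yields, for $n$ sufficiently large,
\[
|a_{q_n,0}(E)| \leq 2 - e^{-c_l q_n} \quad\Longrightarrow\quad E \in S_-(p_n/q_n).
\]
It therefore suffices to establish that, for a.e.\ $E \in \mathcal{R}_\alpha^{(l)}$, this inequality holds for all but finitely many $n$; a countable union over $l$ then delivers $\mathcal{R}_\alpha \subseteq \liminf_n S_-(p_n/q_n)$ in $\mathscr{B}$.

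I would split the complementary ``bad'' set $F_n^{(l)} := \{E \in \mathcal{R}_\alpha^{(l)} : |a_{q_n,0}(E)| > 2 - e^{-c_l q_n}\}$ into an \emph{overshoot part} $\{|a_{q_n,0}| > 2\} \cap \mathcal{R}_\alpha^{(l)}$ and a \emph{boundary layer} $\{2 - e^{-c_l q_n} < |a_{q_n,0}| \leq 2\} \cap \mathcal{R}_\alpha^{(l)}$. For the boundary layer, since $a_{q_n,0}$ is a real polynomial of degree $q_n$ arising as the average of the periodic discriminants, the measure estimates developed in Section~\ref{sec_polyn} (in particular Theorem~\ref{thm_sublevel_rII}) should deliver a Lebesgue bound of order $e^{-c_l q_n/2}$ or better, summable in $n$, so that Borel--Cantelli kills this contribution for a.e.\ $E$.

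The main obstacle is the overshoot part: showing that for a.e.\ $E \in \mathcal{R}_\alpha^{(l)}$ the averaged discriminant $a_{q_n,0}(E) = \int_\mathbb{T} t_{p_n/q_n}(\theta,E)\,\ud\theta$ eventually lies in $[-2,2]$. Here reducibility is essential. By Definition~\ref{def_conjtorot} there is an analytic $N$ on $\{|\im z| \leq 1/l\}$ with $N(\theta + \alpha)^{-1} A^E(\theta) N(\theta)$ a rotation with angle $\phi(\theta)$. Then $\prod_{j=q_n-1}^{0} A^E(\theta + j p_n/q_n)$ should differ from a conjugate of $\prod_j R_{\phi(\theta + j p_n/q_n)}$ by an error governed by the drift $|\alpha - p_n/q_n| < 1/(q_n q_{n+1})$ and controlled via analyticity of $N$ in its strip; since products of rotations have trace in $[-2,2]$, this forces $|a_{q_n,0}(E)| \leq 2$ in the limit. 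Promoting this into a pointwise a.e.\ statement is where the three-case split of Section~\ref{sec_outline}, the duality arguments of Section~\ref{sec_dualtiy}, and in the non-Diophantine case the restriction to subsequences satisfying (\ref{eq_liouv}) all enter. The Diophantine case is noticeably cleaner because (\ref{eq_ak}) lets one take $\phi \equiv \phi_0$ constant, so that the product of rotations is literally $R_{q_n \phi_0}$, and the entire burden reduces to tracking the accumulated conjugation drift.

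Once both contributions are shown to be summable (or pointwise controlled) over $n$, Borel--Cantelli gives $|\limsup_n F_n^{(l)}| = 0$: a.e.\ $E \in \mathcal{R}_\alpha^{(l)}$ satisfies $|a_{q_n,0}(E)| \leq 2 - e^{-c_l q_n}$ for all but finitely many $n$, and hence belongs to $\liminf_n S_-(p_n/q_n)$ by the opening reduction. A countable union over $l \in \mathbb{N}$ then yields $\mathcal{R}_\alpha \subseteq \liminf_n S_-(p_n/q_n)$ modulo null sets, completing the proof.
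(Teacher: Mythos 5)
Your skeleton --- generalized Chambers plus a split of the bad set into an overshoot part and a boundary layer near $\pm 2$, each killed by Borel--Cantelli --- is exactly the paper's three-case strategy from Sec.~\ref{sec_outline}, and your treatment of the boundary layer via the level-set estimates of Sec.~\ref{sec_polyn} is essentially Proposition~\ref{prop_case3}. Two small repairs there: the layer must be taken two-sided, $\bigl|\abs{a_{q_n,0}(E)}-2\bigr| \leq \mathrm{e}^{-c_l q_n}$, so that the remaining overshoot set is genuinely disjoint from $S_+(p_n/q_n)$ by (\ref{eq_chambers}); and the estimate of Theorem~\ref{thm_sublevel_rII} is applied not to $a_{q_n,0}$ directly (whose extremal values are only controlled on $\mathcal{R}_\alpha^{(l)}$) but, after transferring by Chambers, to $t_{p_n/q_n}(\theta_0,\cdot)$ at one fixed phase, where Fact~\ref{fact_floquet}(iii) gives $\zeta \geq 2$ unconditionally.

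The genuine gap is in the overshoot part, and your claim that the Diophantine case is the ``cleaner'' one has it backwards. The accumulated phase drift in your conjugation argument over one period is $\sum_{j<q_n} j\abs{\alpha - p_n/q_n} \asymp q_n^2\abs{\alpha - p_n/q_n} \asymp q_n/q_{n+1}$ by (\ref{eq_cf1}); for bounded-type (hence Diophantine) $\alpha$ this does not tend to $0$, so the $q_n$-step product of the rational cocycle is not close to $N(\theta)R_{q_n\phi_0}N(\theta)^{-1}$ and you cannot conclude $\abs{t_{p_n/q_n}(\theta,E)} \leq 2 + o(1)$, let alone $\leq 2+\mathrm{e}^{-c_l q_n}$. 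Even where the drift is small, it only produces a polynomially thin shell around $\pm 2$ whose level-set measure, of order $(q_n/q_{n+1})^{1/2}$, need not be summable. What actually closes Case 2 (Proposition~\ref{prop_case2}) is Aubry duality: reducibility to a \emph{constant} rotation on a strip of width $1/l$ means, by Lemma~\ref{lemma_dual}, that $E$ is an exponentially localized eigenvalue of the dual operator $\hat{H}_{\alpha,\theta}$, and Lemma~\ref{lem_conts+} then upgrades the $1/2$-H\"older continuity of $S_+$ to \emph{Lipschitz} continuity of $\mathrm{dist}(E,S_+(\cdot))$ in the frequency at such energies. Since the overshoot set avoids $S_+(p_n/q_n)$, a union of $q_n$ closed bands, its measure is at most $2M\Gamma q_n\abs{\alpha - p_n/q_n} \leq 2M\Gamma/q_{n+1}$, which is summable by (\ref{eq_cf2}). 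The non-Diophantine case is the easy one: there the subsequence (\ref{eq_liouv}) makes plain $1/2$-H\"older continuity summable, and no duality is needed.
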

\begin{remark}
\begin{itemize}
\item[(i)] 
By (\ref{eq_redpart}), it suffices to establish that $\forall l \in \mathbb{N}$,
\begin{equation}
\mathcal{R}_\alpha^{(l)} \subseteq \liminf_{\frac{p_n}{q_n} \to \alpha} S_{-}\left(\frac{p_n}{q_n}\right) ~\mbox{.}
\end{equation}
\item[(ii)] In the proof of continuity of $S_-$, it is in fact only the arguments of Sec. \ref{sec_dualtiy} that will discriminate between Diophantine and non-Diophantine $\alpha$. As we will see, the Diophantine case requires more work, which will be based on reducibility to constant rotations. 
\end{itemize}
\end{remark}

The proof of Proposition (\ref{prop_chambers}) is based on the key ingredient of Avila's global theory of one-frequency operators, more specifically on his result stating that $L(\alpha, A_\epsilon^E)$ is a piece-wise linear, convex function with right derivatives in $2 \pi \mathbb{Z}$ (``quantization of acceleration'') \cite{B} \footnote{Based on a first preprint of our article, an alternative proof, not using quantization of acceleration, was pointed out to us by Qi Zhou. It is based on a perturbative argument showing that on $\mathcal{R}_\alpha^{(l)}$, the $q$-step transfer matrices do not grow exponentially in $q$. \label{footnote}}

The following Lemma is a more detailed version of quantization of acceleration, which is implied from Avila's proof in \cite{B}. For the reader's convenience, we present a full proof in Appendix \ref{app_2}, also supplying some more details of Avila's original argument. As standard, given a convex function $f: (a,b) \to \mathbb{R}$, we let $D_+(f)$ denote its right derivative.
\begin{lemma} \label{lemma_avila}
Let $\delta >0$ such that $v(\theta)$ extends holomorphically to a neighborhood of $\abs{\im(z)} \leq \delta$. Given an irrational $\alpha \in \mathbb{T}$ and any $0<\delta_1<\delta$, there exists $K \in \mathbb{N}\cup\{0\}$, $N \in \mathbb{N}$,  and $0<c$ such that $\forall ~E \in \Sigma(\alpha)$:
\begin{equation} \label{eq_chambers3}
\left\vert t_{p/q}(\theta+i\epsilon) - \sum_{\abs{k} \leq K} a_{q,k} \mathrm{e}^{2 \pi i k q (\theta + i \epsilon)} \right\vert \leq \mathrm{e}^{- c q } ~\mbox{,}
\end{equation}
uniformly on $\mathbb{T} \times [-\delta_1, \delta_1]$ and 
\begin{equation} \label{eq_chambers1}
L(\alpha, A_\epsilon^E) = L(p/q, A_\epsilon^E) + o(1) = \frac{1}{q} \log_+ \left( \max_{0 \leq \abs{k} \leq K} \abs{a_{q,k}} \mathrm{e}^{-2\pi \epsilon k q} \right) + o(1) ~\mbox{,}
\end{equation}
uniformly over $0 \leq \abs{\epsilon} \leq \delta_1$, whenever $q > N$, $p/q \in \mathbb{Q}$ with $(p,q)=1$. In particular, the right derivatives of $L(\alpha, A_\epsilon^E)$ w.r.t. $\epsilon$ satisfy
\begin{equation} \label{eq_quantaccel}
D_+(L(\alpha, A_\epsilon^E)) \in 2\pi \{-K,\dots, K\} ~\mbox{.}
\end{equation}
\end{lemma}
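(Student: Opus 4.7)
The plan is to follow Avila's argument from \cite{B}: (a) bound the Fourier coefficients $a_{q,k}(E)$ uniformly in $E\in\Sigma(\alpha)$ by Cauchy estimates, (b) reduce $t_{p/q}$ to a Laurent polynomial of bounded degree with exponentially small error, (c) apply Jensen's formula to obtain the closed form for $L(p/q,A_\epsilon^E)$, and (d) pass to the irrational limit via continuity of the Lyapunov exponent in the frequency.

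For (a)--(b) and (\ref{eq_chambers3}): since $v$ extends holomorphically to $\abs{\im z}\le\delta$ and $\Sigma(\alpha)\subset\mathbb{R}$ is compact, there is $C<\infty$ with $\norm{A^E_\epsilon(\theta)}\le C$ on $\mathbb{T}\times[-\delta,\delta]\times\Sigma(\alpha)$. Hence $\abs{t_{p/q}(\theta+i\epsilon,E)}\le 2C^q$ and Cauchy's inequality yields $\abs{a_{q,k}(E)}\le 2C^q e^{-2\pi\abs{k}q\delta}$. Choosing $K\in\mathbb{N}$ with $2\pi K(\delta-\delta_1)>\log C+1$, the Fourier tail $\abs{k}>K$ sums to $O(e^{-cq})$ uniformly in $(\theta,\epsilon,E)$ for $\abs{\epsilon}\le\delta_1$, proving (\ref{eq_chambers3}) with $K$ independent of $E$.

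For (c): set $r:=e^{-2\pi q\epsilon}$, $z:=e^{2\pi i q(\theta+i\epsilon)}$; then $\theta\mapsto z$ is a $q$-fold covering of $\abs{z}=r$, so $\int_0^1\log\abs{t_{p/q}(\theta+i\epsilon)}\,d\theta=\int_0^1\log\abs{f(re^{2\pi iu})}\,du$ with $f(z)=\sum_k a_{q,k}z^k$. Truncating to $\abs{k}\le K$ (error $O(e^{-cq})$), $z^K f_{\le K}$ becomes a polynomial of degree $\le 2K$. Jensen's formula expresses the truncated integral as a closed-form piecewise-linear function of $\epsilon$ with slopes in $2\pi\{-K,\ldots,K\}$, asymptotically matching $\tfrac{1}{q}\log_+\bigl(\max_{\abs{k}\le K}\abs{a_{q,k}}e^{-2\pi\epsilon k q}\bigr)+o(1)$ by the Newton-polygon comparison (both are piecewise-linear with the same slope structure, and the constant correction is $O(\log K)/q=o(1)$). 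Since $\abs{\log\rho(M)-\log_+\abs{\mathrm{tr}(M)}}$ is uniformly bounded on $\mathrm{SL}(2,\mathbb{C})$, the same asymptotic applies to $L(p/q,A_\epsilon^E)=\tfrac{1}{q}\int\log\rho(M_\epsilon^E(\theta))\,d\theta$, yielding the second equality of (\ref{eq_chambers1}). For (d), continuity of the LE in frequency \cite{C} gives the first equality, and (\ref{eq_quantaccel}) then follows because $\epsilon\mapsto L(\alpha,A_\epsilon^E)$ is convex and a limit of convex piecewise-linear functions with slopes in $2\pi\{-K,\ldots,K\}$.

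The main obstacle is to make the replacement of $\log\rho(M_\epsilon^E)$ by $\log_+\abs{t_{p/q}}$ quantitative uniformly in $E\in\Sigma(\alpha)$: the pointwise $O(1)$ error must integrate to $o(q)$, which is subtle where $\abs{t_{p/q}}<2$ (possibly on a large $\theta$-set when $E$ lies in a band of the periodic approximant). Following Avila, one bypasses this pointwise comparison by working with the subharmonic function $\epsilon\mapsto\tfrac{1}{q}\int\log\norm{M_\epsilon^E(\theta)}\,d\theta$, whose piecewise-linear convex structure follows from Jensen applied to the dominant Fourier mode and whose convergence to $L(\alpha,A_\epsilon^E)$ is controlled by standard subharmonic-convergence theorems. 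This is the crucial technical ingredient and is where the uniformity in $E$ is secured.
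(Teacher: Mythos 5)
Your steps (a)--(b) (Cauchy estimates on the $a_{q,k}$, choice of $K$ depending only on $\norm{v}_\delta$ and $\delta-\delta_1$, exponential tail bound giving (\ref{eq_chambers3})) coincide with the paper's argument. Where you genuinely diverge is in the key lemma evaluating the phase-averaged logarithm of the truncated discriminant: the paper restricts the integral to $\{\abs{t_{p/q}}\geq 2\}$ (where the truncation $P_q$ satisfies $\abs{P_q}\geq 1$, so the exponentially small truncation error is harmless inside the logarithm) and then bounds $\int\log\abs{P_q/M_q}$ from below via a Remez/Cartan-type inequality for the level sets of the normalized trigonometric polynomial $Q_q=P_q/M_q$; you instead integrate over the whole circle and invoke Jensen's formula together with the Mahler-measure/coefficient comparison (the ``Newton-polygon'' step). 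Your route is legitimate and arguably cleaner, provided you (i) record that $\log_+\abs{f}\geq\log\abs{f}$ gives the lower bound and the sup-norm bound $\max\abs{f}\leq(2K+1)M_q$ gives the upper bound, so that $\int\log_+\abs{f}=\log_+M_q+O(K)$ with $O(K)$ rather than $O(\log K)$ (either way $o(q)$ since $K$ is fixed), and (ii) justify replacing $t_{p/q}$ by its truncation inside the logarithm, which is delicate near zeros of $f$ --- the paper's restriction to $\{\abs{t}\geq2\}$ is precisely the device that makes this trivial, and your $\log_+$ formulation inherits the same protection if you say so. This Jensen route also automatically handles the degenerate case $M_q\to0$, which the paper must treat separately (its Lemma \ref{lem_qac2}).

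The final paragraph of your proposal, however, misidentifies the difficulty and proposes an unsound fix. The replacement of $\log\rho(M^E_\epsilon(\theta))$ by $\log_+\abs{t_{p/q}(\theta+i\epsilon)}$ carries a pointwise error bounded by an absolute constant (as you yourself note via the $SL(2,\mathbb{C})$ inequality, which is exactly the paper's Claim A.1), and $\mathbb{T}$ has measure one; the error therefore integrates to $O(1)$, hence contributes $O(1/q)=o(1)$ after normalization, regardless of how large the set $\{\abs{t_{p/q}}<2\}$ is. There is nothing subtle here, and no need to ``bypass'' it. Worse, the proposed bypass is incorrect as stated: $\frac{1}{q}\int_{\mathbb{T}}\log\norm{M^E_\epsilon(\theta)}\,\ud\theta$ is only the first term of the subadditive sequence defining $L(p/q,A^E_\epsilon)$, not the periodic Lyapunov exponent $\frac{1}{q}\int\log\rho$ itself, and $\log\norm{M^E_\epsilon(\theta)}$ is not the log-modulus of a function holomorphic in $\theta$, so neither Jensen's formula nor an appeal to ``standard subharmonic-convergence theorems'' applies to it in the way you suggest. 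Delete that paragraph and rely on your step (c) as written; the argument is then complete modulo the bookkeeping noted above.
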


\begin{proof}[Proof of Proposition \ref{prop_chambers}]
Let $E \in \mathcal{R}_\alpha^{(l)}$. Referring to (\ref{eq_chambers1}), we set
\begin{equation} \label{eq_chambers2}
M_q(\epsilon) := \max_{0 \leq \abs{k} \leq K} \abs{a_{q,k}} \mathrm{e}^{-2\pi \epsilon k q} =: \abs{a_{q, k_q(\epsilon)}} \mathrm{e}^{-2\pi \epsilon k_q(\epsilon) q} ~\mbox{.}
\end{equation}

Since $A^E(\theta) \in SL(2,\mathbb{R})$, for all $\theta \in \mathbb{T}$, we have $\abs{a_{q,k}} = \abs{a_{q,-k}}$, $k \in \mathbb{Z}$. Thus,
\begin{equation}
M_q(\epsilon) = \max_{0 \leq k \leq K} \abs{a_{q,k}} \mathrm{e}^{2 \pi \abs{\epsilon} k q} = \abs{a_{q,k_q(\epsilon)}} \mathrm{e}^{2 \pi \abs{\epsilon k_q(\epsilon)} q} ~\mbox{,}
\end{equation}
which makes $M_q(\epsilon)$ a symmetric function in $\epsilon$ and shows $\mathrm{sgn}(\epsilon k_q(\epsilon)) \leq 0$.

$\frac{1}{q} \log_+ M_q(\epsilon)$ is a convex function in $\epsilon$, which, by Lemma \ref{lemma_avila}, is uniformly close  to $L(\alpha, A_\epsilon^E)$ on $\abs{\epsilon} \leq \frac{3}{4 l}$ . Thus, differentiability of $L(\alpha, A_\epsilon^E)$ in a neighborhood of $[-\frac{1}{2l}, \frac{1}{2l}]$ implies
\begin{equation}
D_+(\frac{1}{q} \log_+ M_q(\epsilon)) \to D_+(L(\alpha, A_\epsilon^E)) = 0~\mbox{,}
\end{equation}
uniformly in $\epsilon$ on $[-\frac{1}{2l}, \frac{1}{2l}]$ as $\frac{p}{q} \to \alpha$.

Since $\frac{1}{q} \log_+ M_q(\epsilon)$ is piecewise linear with right derivatives in $2 \pi \{-K, \dots, K\}$,  $N$ can be made sufficiently large such that
\begin{equation} \label{eq_deponE}
\frac{1}{q} \log_+ M_q(\epsilon) = \mathrm{const} = : b_q ~\mbox{, } \epsilon \in [-\frac{1}{2l}, \frac{1}{2l}] ~\mbox{,} 
\end{equation}
whenever $q>N$, $p/q \in \mathbb{Q}$ with $(p,q)=1$, and $b_q = o(1)$.

In particular, for suitably chosen $N$, one concludes for $0 \leq k \leq K$,
\begin{equation}
\left\vert a_{q,k} \right\vert \mathrm{e}^{\pi k q/ l}  \leq M_q(\delta) \leq \mathrm{e}^{\frac{1}{2l} \pi q} ~\mbox{,}
\end{equation} 
\begin{equation}
\abs{a_{q,k}} \leq \mathrm{e}^{-\frac{1}{2 l } \pi q} ~\mbox{,} ~1\leq k \leq K ~\mbox{,}
\end{equation}
for all $q>N$.

In summary, on $\mathcal{R}_\alpha^{(l)}$, one may take $K=0$ in (\ref{eq_chambers3}) and (\ref{eq_chambers1}) whenever $q>N$. This implies the claim of the Proposition with $c_l = \frac{1}{2 l } \pi$ .
\end{proof}

We mention that it is through the limit implying (\ref{eq_deponE}), that $N$ in Proposition \ref{prop_chambers} depends on $E$, even though it is derived from Lemma \ref{lemma_avila}, where the respective $N$ is uniform over $\Sigma(\alpha)$. 

\section{Outline of the argument - A tale of three cases} \label{sec_outline}

To begin with, we recall the following well-known facts from Floquet theory (see e.g. \cite{Toda, O})
\begin{fact} \label{fact_floquet}
Let $\frac{p}{q} \in \mathbb{Q}$, $(p,q)=1$. For any $\theta \in \mathbb{T}$ one has:
\begin{itemize}
\item[(i)] $t_\frac{p}{q}(\theta,E)$ is a monic polynomial in $E$ of degree $q$.
\item[(ii)] $t_\frac{p}{q}(\theta,E)$ splits over $\mathbb{R}$ with $q$ distinct roots.
\item[(iii)] $t_\frac{p}{q}(\theta,E)$ is $\geq 2$ at all its local maxima and $\leq -2$ at all its local minima.
\item[(iv)] $\sigma(\frac{p}{q},\theta)=t_\frac{p}{q}(\theta,.)^{-1}([-2,2])$ consists of $q$ bands and is purely ac.
\end{itemize}
\end{fact}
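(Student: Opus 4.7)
To handle (i), I will use a direct induction; for (ii)--(iv) I will set up a unified Floquet--Bloch framework around the monodromy matrix $M_\theta(E) := \prod_{j=q-1}^{0} A^E(jp/q + \theta) \in SL(2,\mathbb{R})$, whose trace is $t_{p/q}(\theta,E)$. For (i), let $\psi_n^{(1)}, \psi_n^{(2)}$ denote the canonical solutions of $H_{p/q,\theta}\psi = E\psi$ with initial data $(1,0)$ and $(0,1)$. A simple induction on the recurrence $\psi_{n+1} = (E - v(np/q+\theta))\psi_n - \psi_{n-1}$ shows that $\psi_n^{(1)}$ is monic of degree $n$ in $E$ and $\psi_n^{(2)}$ has degree $n-1$. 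The entries of $M_\theta(E)$ are linear combinations of these, so its diagonal consists of a monic polynomial of degree $q$ plus one of degree $q-2$; taking the trace yields (i).

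For the remaining parts, since $\det M_\theta \equiv 1$, the eigenvalues $\mu_\pm(E)$ of $M_\theta(E)$ satisfy $\mu^2 - t_{p/q}(\theta,E)\mu + 1 = 0$, hence $|\mu_\pm|=1$ exactly when $|t|\leq 2$. Standard Floquet--Bloch theory then identifies $\sigma(p/q,\theta)=t_{p/q}(\theta,\cdot)^{-1}([-2,2])$: outside this set one has exponentially decaying/growing solutions, while inside one may write $\mu_\pm = \mathrm{e}^{\pm i q k(E)}$ for a quasi-momentum $k(E) \in [0,\pi/q]$, yielding bounded Bloch solutions. The essential additional input is that on each open connected component of $\{|t|<2\}$, the dispersion $E\mapsto k(E)$ is strictly monotonic; equivalently, $dE/dk$ vanishes nowhere on $(0,\pi/q)$. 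This follows from a Hellmann--Feynman-type identity expressing $dE/dk$ as a positive multiple of the Wronskian of two Floquet solutions, which is nonzero whenever $\mu_+\neq\mu_-$.

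Granting strict monotonicity, the rest is bookkeeping. Since $t_{p/q}(\theta,\cdot)$ is a monic degree-$q$ polynomial with $|t|\to\infty$ as $|E|\to\infty$ and is strictly monotonic on each component of $\{|t|<2\}$, the preimage $t^{-1}([-2,2])$ decomposes into exactly $q$ closed intervals (possibly touching at endpoints), each of which $t$ maps monotonically from $\pm 2$ to $\mp 2$. The intermediate value theorem then yields $q$ distinct real zeros of $t$, giving (ii). For (iii), any local extremum $E_0$ with $|t(E_0)|<2$ would lie in the open interior of a band, contradicting monotonicity there; a local maximum with $t(E_0)=-2$ (or local minimum with $t(E_0)=2$) would force the surrounding band to collapse to a point, contradicting surjectivity of $t$ from each band onto $[-2,2]$. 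Hence local maxima satisfy $t\geq 2$ and local minima satisfy $t\leq -2$. The pure absolute continuity in (iv) follows from the direct integral decomposition $H_{p/q,\theta}\cong\int_{[0,2\pi/q)}^\oplus \widehat H_k(\theta)\,dk$ into $q\times q$ matrices whose real analytic, non-constant eigenvalue branches push Lebesgue measure forward to an absolutely continuous spectral measure.

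The main obstacle is securing the strict monotonicity of the band function $E(k)$. The identity $dE/dk\neq 0$ is classical in the generic regime, but care is required at touching-band points where $M_\theta(E)=\pm I$ and the two Floquet solutions coincide; here one appeals to the real analyticity of the Bloch eigenvalue branches together with the degree bound on $t$ to rule out constant branches. Once this monotonicity is secured, (ii)--(iv) become elementary consequences of degree counting and the intermediate value theorem.
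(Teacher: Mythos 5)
The paper offers no proof of this Fact: it is quoted as standard Floquet theory with a citation to Toda and Teschl, so there is no in-paper argument to compare yours against. Your route is the classical textbook one, and most of it is sound: (i) by induction on the entries of the monodromy matrix, the identification $\sigma(p/q,\theta)=t_{p/q}(\theta,\cdot)^{-1}([-2,2])$ via the Floquet multipliers, the key lemma that $t'(E)\neq 0$ whenever $|t(E)|<2$ (your Hellmann--Feynman/Wronskian identity is exactly the standard way to get this, and non-constancy of the analytic eigenvalue branches is automatic since $t(E_j(k))=2\cos(2\pi qk)$ with $t$ a non-constant polynomial), and the direct-integral decomposition for the pure absolute continuity in (iv).

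The genuine gap is the band count on which (ii), and the surjectivity claim you use for (iii), both rest. From ``monic of degree $q$'' plus ``strictly monotone on each component of $\{|t|<2\}$'' you may only conclude that $t^{-1}([-2,2])$ consists of \emph{at most} $q$ bands: each band contains exactly one simple zero of $t$, and $t$ has at most $q$ zeros. Asserting ``exactly $q$'' at this point is circular, because exactly $q$ bands is equivalent to $t$ having $q$ distinct real zeros, which is precisely statement (ii); and the obstruction is not polynomial but spectral (a monic degree-$q$ polynomial with $t'\neq 0$ on $\{|t|<2\}$ can have empty preimage of $[-2,2]$, e.g.\ $E^2+10$). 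The missing input is the one you already deploy for (iv): writing $H_{p/q,\theta}$ as a direct integral of $q\times q$ Hermitian Bloch matrices $\widehat H_\kappa(\theta)$ (boundary condition $\psi_{n+q}=\mathrm{e}^{i\kappa}\psi_n$), one has the identity $\det\bigl(E-\widehat H_\kappa(\theta)\bigr)=t_{p/q}(\theta,E)-2\cos\kappa$, so for \emph{every} $c\in[-2,2]$ the equation $t=c$ has $q$ real roots counted with multiplicity. Taking $c=0$ and invoking $t'\neq 0$ on $\{|t|<2\}$ yields (ii) immediately; taking general $c$ yields the decomposition into exactly $q$ bands each mapped onto $[-2,2]$, which is what your argument for (iii) needs in order to exclude local maxima on the level $-2$ and local minima on the level $+2$. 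With that identity inserted, the rest of your proof goes through.
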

By (\ref{eq_chambers}) it is clear that properties (i) and (ii) are inherited by the phase-average of the discriminant, $a_{q,0}(E)$.

Fix $l \in \mathbb{N}$. Given $E \in \mathcal{R}_\alpha^{(l)}$, we distinguish between three cases:
\begin{description}
\item[Case 1] Eventually,
\begin{equation}
-2 + \mathrm{e}^{ - c_l q_n} < a_{q_n,0}(E) < 2 - \mathrm{e}^{ - c_l q_n} ~\mbox{.}
\end{equation}

\item[Case 2] Infinitely often (i.o.) in $n$,
\begin{equation}
\left\vert a_{q_n,0}(E) \right\vert >  2 + \mathrm{e}^{- c_l q_n} ~\mbox{.}
\end{equation}

\item[Case 3] $\abs{a_{q_n,0}(E) \pm 2} \leq \mathrm{e}^{- c_l q_n}$, i.o. in $n$ .
\end{description}

Trivially,
\begin{equation}
\{E \in \mathcal{R}_\alpha^{(l)}: E ~\mbox{satisfies Case 1}\} \subseteq \liminf_{\frac{p_n}{q_n} \to \alpha} S_-\left( \frac{p_n}{q_n} \right) ~\mbox{.}
\end{equation}
On the other hand, it is also clear that 
\begin{equation}
\{E \in \mathcal{R}_\alpha^{(l)}: E ~\mbox{satisfies Case 2}\} \cap \liminf_{\frac{p_n}{q_n} \to \alpha} S_+\left( \frac{p_n}{q_n} \right) = \emptyset ~\mbox{.}
\end{equation}

The remainder of the paper will thus be devoted to showing that for all $l \in \mathbb{N}$,
\begin{equation} \label{eq_casestoprove}
\left \vert \left\{ E \in \mathcal{R}_\alpha^{(l)}: E ~\mbox{satisfies Case 2 or 3} \right\} \right \vert = 0 ~\mbox{,}
\end{equation}
which by (\ref{eq_redpart}) will prove Theorem \ref{thm_main2}. In the remainder of the paper, we thus let $l \in \mathbb{N}$ be fixed and arbitrary.

\section{Case 2 - Duality} \label{sec_dualtiy}

For non-Diophantine $\alpha$, Case 2 is a straightforward consequence of $1/2$-H\"older continuity of $S_+$ in Hausdorff metric \cite{A}. In the Diophantine case this degree of regularity is insufficient (in fact, $\gamma$-H\"older continuity for any $\gamma > 1/2$ would suffice). The purpose of this section is thus to improve on the degree of regularity of $S_+$ in Hausdorff metric for Diophantine $\alpha$. The easy argument for the non-Diophantine case is given in the end of this section.

We claim,
\begin{prop} \label{prop_case2}
For irrational $\alpha$ and Lebesgue a.e. $E \in \mathcal{R}_\alpha^{(l)}$,  one has
\begin{equation}
\abs{a_{q_n,0}(E)} \leq 2 + \mathrm{e}^{- c_l q_n} ~\mbox{, eventually.}
\end{equation}
\end{prop}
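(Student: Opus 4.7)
I would set $B_n := \{E \in \mathcal{R}_\alpha^{(l)} : |a_{q_n,0}(E)| > 2 + \mathrm{e}^{-c_l q_n}\}$ and aim to show $\sum_n |B_n| < \infty$, so that the first Borel--Cantelli lemma yields $|\limsup_n B_n| = 0$. By Proposition~\ref{prop_chambers}, $E \in B_n$ forces $|t_{p_n/q_n}(\theta, E)| > 2$ for every $\theta$, whence $B_n \subseteq \Sigma(\alpha) \setminus S_+(p_n/q_n)$: it therefore suffices to bound the measure of the ``missing spectrum'' along the approximation.

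In the non-Diophantine case the fixed subsequence of convergents satisfies $q_{n+1} > q_n^r$ for some $r > 1$ by (\ref{eq_liouv}). Combining (\ref{eq_cf1}) with $1/2$-H\"older continuity of $\beta \mapsto S_+(\beta)$ in Hausdorff metric \cite{A},
\[
d_H\bigl(S_+(\alpha), S_+(p_n/q_n)\bigr) \leq C|\alpha - p_n/q_n|^{1/2} \leq C(q_n q_{n+1})^{-1/2}.
\]
Since $S_+(p_n/q_n)$ comprises $q_n$ bands (Fact~\ref{fact_floquet}), the $d_H$-neighborhood of $S_+(p_n/q_n)$ exceeds $|S_+(p_n/q_n)|$ in measure by at most $2q_n \cdot d_H$, giving
\[
|B_n| \leq 2q_n \cdot C(q_n q_{n+1})^{-1/2} \leq 2C q_n^{(1-r)/2},
\]
which is geometrically summable by (\ref{eq_cf2}).

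The Diophantine case is more delicate since then $q_{n+1} \leq (2\pi/\kappa) q_n^r$ by (\ref{eq_cf3}), so the above bound alone is not summable. Here I would exploit reducibility to \emph{constant} rotations, which (\ref{eq_defral}) provides on $\mathcal{R}_\alpha^{(l)}$: write $A^E(\theta) = N_E(\theta + \alpha) R_{\phi_0(E)} N_E(\theta)^{-1}$ with $N_E$ analytic on $|\im z| \leq 1/l$. Setting
\[
L_E(\theta) := N_E(\theta + p_n/q_n)^{-1} A^E(\theta) N_E(\theta) = \bigl[I + O(|\alpha - p_n/q_n|)\bigr] R_{\phi_0(E)},
\]
with the $O$-constant controlled by $l\|N_E\|_{1/l}\|N_E^{-1}\|_{1/l}$ through Cauchy's estimate, the $q_n$-step product telescopes thanks to $1$-periodicity of $N_E$ together with $q_n(p_n/q_n) = p_n \in \mathbb{Z}$:
\[
\prod_{j=q_n-1}^{0} A^E(\theta + j p_n/q_n) = N_E(\theta)\bigl[R_{\phi_0(E)}^{q_n} + O(q_n|\alpha - p_n/q_n|)\bigr] N_E(\theta)^{-1}.
\]
Taking traces, integrating in $\theta$, and using $q_n|\alpha - p_n/q_n| \leq 1/q_{n+1}$, I get
\[
\bigl|a_{q_n,0}(E) - 2\cos(2\pi q_n \phi_0(E))\bigr| \leq C_E/q_{n+1}.
\]

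Thus $E \in B_n$ forces $|\cos(2\pi q_n \phi_0(E))| \geq 1 - C'_E/q_{n+1}$, pinning $\phi_0(E)$ within distance $O(1/(q_n \sqrt{q_{n+1}}))$ of some $k/(2q_n)$, $k \in \{0, \dots, 2q_n - 1\}$. I would exhaust $\mathcal{R}_\alpha^{(l)}$ by measurable sub-level sets on which $\|N_E\|_{1/l}, \|N_E^{-1}\|_{1/l}$ (hence $C_E$) are uniformly bounded, and invoke local bi-Lipschitz regularity of $E \mapsto \phi_0(E)$ on those sets (which in the Diophantine-reducible regime follows from the standard identification of $\phi_0$ with the rotation number/IDS, cf.\ \cite{AK}) to conclude $|B_n \cap (\text{sub-level set})| = O(1/\sqrt{q_{n+1}})$, again geometrically summable by (\ref{eq_cf2}); a union-over-sub-level-sets argument together with Borel--Cantelli then yields $|\limsup_n B_n| = 0$. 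The main obstacle is entirely in the Diophantine case, namely obtaining a \emph{uniform} perturbative bound (resolved by slicing $\mathcal{R}_\alpha^{(l)}$ into measurable sub-level sets) and importing the regularity of $\phi_0$ needed to translate ``small $\phi_0$-window'' into ``small $E$-set'' from the dynamical theory of reducible cocycles.
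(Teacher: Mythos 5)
Your non-Diophantine argument coincides with the paper's: the same combination of $1/2$-H\"older continuity of $S_+$ in the Hausdorff metric, the band count $q_n$, and (\ref{eq_liouv}), giving the summable bound $\abs{B_n}\lesssim q_n^{(1-r)/2}$. Your Diophantine argument is genuinely different. The paper converts reducibility to constant rotations into exponential localization for the \emph{dual} operator (Lemma \ref{lemma_dual}), slices $\mathcal{R}_\alpha^{(l)}$ by the localization constant $M$, and uses the localized eigenvector as a trial function at the rational frequency (Lemma \ref{lem_conts+}) to get $\mathrm{dist}(E,S_+(p_n/q_n))\leq M\Gamma\abs{\alpha-p_n/q_n}$; band counting then yields $\abs{\Omega_n}\lesssim 1/q_{n+1}$, summable by (\ref{eq_cf2}). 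You instead telescope the conjugacy directly to obtain $a_{q_n,0}(E)=2\cos(2\pi q_n\phi_0(E))+O(C_E/q_{n+1})$ and then estimate the measure of the set of $E$ for which $\phi_0(E)$ is pinned near $\tfrac{1}{2q_n}\mathbb{Z}$. The telescoping computation is sound (it is essentially the perturbative alternative to Proposition \ref{prop_chambers} attributed to Qi Zhou in the footnote), and your slicing by sub-level sets of $\norm{N_E}_{1/l}$ plays the same role as the paper's slicing by $M$.

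The one step you have not supplied is the transfer from a small set of $\phi_0$-values to a small set of energies, and it is not a formality. What you need is specifically the \emph{lower} Lipschitz bound $\abs{\phi_0(E_1)-\phi_0(E_2)}\geq c\abs{E_1-E_2}$ on each slice (plus a covering argument to pass to measures), i.e.\ a quantitative lower bound on the derivative of the fibered rotation number at reducible energies, of the type $\partial\rho/\partial E\gtrsim\norm{N_E}^{-2}$. Such a bound is known for cocycles conjugate to constant rotations, but it is a separate nontrivial input (the IDS is in general only $1/2$-H\"older, so ``bi-Lipschitz'' cannot be invoked casually and is not what \cite{AK} is usually cited for). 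Moreover $\phi_0$ is only determined up to $\pm\rho+k\alpha\ (\mathrm{mod}\ 1)$ with $k=k(E)\in\mathbb{Z}$ depending on the conjugacy, so you must also bound the degree $k$ on each slice and check that the resulting shift $2kq_n\alpha\equiv O(k/q_{n+1})\ (\mathrm{mod}\ 1)$ does not spoil the pinning of $2q_n\phi_0$ near $\mathbb{Z}$ (it does not, but this must be said). These are precisely the technicalities the duality route avoids: dual localization packages the quantitative content of reducibility into a single Lipschitz estimate in the frequency, with no reference to the rotation number. As written, your Diophantine case is a viable but incomplete alternative; with the rotation-number lower bound properly imported and the degree ambiguity handled, it would close.
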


For Diophantine $\alpha$, the proof of Proposition \ref{prop_case2} is based on duality. For $\beta \in \mathbb{T}$, not necessarily irrational, consider the family of operators $\{ H_{\beta, \theta}\}_{\theta \in \mathbb{T}}$ defined in (\ref{eq_defn_op}). We associate its dual, $\{ \hat{H}_{\beta, \xi} \}_{\xi \in \mathbb{T}}$,
\begin{equation} \label{eq_defdual}
\hat{H}_{\beta, \xi} : \mathit{l}^2(\mathbb{Z}) \to \mathit{l}^2(\mathbb{Z}) ~\mbox{,} ~ (\hat{H}_{\beta, \xi} \psi)_n:= (\hat{v} * \psi)_n + 2 \cos(\beta n + \xi) \psi_n ~\mbox{,}
\end{equation}
where $\hat{v}:=(\hat{v}_n)_{n \in \mathbb{Z}}$ is the sequence of Fourier coefficients for $v(\theta)$. Denote the spectrum of $\hat{H}_{\beta, \xi}$ by $\hat{\sigma}(\beta, \xi)$. For irrational $\beta$, ergodicity and minimality of $\theta \mapsto (\theta+\beta)(\mathrm{mod} 1)$ imply the analogue of (\ref{eq_spectrum}).

Following, $\hat{\Sigma}(\beta)$ stands for the spectrum of the ergodic operators $\{ \hat{H}_{\beta, \xi} \}_{\xi \in \mathbb{T}}$. 

A fundamental property of duality is invariance of the set $S_+$:
\begin{theorem} \label{thm_duality_inv}
For Schr\"odinger operators given by (\ref{eq_defn_op}) with continuous potential $v$ and any $\beta \in \mathbb{T}$, we have
\begin{equation} \label{eq_S+dual}
S_+(\beta) = \bigcup_{\xi \in \mathbb{T}} \hat{\sigma}(\beta, \xi) =: \hat{S}_+(\beta) ~\mbox{.} 
\end{equation}
\end{theorem}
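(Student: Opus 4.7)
The plan is to realize both $S_+(\beta)$ and $\hat S_+(\beta)$ as spectra of direct-integral operators, and then to exhibit an explicit unitary intertwiner between them. By Proposition \ref{prop_ints+} in Section \ref{sec_genduality}, $S_+(\beta) = \sigma(H_\beta')$ on $\mathfrak{H} := L^2(\mathbb{T}) \otimes \ell^2(\mathbb{Z})$, where $H_\beta' = \int_\mathbb{T}^\oplus H_{\beta,\theta}\, \ud \theta$. The same argument applies verbatim to the dual family $\{\hat H_{\beta,\xi}\}_{\xi\in\mathbb{T}}$, since the decomposition in (\ref{eq_defdual}) makes manifest that $\xi \mapsto \hat H_{\beta,\xi}$ is norm-continuous. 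Hence $\hat S_+(\beta) = \sigma(\hat H_\beta')$ with $\hat H_\beta' := \int_\mathbb{T}^\oplus \hat H_{\beta,\xi}\, \ud\xi$, and the theorem reduces to producing a unitary $U: \mathfrak{H} \to \mathfrak{H}$ with $U H_\beta' U^{*} = \hat H_\beta'$.

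For the construction of $U$, I would use a ``twisted Fourier transform'' combining partial Fourier expansion in both the continuous phase $\theta$ and the discrete site $n$, tied together by a gauge phase that encodes the shift cocycle. Concretely, identifying elements of $\mathfrak{H}$ with functions $\psi(\theta, n)$ on $\mathbb{T}\times\mathbb{Z}$, I would set
\[
(U\psi)(\xi, m) \;:=\; \sum_{n\in\mathbb{Z}} \int_\mathbb{T} \mathrm{e}^{-2\pi i m\theta}\, \mathrm{e}^{-2\pi i n\xi}\, \mathrm{e}^{-2\pi i mn\beta}\, \psi(\theta, n)\, \ud\theta.
\]
Unitarity follows from two successive applications of Plancherel/Parseval (first in $\theta$, then in $n$), the unimodular factor $\mathrm{e}^{-2\pi i mn\beta}$ being harmless in $L^2$-norm.

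The intertwining relation is verified by handling the kinetic and potential pieces of $H_\beta'$ separately. A translate $\psi(\theta,n)\mapsto \psi(\theta, n\pm 1)$ produces, after re-indexing in the definition of $U$, the phase factor $\mathrm{e}^{\pm 2\pi i(\xi + m\beta)}$, so the discrete Laplacian is conjugated into multiplication by $2\cos(2\pi(\beta m + \xi))$—precisely the diagonal part of $\hat H_{\beta,\xi}$ in the $m$-variable. Dually, expanding $v(\beta n + \theta) = \sum_k \hat v_k\, \mathrm{e}^{2\pi i k(\beta n + \theta)}$ and tracking each Fourier term under $U$, the gauge factor $\mathrm{e}^{-2\pi i mn\beta}$ is exactly what is needed to absorb the $\mathrm{e}^{2\pi i k\beta n}$ piece and to collapse the $\theta$-integral into the Kronecker symbol producing a pure lattice shift $m\mapsto m-k$; the net effect is that multiplication by $v$ is conjugated into the convolution $(\hat v \ast\, \cdot\,)_m$, independent of $\xi$. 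Summing the two contributions yields $UH_\beta'U^{*} = \hat H_\beta'$, and equality of spectra follows.

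The main obstacle I anticipate is strictly phase-convention bookkeeping: the gauge factor $\mathrm{e}^{-2\pi i mn\beta}$ must have precisely the right sign and form to simultaneously convert the $n$-shifts into the dual multiplication $2\cos(2\pi(\beta m + \xi))$ with the correct argument and to convert multiplication by $v$ into a pure $m$-convolution with no residual $\xi$-dependent cross-term; a miscalibrated sign leaves a spurious magnetic twist entangling the two variables. Continuity of $v$ is used only to keep all operators bounded (via $\|\hat v \ast\,\cdot\,\|_{\mathcal B(\ell^2)} = \|v\|_{L^\infty}$), and no irrationality of $\beta$ is needed at any step—consistent with the statement holding for every $\beta \in \mathbb{T}$.
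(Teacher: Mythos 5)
Your proposal is correct and follows essentially the same route as the paper: identify $S_+(\beta)$ and $\hat S_+(\beta)$ with the spectra of the direct-integral operators $H_\beta'$ and $\hat H_\beta'$ via Proposition \ref{prop_ints+}, and then conjugate by the Chulaevsky--Delyon-type twisted Fourier transform, which is exactly the unitary $\mathcal{U}$ of (\ref{eq_unit}) up to an overall sign in the exponentials (your $U$ is the paper's $\mathcal{U}^{-1}$). The intertwining computation you sketch for the shift and for each Fourier mode of $v$ matches the paper's verification on the generators $T$ and $T_{e_k}$.
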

Invariance of $S_+$ is known explicitly for the almost Mathieu operator \cite{AVS}. We postpone the proof of Theorem \ref{thm_duality_inv} for general continuous potentials to Sec. \ref{sec_genduality}.

Duality maps $\mathcal{R}_\alpha$ to localized states. To make this precise, we introduce the following terminology:
\begin{definition}
Let $H$ be a bounded self-adjoint operator on $\mathit{l}^2(\mathbb{Z})$. Suppose $E$ is an eigenvalue of $H$. Given $0 < C < \infty$ and $\gamma > 0$, we say $E$ is $(C,\gamma)$-localized for $H$, if for some $\psi \in \ker(H-E)\setminus\{0\}$,
\begin{equation}
\abs{\psi_n} \leq C \mathrm{e}^{ - \gamma \abs{n}} ~\mbox{, } \forall n \in \mathbb{Z} ~\mbox{.}
\end{equation}
\end{definition}

\begin{lemma} \label{lemma_dual}
For $\alpha$ irrational, suppose 
\begin{equation} \label{eq_lemdual}
N(\theta+\alpha)^{-1} A^E(\theta) N(\theta) = R_\theta:=\begin{pmatrix} \mathrm{e}^{2 \pi i \phi_0} & 0 \\ 0 & \mathrm{e}^{-2 \pi i \phi_0} \end{pmatrix} ~\mbox{,} ~\theta \in \mathbb{T} ~\mbox{,}
\end{equation}
for some $\phi_0 \in \mathbb{T}$, analytic $N:\mathbb{T} \to SL(2,\mathbb{C})$ with holomorphic extension to a neighborhood of $\abs{\im(z)} \leq \delta$, some $\delta>0$. Then, $E$ is $(\norm{N}_\delta, 2 \pi \delta)$-localized for $\hat{H}_{\alpha, \theta}$.
\end{lemma}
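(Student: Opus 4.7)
The plan is to exploit Aubry duality in the most direct way: the reducibility identity manufactures an analytic Bloch-wave solution of $H_{\alpha,\theta}\psi = E\psi$, whose Fourier coefficients are then, by construction, a normalizable eigenvector of the dual operator with phase $\xi = 2\pi\phi_0$.

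First, I would read off the Bloch-wave solution directly from \eqref{eq_lemdual}. Write the first column of $N(\theta)$ as $(u(\theta),w(\theta))^T$. The relation $A^E(\theta) N(\theta) = N(\theta+\alpha)R_\theta$, spelled out using the explicit form of $A^E$ in \eqref{eq_transferm}, forces $w(\theta) = \mathrm{e}^{-2\pi i\phi_0}u(\theta-\alpha)$ and
\begin{equation*}
\mathrm{e}^{2\pi i\phi_0} u(\theta+\alpha) + \mathrm{e}^{-2\pi i\phi_0} u(\theta-\alpha) + v(\theta) u(\theta) = E u(\theta).
\end{equation*}
Equivalently, $\psi_n(\theta) := \mathrm{e}^{2\pi i n\phi_0} u(\theta + n\alpha)$ solves $H_{\alpha,\theta}\psi=E\psi$ for every $\theta$. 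Note that $u\not\equiv 0$: otherwise $w\equiv 0$ too and the first column of $N$ vanishes identically, contradicting $\det N\equiv 1$. Moreover, $u$ inherits a holomorphic extension to $\lvert\mathrm{Im}(z)\rvert\leq \delta$ with $\|u\|_\delta \leq \|N\|_\delta$.

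Next, I would transform this Bloch equation into the dual eigenvalue equation. Expanding $u(\eta)=\sum_k \hat u_k \mathrm{e}^{2\pi i k\eta}$ and $v(\eta)=\sum_m \hat v_m \mathrm{e}^{2\pi i m\eta}$ in the boxed identity and matching Fourier modes yields, for every $k\in\mathbb{Z}$,
\begin{equation*}
2\cos\bigl(2\pi(\phi_0+k\alpha)\bigr)\hat u_k + (\hat v * \hat u)_k = E\,\hat u_k,
\end{equation*}
which, by the definition \eqref{eq_defdual}, is precisely $\hat H_{\alpha,2\pi\phi_0}\hat u = E\hat u$. Since $u\not\equiv 0$, the sequence $\hat u \in \mathbb{C}^{\mathbb{Z}}$ is not identically zero; we will see immediately below that it lies in $\ell^2(\mathbb{Z})$, so $E$ is an honest eigenvalue of $\hat H_{\alpha,2\pi\phi_0}$.

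Finally, the decay constants fall out of analyticity. Since $u$ extends holomorphically to the strip $\lvert\mathrm{Im}(z)\rvert\leq \delta$ with $\|u\|_\delta\leq \|N\|_\delta$, the standard Cauchy estimate for Fourier coefficients gives $\lvert \hat u_k\rvert \leq \|N\|_\delta\, \mathrm{e}^{-2\pi\delta\lvert k\rvert}$ for all $k\in\mathbb{Z}$. This is exactly the $(\|N\|_\delta, 2\pi\delta)$-localization claim. There is no serious obstacle here; the only subtle points are identifying the correct entry of $N$ to play the role of $u$ and ruling out its triviality, both of which are handled by $N\in SL(2,\mathbb{C})$. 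The rest is the textbook Aubry-duality bookkeeping plus a Paley--Wiener-type bound.
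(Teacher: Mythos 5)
Your proof is correct and follows essentially the same route as the paper: extract the first column of $N$, use the conjugacy to get the difference equation for its first entry, Fourier transform to land on the dual eigenvalue equation, and read off the decay from analyticity on the strip. Your explicit check that $u\not\equiv 0$ (via $\det N\equiv 1$) is a small point the paper leaves implicit, but otherwise the arguments coincide.
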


\begin{proof}
Write 
\begin{equation}
N(\theta) = \begin{pmatrix} a(\theta) & b(\theta) \\ c(\theta) & d(\theta) \end{pmatrix} ~\mbox{.}
\end{equation}
Equation (\ref{eq_lemdual}) implies
\begin{equation}
c(\theta) = \mathrm{e}^{-2 \pi i \phi_0} a(\theta-\alpha) ~\mbox{, } d(\theta) =\mathrm{e}^{2 \pi i \phi_0} b(\theta-\alpha) ~\mbox{.}
\end{equation}
In particular,
\begin{equation} \label{eq_lemdual2}
(E - v(\theta)) a(\theta) - \mathrm{e}^{-2 \pi i \phi_0} a(\theta-\alpha) = \mathrm{e}^{2 \pi i \phi_0} a(\theta+\alpha) ~\mbox{.}
\end{equation}

Taking the Fourier transform of (\ref{eq_lemdual2}) we obtain
\begin{equation}
E \hat{a}_n = (\hat{v} * \hat{a})_n + 2 \cos(2 \pi (\alpha n + \phi_0)) \hat{a}_n ~\mbox{,}
\end{equation}
and since $a(\theta)$ extends holomorphically to $\abs{\im(z)} \leq \delta$,
\begin{equation}
\abs{ \hat{a}_n } \leq \norm{a}_\delta \mathrm{e}^{-2 \pi \abs{n} \delta} \leq \norm{N}_\delta \mathrm{e}^{-2 \pi \abs{n} \delta}  ~\mbox{, } n \in \mathbb{Z} ~\mbox{.}
\end{equation}
\end{proof}

We shall also need
\begin{lemma} \label{lem_conts+}
For $\beta \in \mathbb{T}$, suppose that there exists $E \in \mathbb{R}$ which is $(C,\gamma)$-localized for $\hat{H}_{\beta,\xi}$ and some $\xi \in \mathbb{T}$. Then, for 
all $\beta^\prime \in \mathbb{T}$,
\begin{equation} \label{eq_S+dual1}
\mathrm{dist}(E, S_+(\beta^\prime)) \leq \mathrm{dist}(E, \sigma(\beta^\prime,\xi)) \leq C \Gamma \abs{\beta - \beta^\prime} ~\mbox{.}
\end{equation}
where $\Gamma = 4 \pi (\sum_{n \in \mathbb{Z}} n^2 \mathrm{e}^{-2 \gamma n})^\frac{1}{2}$. 
\end{lemma}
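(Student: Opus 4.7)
The plan is to use the exponentially decaying eigenfunction $\psi$ as an approximate eigenfunction for the perturbed dual operator $\hat{H}_{\beta', \xi}$, yielding a quantitative bound on the distance from $E$ to $\hat{\sigma}(\beta',\xi)$; then invoke Theorem \ref{thm_duality_inv} (via $\hat{\sigma}(\beta',\xi) \subseteq \hat{S}_+(\beta') = S_+(\beta')$) to conclude. The chain of inequalities in the statement is precisely the concatenation of these two steps, the first being the trivial monotonicity of distance under set inclusion.

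\medskip

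More concretely, first I would fix the localized eigenfunction $\psi \in \ker(\hat{H}_{\beta,\xi} - E)\setminus\{0\}$, normalize so $\|\psi\|_{\ell^2} \geq 1$, and exploit the crucial fact that the off-diagonal (convolution) part $\hat{v}\ast\,\cdot\,$ in (\ref{eq_defdual}) is independent of $\beta$. Consequently,
\begin{equation*}
(\hat{H}_{\beta',\xi} - E)\psi = (\hat{H}_{\beta',\xi} - \hat{H}_{\beta,\xi})\psi,
\end{equation*}
and the right-hand side is purely diagonal, acting on the $n$-th coordinate as multiplication by $2[\cos(2\pi(\beta' n + \xi)) - \cos(2\pi(\beta n + \xi))]$. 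The mean value theorem then yields
\begin{equation*}
\bigl| 2\cos(2\pi(\beta' n + \xi)) - 2\cos(2\pi(\beta n + \xi)) \bigr| \leq 4\pi |n|\, |\beta - \beta'|,
\end{equation*}
independently of $\xi$.

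\medskip

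Combining this with the decay estimate $|\psi_n| \leq C e^{-\gamma |n|}$, I would compute
\begin{equation*}
\bigl\|(\hat{H}_{\beta',\xi} - E)\psi\bigr\|_{\ell^2}^2 \leq 16\pi^2 |\beta - \beta'|^2 \, C^2 \sum_{n \in \mathbb{Z}} n^2 e^{-2\gamma |n|},
\end{equation*}
and then apply the standard self-adjoint spectral estimate $\mathrm{dist}(E, \hat{\sigma}(\beta',\xi)) \leq \|(\hat{H}_{\beta',\xi}-E)\psi\|/\|\psi\|$ to obtain the claimed bound $\leq C\Gamma |\beta - \beta'|$. Finally, since $\hat{\sigma}(\beta',\xi) \subseteq \hat{S}_+(\beta')$ and Theorem \ref{thm_duality_inv} identifies $\hat{S}_+(\beta') = S_+(\beta')$, the distance bound transfers to $\mathrm{dist}(E, S_+(\beta'))$, giving both inequalities in (\ref{eq_S+dual1}).

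\medskip

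There is no real obstacle here: the argument is a soft perturbative estimate of Combes--Thomas flavor, made effective by the exponential localization hypothesis. The only conceptual subtlety is recognizing that the natural object produced by the approximate-eigenfunction argument is the dual spectrum $\hat{\sigma}(\beta',\xi)$, and that Theorem \ref{thm_duality_inv} is precisely what bridges this back to $S_+(\beta')$; this is why the lemma is the right tool for improving the H\"older regularity of $S_+$ in $\beta$ on the set where reducibility provides localized dual eigenfunctions.
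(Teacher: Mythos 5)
Your proposal is correct and follows essentially the same route as the paper: treat the localized dual eigenfunction as an approximate eigenfunction of $\hat{H}_{\beta',\xi}$, bound the purely diagonal difference $(\hat{H}_{\beta',\xi}-\hat{H}_{\beta,\xi})\psi$ via the cosine increment and the exponential decay of $\psi_n$, and transfer the resulting bound on $\mathrm{dist}(E,\hat{\sigma}(\beta',\xi))$ to $S_+(\beta')$ through Theorem \ref{thm_duality_inv}. No substantive differences.
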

\begin{proof}
From
\begin{multline}
\left\Vert (\hat{H}_{\beta^\prime,\xi} - E) \psi \right\Vert^2  \leq  \left\Vert (\hat{H}_{\beta^\prime,\xi} - \hat{H}_{\beta,\xi}) \psi \right\Vert^2 \\ 
 \leq 4 \sum_{n \in \mathbb{Z}} \left \vert \cos(2 \pi (\beta n + \xi)) - \cos(2 \pi (\beta^\prime n + \xi)) \right\vert^2 \abs{\psi_n}^2 
 \leq C^2 \Gamma^2  \abs{\beta - \beta^\prime}^2 ~\mbox{,}
\end{multline}
one concludes that
\begin{equation}
\mathrm{dist}(E, \hat{\sigma}(\beta^\prime,\xi)) \leq C \Gamma \abs{\beta - \beta^\prime} ~\mbox{.}
\end{equation}
Hence, (\ref{eq_S+dual1}) follows from (\ref{eq_S+dual}).
\end{proof}

We are now ready to prove Proposition \ref{prop_case2}:
\begin{proof}[Proof of Proposition \ref{prop_case2}]
Let $\alpha$ Diophantine. Given $M \in \mathbb{N}$, we define
\begin{equation}
\mathcal{R}_{\alpha,M}^{(l)}:=\{E \in \mathcal{R}_\alpha^{(l)}: ~\mbox{E is $(M, 2 \pi \frac{1}{l})$-localized for $\hat{H}_{\alpha,\theta}$, some $\theta \in \mathbb{T}$}\} ~\mbox{.}
\end{equation}

By Lemma \ref{lemma_dual}, 
\begin{equation}
\mathcal{R}_\alpha^{(l)} = \bigcup_{M} \mathcal{R}_{\alpha, M}^{(l)} ~\mbox{.}
\end{equation}

Let $M\in \mathbb{N}$ be fixed. It suffices to show that,
\begin{equation} \label{eq_wtscase2}
\left\vert \{ E \in \mathcal{R}_{\alpha, M}^{(l)}: \abs{a_{q_n,0}} > 2 + \mathrm{e}^{- c_l q_n} ~\mbox{i.o.} \} \right\vert = 0 ~\mbox{.}
\end{equation}

With a Borel-Cantelli argument in mind, for $n\in \mathbb{N}$, we aim to estimate the measure of the set \footnote{Measurability of the sets $\mathcal{R}_{\alpha, M}^{(l)}$ reduces by Lemma \ref{lemma_dual} to measurability (in $E$) of the matrix-valued function $N(x) = N^E(x)$ which, as remarked earlier, follows from the construction in \cite{H} (see also the footnote following (\ref{eq_defral})).}
\begin{equation}
\Omega_n := \{E \in \mathcal{R}_{\alpha, M}^{(l)}: \abs{a_{q_n,0}(E)} > 2 + \mathrm{e}^{- c_l q_n} \} ~\mbox{.}
\end{equation}

To this end, notice that by (\ref{eq_chambers}), 
$\Omega_n \cap S_+(\frac{p_n}{q_n}) = \emptyset$, and $S_+(p_n/q_n)$ is a union of $q_n $ closed intervals.
Since, for any $E \in \Omega_n$, Lemma \ref{lem_conts+} implies that 
\begin{equation}
\mathrm{dist}(E, S_+(p_n/q_n) ) \leq  M \Gamma \left\vert \alpha - \frac{p_n}{q_n}\right\vert ~\mbox{,}
\end{equation}

we have,
\begin{equation} \label{eq_dualitybc}
\left\vert \Omega_n \right\vert \leq 2 M \Gamma \left\vert \alpha - \frac{p_n}{q_n} \right\vert q_n \leq 2 M \Gamma \dfrac{1}{q_{n+1}} ~\mbox{.}
\end{equation}
Using (\ref{eq_cf2}), Proposition \ref{prop_case2} follows from (\ref{eq_dualitybc}) by Borel-Cantelli.

If $\alpha$ is non-Diophantine, the arguments developed in this section do not apply. However, (\ref{eq_liouv}) implies that 1/2-H\"older continuity of $S_+$ in the Hausdorff metric \cite{A} is enough to conclude (\ref{eq_wtscase2}). By (\ref{eq_cf1}) and (\ref{eq_liouv}), estimating $|\Omega_n|$ using 1/2-H\"older continuity of $S_+$ implies,
\begin{equation}
\left\vert \Omega_n \right\vert \leq 2 C \left\vert \alpha - \frac{p_n}{q_n} \right\vert^{1/2} q_n \leq 2 C \sqrt{\frac{q_n}{q_{n+1}}} < 2 C \sqrt{\frac{1}{q_{n}^{r-1}}} ~\mbox{,}
\end{equation}
which, since $r>1$, is summable by (\ref{eq_cf2}). Here, $C$ is a constant only depending on $v$ (see \cite{A}).
\end{proof}

\section{Polynomials with distinct real roots and the level sets of the discriminants} \label{sec_polyn}

By Fact \ref{fact_floquet} the discriminant, and hence also its average $a_{q_n,0}(E)$, is an algebraic polynomial in $E$ of degree $q_n$ with $q_n$ real distinct roots. In view of Case 3, the purpose of this section is to establish measure estimates for level sets of such polynomials.

Given a Borel-measurable function $f$, for $a < b$ we consider the the measure of the $(a,b)$-level set of $f$,
\begin{equation}
\left \vert f^{-1} (a,b) \right\vert = \left \vert \{ x \in \mathbb{R}: a < f(x) < b \} \right \vert ~\mbox{.}
\end{equation}

For $n \in \mathbb{N}_0$, let $\mathcal{P}_n(\mathbb{R})$ denote the polynomials over $\mathbb{R}$ of {\em{exact}} degree $n$. Given $p \in \mathcal{P}_n(\mathbb{R})$, $\mathrm{LC}(p)$ will stand for the leading coefficient of $p$. A well-known theorem due to P\'olya \cite{BB} states that for $n \geq 1$,
\begin{equation} \label{eq_polya}
\left\vert p^{-1}(a,b) \right\vert = \left\vert \left\{ x \in \mathbb{R}: \left\vert p(x) - \frac{a+b}{2} \right\vert \leq \frac{b-a}{2} \right\} \right\vert \leq 2^{2 - \frac{2}{n}} \left(\dfrac{b-a}{\mathrm{LC}(p)}\right)^{\frac{1}{n}}~\mbox{,}
\end{equation}
for any $p \in \mathcal{P}_n(\mathbb{R})$. This can be considered a global version of the fact that locally $p$ cannot behave worse than $x^n$.

Following we want to consider polynomials with a restricted zero set, more specifically the class $\mathcal{P}_{n;n}(\mathbb{R})$ of elements in $\mathcal{P}_n(\mathbb{R})$ with $n$ {\em{distinct}} real zeros. A simple argument shows that for $p \in \mathcal{P}_{n;n}$, $p^\prime$ and $p^{\prime\prime}$ cannot both be zero at any given point. Thus, locally, such polynomial will behave at worst quadratically, which would lead one to expect that the $\frac{1}{n}$-power dependence on $(b-a)$ on the right hand side of (\ref{eq_polya}) is changed to $\frac{1}{2}$ for elements in $\mathcal{P}_{n;n}(\mathbb{R})$. The interesting fact is that one can obtain a global estimate, with no additional information on $p^\prime$ and $p^{\prime\prime}$.

This intuition is made precise in the Theorems \ref{thm_sublevel_rII} and \ref{thm_poly}.  Both these theorems will take advantage of the ``general structure'' of elements of $\mathcal{P}_{n;n}$, more specifically that any $p \in \mathcal{P}_{n;n}$ has precisely $n-1$ distinct local extrema which, alternatingly, are maxima and minima, respectively. Let $y_1 < \dots < y_{n-1}$ be the local extrema of $p$. Define
\begin{equation} \label{eq_zeta}
\zeta(p) := \min_{1 \leq j \leq n-1} \abs{p(y_{j})} ~\mbox{.}
\end{equation}
For the discriminant, Fact \ref{fact_floquet} (iii) for instance implies
\begin{equation}
\zeta(t_{p/q}(\theta,.)) \geq 2 ~\mbox{,} ~\forall \theta \in \mathbb{T} ~\mbox{.}
\end{equation}

Further, denote by $\mathfrak{Z}(p)$ the zero set of $p$.

Note that any $a$ with $-\zeta(p) \leq a \leq \zeta(p)$ is attained at exactly $n$ distinct points, whereas for $\zeta(p) < \abs{a}$ the multiplicity of $p-a$ is strictly less than $n$. 

\begin{theorem} \label{thm_sublevel_rII}
Let $p \in \mathcal{P}_{n;n}(\mathbb{R})$ and $0 \leq  a < b$.  Then,
\begin{eqnarray} \label{eq_sublevel_rII}
\left\vert p^{-1}(a,b) \right\vert & \leq & 2 \mathrm{diam} \left(\mathfrak{Z}(p-a)\right) \max\left \{ \dfrac{b-a}{\zeta(p)+a}, \left(\dfrac{b-a}{\zeta(p)+a}\right)^\frac{1}{2} \right\} 
\end{eqnarray}
\end{theorem}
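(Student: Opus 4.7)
The plan is to decompose the level set $p^{-1}(a,b)$ over the connected components (``max-humps'') of $\{p>a\}$, prove a per-hump square-root estimate, and sum.

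First I set up the hump structure. Since $p\in\mathcal{P}_{n;n}(\mathbb{R})$, the $n-1$ local extrema of $p$ interlace with its $n$ simple real roots and alternate between maxima (with value $\ge\zeta(p)$) and minima (with value $\le-\zeta(p)$). Assuming first $0\le a<\zeta(p)$, the open set $\{p>a\}$ is a finite disjoint union of intervals $H_j=(x_1^{(j)},x_2^{(j)})$, each containing exactly one local maximum of value $M_j\ge\zeta(p)$, and whose endpoints exhaust $\mathfrak{Z}(p-a)$. In particular $\sum_j w_j\le\mathrm{diam}(\mathfrak{Z}(p-a))$, where $w_j:=x_2^{(j)}-x_1^{(j)}$. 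The degenerate case $a\ge\zeta(p)$ is handled analogously with fewer, possibly merged humps.

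Next, the per-hump estimate. On $H=(x_1,x_2)$ of width $w$ with maximum $M\ge\zeta(p)$, $p-a$ is non-negative on $[x_1,x_2]$ with simple zeros at the endpoints, so it factors as
\[
p(x)-a=(x-x_1)(x_2-x)R(x),
\]
with $R\in\mathcal{P}_{n-2}(\mathbb{R})$ strictly positive on $[x_1,x_2]$ (its $n-2$ real zeros are the remaining elements of $\mathfrak{Z}(p-a)$, all outside $[x_1,x_2]$). The crux is to establish a uniform lower bound
\[
R(x)\ \ge\ c\,\frac{\zeta(p)+a}{w^2},\qquad x\in[x_1,x_2],
\]
for an absolute constant $c>0$. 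Granting this, the portion $\{x\in H:\,p(x)<b\}$ lies inside the quadratic sub-level set $\{(x-x_1)(x_2-x)<w^2(b-a)/[c(\zeta(p)+a)]\}$, whose measure is at most $w\min\{1,\,2\sqrt{(b-a)/[c(\zeta(p)+a)]}\}$ by a direct computation.

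Summing the per-hump estimate with $\sum_j w_j\le\mathrm{diam}(\mathfrak{Z}(p-a))$, then using $\min\{1,2\sqrt t\}\le 2\max\{t,\sqrt t\}$, yields the stated bound after absorbing $c$. The main obstacle is the uniform lower bound on $R$ with $\zeta(p)+a$ in the numerator; the weaker version with $M-a$ in its place comes ``for free'' from evaluating $R$ at the maximum of $p-a$ on the hump (together with any standard oscillation estimate for a polynomial with no real roots in an interval). The improved ``$+a$'' dependence must be extracted from the global interlacing of the roots of $p-a$ and $p+a$: I would combine the endpoint identities $p'(x_1)=w\,R(x_1)$, $p'(x_2)=-w\,R(x_2)$ with secant estimates against the neighboring local minima of $p$ (of value $\le-\zeta(p)$) to obtain $R(x_{1,2})\gtrsim(\zeta(p)+a)/w^2$, and then propagate to the entire interval via a convexity/log-concavity argument for $R$, exploiting that $R$ has no real zeros in a neighborhood of $[x_1,x_2]$.
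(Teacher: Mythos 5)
Your decomposition into humps and the observation that $R$ (where $p-a=(x-x_1)(x_2-x)R(x)$) is log-concave, hence minimized at an endpoint of each hump, is a reasonable skeleton. But the lemma you flag as the crux, namely $R(x)\geq c\,(\zeta(p)+a)/w^2$ on each hump with an absolute constant $c$, is false. Take $p=T_n$ (so $\zeta(p)=1$) and $a=1-\eta$ with $\eta\to 0^+$. Near a local maximum $x_\ast$, write $T_n(x)=1-c_0(x-x_\ast)^2+O((x-x_\ast)^3)$ with $c_0=\tfrac12|T_n''(x_\ast)|>0$; then the hump $\{T_n>a\}$ near $x_\ast$ has width $w\approx 2\sqrt{\eta/c_0}$, and to leading order $R\equiv c_0$ there, while $(\zeta(p)+a)/w^2\approx c_0(2-\eta)/(4\eta)$. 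Thus $R\big/\big[(\zeta(p)+a)/w^2\big]\approx 4\eta/(2-\eta)\to 0$, so no absolute $c$ works. A version with $\zeta(p)-a$ (equivalently the hump maximum minus $a$) in place of $\zeta(p)+a$ is, as you note, free, but it is strictly weaker than what the theorem requires. Your sketched repair (secant estimates against neighboring minima plus log-concavity of $R$) cannot rescue this, since the failure above already occurs for Chebyshev polynomials, the most rigid case.

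The paper avoids this by a different comparison: for a level-set point $y\in(x_j,x_{j+1})$ it compares $f(y)$, with $f=p-a$, to $f(z_\alpha)$, where $z_\alpha$ lies in an adjacent band and satisfies $p(z_\alpha)=-\zeta(p)$. Writing $f(z)=f(y)\,\frac{Q(z)}{Q(y)}\,\frac{(z-x_j)(z-x_{j+1})}{(y-x_j)(y-x_{j+1})}$ with $Q=\prod_{k\neq j,j+1}(\cdot-x_k)$, and using that $|Q|$ is unimodal on $(x_{j-1},x_{j+2})$ so that $|Q(y)|\geq\min(|Q(z_1)|,|Q(z_2)|)=|Q(z_\alpha)|$, one obtains $\min_{k=j,j+1}|y-x_k|\leq |(z_\alpha-x_j)(z_\alpha-x_{j+1})|^{1/2}\big((b-a)/(\zeta(p)+a)\big)^{1/2}$. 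The per-band weight $|z_\alpha-x_j|+|z_\alpha-x_{j+1}|$ that emerges may exceed the hump width $w_j$, but because $z_\alpha$ sits in the band adjacent to the hump these weights still sum to at most $2\,\mathrm{diam}(\mathfrak{Z}(p-a))$, producing exactly the constant in the statement. The correct per-band weight thus records the distance to the adjacent $-\zeta(p)$-crossing, not merely the hump width; insisting on $w_j$ alone is what renders your key lemma unprovable.
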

\begin{remark}
\begin{itemize}
\item[(i)] An estimate analogous to (\ref{eq_sublevel_rII}) holds for the case $a < b \leq 0$, where $\mathfrak{Z}(p-a)$ and $(\zeta(p) + a)$ are replaced, respectively, by $\mathfrak{Z}(p-b)$ and $(\zeta(p)+b)$.
\item[(ii)] Note that if $a \leq \zeta(p)$, $\mathrm{diam} \left(\mathfrak{Z}(p-a)\right) \leq \mathrm{diam} \left(\mathfrak{Z}(p-\zeta(p))\right)$.
\end{itemize}
\end{remark}

In view of Case 3, introduced in Sec. \ref{sec_outline}, application of Theorem \ref{thm_sublevel_rII} to the discriminant immediately yields
\begin{coro} \label{coro_sublevel_rII}
Let $p/q \in \mathbb{Q}$, $(p,q)=1$, $0 \leq a \leq 2 $, and $a < b$. For all $\theta \in \mathbb{T}$ we have
\begin{equation}
\left\vert t_{p/q}(\theta, .)^{-1}(a,b) \right\vert \leq 4 (2 + \norm{v}_\mathbb{T}) \max\left \{ \dfrac{b-a}{2}, \left(\dfrac{b-a}{2}\right)^\frac{1}{2} \right\} ~\mbox{.}
\end{equation}
\end{coro}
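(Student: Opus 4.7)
The plan is to invoke Theorem \ref{thm_sublevel_rII} with $p = t_{p/q}(\theta,\cdot)$ regarded as a polynomial in $E$ for each fixed $\theta \in \mathbb{T}$. The hypotheses required by that theorem are supplied directly by Floquet theory: Fact \ref{fact_floquet} (i) guarantees that $t_{p/q}(\theta,\cdot)$ is a monic polynomial of degree exactly $q$, and Fact \ref{fact_floquet} (ii) states that it splits over $\mathbb{R}$ with $q$ distinct real roots, so $t_{p/q}(\theta,\cdot) \in \mathcal{P}_{q;q}(\mathbb{R})$. Hence Theorem \ref{thm_sublevel_rII} applies verbatim, and the proof reduces to controlling the two $p$-dependent quantities that appear in \eqref{eq_sublevel_rII}: the quantity $\zeta(p)$ at local extrema, and the diameter of the zero set $\mathfrak{Z}(p - a)$.

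The bound on $\zeta$ is immediate from Fact \ref{fact_floquet} (iii), which gives $\zeta(t_{p/q}(\theta,\cdot)) \geq 2$ for every $\theta \in \mathbb{T}$. For the diameter estimate I would argue as follows: since $0 \leq a \leq 2$, any $E \in \mathfrak{Z}(t_{p/q}(\theta,\cdot) - a)$ satisfies $t_{p/q}(\theta,E) = a \in [-2,2]$, so by (\ref{eq_spectrumper}) we have $E \in \sigma(p/q,\theta)$. The bounded self-adjoint operator $H_{p/q,\theta}$ has norm at most $2 + \norm{v}_\mathbb{T}$ (the discrete Laplacian contributes $2$, and $v$ is a bounded multiplication operator), so $\sigma(p/q,\theta) \subseteq [-(2+\norm{v}_\mathbb{T}),\, 2+\norm{v}_\mathbb{T}]$. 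Consequently $\mathrm{diam}(\mathfrak{Z}(t_{p/q}(\theta,\cdot) - a)) \leq 2(2 + \norm{v}_\mathbb{T})$.

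Substituting these two ingredients into \eqref{eq_sublevel_rII} yields
\[
\left\vert t_{p/q}(\theta,\cdot)^{-1}(a,b) \right\vert \leq 4(2+\norm{v}_\mathbb{T}) \max\left\{ \frac{b-a}{\zeta + a},\, \left(\frac{b-a}{\zeta + a}\right)^{1/2} \right\},
\]
where $\zeta := \zeta(t_{p/q}(\theta,\cdot)) \geq 2$. Since $a \geq 0$ and $\zeta \geq 2$, the denominator $\zeta + a$ is at least $2$, so $\frac{b-a}{\zeta + a} \leq \frac{b-a}{2}$. Both $x$ and $\sqrt{x}$ are monotone in $x \geq 0$, so replacing the denominator by $2$ only enlarges the right-hand side, giving the desired inequality.

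There is no real obstacle here; the only substantive step is recognizing that the hypothesis $a \leq 2$ forces the zero set of $t_{p/q}(\theta,\cdot) - a$ into the periodic spectrum, so that operator-norm bounds control its diameter uniformly in $\theta$. Everything else is a direct citation of Theorem \ref{thm_sublevel_rII} and bookkeeping.
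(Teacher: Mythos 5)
Your proof is correct and follows exactly the route the paper intends: the paper states that the corollary is an immediate application of Theorem \ref{thm_sublevel_rII} to the discriminant, and your argument supplies precisely the three needed inputs (membership in $\mathcal{P}_{q;q}$ via Fact \ref{fact_floquet} (i)--(ii), the bound $\zeta \geq 2$ via Fact \ref{fact_floquet} (iii), and the diameter bound $2(2+\norm{v}_\mathbb{T})$ via the operator-norm containment of the periodic spectrum). The bookkeeping with $\zeta + a \geq 2$ and monotonicity is also right, so nothing is missing.
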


\begin{proof}[Proof of Theorem \ref{thm_sublevel_rII}]
For $p \in \mathcal{P}_{n;n}(\mathbb{R})$, fix $0 \leq a < b$. Let $x_j, ~1\leq j \leq r$, denote the (distinct) roots of $p(x)-a$, $1\leq r \leq n$. Set $f(x) := p(x) - a$, then $x_j, ~1\leq j \leq r$, constitute the roots of $f(x)$. Let $y$ such that $0 \leq f(y) \leq b-a$. Consider first the case that $x_{j} < y < x_{j+1}$ for some (unique) $1 \leq j \leq r$. Let $z_1, z_2 \in p^{-1}(\{-\zeta(p)\})$, $z_1 < z_2$, be closest to $x_j, x_{j+1}$ (so $x_{j-1} < z_1 < z_2 < x_{j+1}$). For $z \in \{z_1, z_2\}$ write,
\begin{equation} \label{eq_sublevel_rII_1}
f(z) = \dfrac{f(z)}{f(y)} f(y) = f(y) ~\dfrac{Q(z)}{Q(y)} ~\dfrac{(z-x_j)(z-x_{j+1})}{(y-x_j)(y-x_{j+1})} ~\mbox{,}
\end{equation}
where 
\begin{equation}
Q(x):= \prod_{k \neq j, j+1} (x-x_k) ~\mbox{.}
\end{equation}

Since $\abs{Q(x)}$ is non-zero with a unique critical point on $(x_{j-1}, x_{j+2})$, we have 
\begin{equation}
\abs{Q(y)} \geq \min_{j=1,2} \abs{Q(z_j)} =: \abs{Q(z_\alpha)} ~\mbox{.}
\end{equation} 
Thus,
\begin{equation}
\zeta(p) + a = \abs{f(z_\alpha)} \leq \abs{f(y)} ~\dfrac{\vert (z_\alpha-x_j)(z_\alpha-x_{j+1})\vert}{\vert (y-x_j) (y-x_{j+1})\vert } ~\mbox{.}
\end{equation}
Using $f(y) \leq b-a$, we obtain control of the distance of $y$ to at least one of $x_j, x_{j+1}$, 
\begin{equation}
\min_{k=j , j+1} \abs{y-x_k} \leq \vert(z_\alpha-x_j)(z_\alpha-x_{j+1})\vert^\frac{1}{2} \left(\dfrac{b-a}{  \zeta(p) + a    }\right)^\frac{1}{2} ~\mbox{,}
\end{equation}
whence
\begin{eqnarray}
\left\vert \left\{y \in [x_j, x_{j+1}] : \abs{f(y)} \leq (b-a) \right\} \right\vert \leq 2  \vert(z_\alpha-x_j)(z_\alpha-x_{j+1})\vert^\frac{1}{2} \left(\dfrac{b-a}{  \zeta(p)+ a    }\right)^\frac{1}{2} \nonumber \\
\leq \left(\vert z_\alpha -x_j \vert + \vert z_\alpha -x_{j+1}\vert\right) \left(\dfrac{b-a}{  \zeta(p) + a    }\right)^\frac{1}{2} \label{eq_measestbands1}
\end{eqnarray}

Finally, consider the case $y < x_1$ or $x_r < y$. Denoting by $z_1 \in p^{-1}(\{-\zeta(p)\})$ the point closest to $x_j$, where $j \in \{1,r\}$, we write in analogy to (\ref{eq_sublevel_rII_1}) ,
\begin{eqnarray} \label{eq_sublevel_rII_2}
\zeta(p)+a=\abs{f(z)} = \dfrac{\abs{f(z)}}{\abs{f(y)}} \abs{f(y)} = \abs{f(y)} ~\dfrac{\abs{\widetilde{Q}(z)}}{\abs{\widetilde{Q}(y)}} ~\dfrac{\abs{z-x_j}}{\abs{y-x_j}} ~\mbox{,} \\
~\widetilde{Q}(x):=\prod_{k \neq j} (x-x_k) ~\mbox{.}
\end{eqnarray}

$\abs{\widetilde{Q}(x)}$ is increasing (decreasing) for $x\geq x_{r-1}$ ($x\leq x_2$) thus
\begin{equation} 
\abs{y-x_j} \leq \abs{f(y)} \dfrac{\abs{z_1 - x_j}}{ \zeta(p) + a   } ~\mbox{.} \label{eq_measestbands2}
\end{equation}

Taking the sum of all the terms, we obtain the claim of the theorem.
\end{proof}

\subsection{Measure of level sets is extremized at Chebyshev polynomials} \label{subsec_cheby}

Even though above proof was carried out separately within the bands, the resulting measure estimate for the contribution of the individual bands to $\left\vert p^{-1}(a,b) \right\vert$ still depends on the specifics of the polynomial ($\abs{x_j - x_{j+1}}$, see (\ref{eq_measestbands1}) and (\ref{eq_measestbands2}), respectively). Given that for $a > \zeta(p)$, the multiplicity of the roots of $p(x) - a$ depends on both $a$ and $p$, this is plausible. 

For level sets, however,  where $-\zeta(p) \leq a < b \leq \zeta(p)$, one can refine above result to obtain universal bounds for the contributions of each of the bands to  $\left\vert p^{-1}(a,b) \right\vert$. Even though not needed for the purpose of the present article, we consider the result to be of general interest whence state it in Theorem \ref{thm_poly} below.

Observe that for $p \in \mathcal{P}_{n;n}(\mathbb{R})$ and  $-\zeta(p) \leq a < b \leq \zeta(p)$, the set $\{ x \in \mathbb{R} : a \leq p(x) \leq b\}$ splits into $n$ closed intervals, referred to as {\em{bands}}, which may touch only at boundary points. Let $B_j[p](a,b)$ denote the $j$th band, where $1\leq j \leq n$ increases from right to left. 

Like (\ref{eq_polya}), the proof of Theorem \ref{thm_poly} is based on approximation theory. Below mentioned  proof develops further an argument of Shamis and Sodin \cite{D}. We mention that similar ideas have been explored earlier by Peherstorfer and Schiefermayr \cite{E}. 

To this end, let $T_n(x)$ be the $n$th Chebyshev polynomials of the first kind, i.e.
\begin{equation} \label{eq_chebyshev}
T_n(x) = \cos\left( n \arccos(x) \right)~\mbox{,}~ x \in [-1,1] ~\mbox{.}
\end{equation}
The polynomials $T_n$ are the archetype for the class $\mathcal{P}_{n;n}(\mathbb{R})$. In fact, we will argue that for any $p \in \mathcal{P}_{n;n}(\mathbb{R})$, the contribution of each band to  $\left\vert p^{-1}(a,b) \right\vert$ is dominated by those of certain rescaled Chebyshev polynomials.  While many extremal properties of Chebyshev polynomials are well known, we did not find this one in the literature. 

The following Lemma quantifies $\left\vert B_j[T_n](a,b) \right\vert$. Recall that $\mathrm{LC}(T_n) = 2^{n-1}$. For $1 \leq j \leq n$ and $x \in [-1,1]$, set
\begin{equation}
g_j^{(n)}(x):=\begin{cases}
\frac{j \pi - \arccos(x)}{n} & \mbox{, $j$ even,} \\
\frac{(j-1) \pi + \arccos(x)}{n} & \mbox{, $j$ odd.}
\end{cases}
\end{equation}
\begin{lemma}
For $-1 \leq a < b \leq 1$,
\begin{equation} \label{eq_chebyshev_meas}
\left\vert B_j[T_n](a,b) \right\vert = \left\vert \cos(g_j^{(n)}(b)) - \cos(g_j^{(n)}(a)) \right\vert \mbox{.}
\end{equation}
\end{lemma}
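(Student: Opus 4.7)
The proof is essentially a change of variables. The plan is to parametrize $[-1,1]$ via $x=\cos\phi$ with $\phi\in[0,\pi]$, so that $T_n(x)=\cos(n\phi)$, and then identify the $j$-th band and invert $T_n$ explicitly on it.

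First I would locate the bands. The critical points of $T_n$ on $(-1,1)$ are $\cos(k\pi/n)$, $k=1,\ldots,n-1$, together with the endpoints $\pm 1$; in $\phi$-coordinates, these partition $[0,\pi]$ into the $n$ sub-intervals $I_j := [(j-1)\pi/n,\, j\pi/n]$. Since $\phi\mapsto\cos\phi$ is strictly decreasing, the band numbered from the right, $B_j[T_n](-1,1)$, corresponds to $\phi\in I_j$. On $I_j$ we write $n\phi=(j-1)\pi+s$ with $s\in[0,\pi]$ and observe $\cos(n\phi)=(-1)^{j-1}\cos s$; thus $T_n$ is monotone on $I_j$, decreasing for odd $j$ and increasing for even $j$.

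Next, I would invert $T_n$ on $I_j$. Solving $(-1)^{j-1}\cos s = y$ for $y\in[-1,1]$ yields $s=\arccos(y)$ if $j$ is odd and $s=\pi-\arccos(y)$ if $j$ is even, so in both cases the unique $\phi\in I_j$ with $T_n(\cos\phi)=y$ is precisely $\phi=g_j^{(n)}(y)$. In particular, as $y$ ranges over $(a,b)\subseteq[-1,1]$, the corresponding $\phi$ fills out an interval with endpoints $g_j^{(n)}(a)$ and $g_j^{(n)}(b)$.

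Finally, since $B_j[T_n](a,b)$ is the image of this $\phi$-interval under the monotone map $\phi\mapsto\cos\phi$, its $x$-endpoints are $\cos(g_j^{(n)}(a))$ and $\cos(g_j^{(n)}(b))$, giving
\[
\bigl|B_j[T_n](a,b)\bigr| = \bigl|\cos(g_j^{(n)}(a))-\cos(g_j^{(n)}(b))\bigr|,
\]
which is (\ref{eq_chebyshev_meas}). There is no genuine obstacle; the only thing to handle carefully is the parity bookkeeping that forces the two cases in the definition of $g_j^{(n)}$, reflecting the alternation of decreasing and increasing bands of $T_n$.
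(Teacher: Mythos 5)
Your proof is correct and is exactly the ``straightforward computation based on (\ref{eq_chebyshev})'' that the paper leaves to the reader: substituting $x=\cos\phi$, identifying the $j$th band (counted from the right) with $\phi\in[(j-1)\pi/n,\,j\pi/n]$, and inverting $\cos(n\phi)=y$ there yields $\phi=g_j^{(n)}(y)$, with the parity cases matching the definition of $g_j^{(n)}$. Nothing further is needed.
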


\begin{proof}
A straightforward computation based on (\ref{eq_chebyshev}). 
\end{proof}

\begin{remark}
The explicit dependence of (\ref{eq_chebyshev_meas}) on $(b-a)$ is easiest seen from
\begin{equation} \label{eq_chebyshev_measest1}
\left\vert \cos(g_j^{(n)}(b)) - \cos(g_j^{(n)}(a)) \right\vert = \left\vert  \dfrac{1}{n} \int_a^b \dfrac{\sin(g_j^{(n)}(x))}{\sqrt{1-x^2}} \ud x \right\vert
\end{equation}

Based on (\ref{eq_chebyshev_meas}), one may estimate to conclude,
\begin{equation} \label{eq_chebmeasest2}
\left\vert B_j[T_n](a,b) \right\vert \leq \left\vert \dfrac{1}{n} \int_{a}^{b} \dfrac{g_j^{(n)}(x)}{\sqrt{1-x^2}} \ud x \right\vert \leq \dfrac{2}{n} \sqrt{b-a} ~\mbox{,}
\end{equation}
for $0 \leq a < b$. The factor of 2 in the final estimate in (\ref{eq_chebmeasest2}) may be easily improved. For $b<1$ or $b=1$ and $j=0,n$ (``extremal bands''), the first inequality in (\ref{eq_chebmeasest2}) shows that the estimate becomes linear in $b-a$.
\end{remark}

Remarkably, the measure of level sets for all $p \in \mathcal{P}_{n;n}$, within each band, is extremized by a rescaled $T_n.$ For $p\in \mathcal{P}_{n;n}$, we define the scaling factor $c_p:=\left(\frac{2 \mathrm{LC}(p)} {\zeta(p)}\right)^{1/n}.$ Then
\begin{theorem} \label{thm_poly}
Let $p \in \mathcal{P}_{n;n}$. Then,
given $-\zeta \leq a < b \leq \zeta$, the $j$th band satisfies
\begin{equation} \label{eq_poly_meas}
\left\vert B_j [p](a,b) \right\vert \leq \left\vert B_j [T_n\left(c_p \frac{x}{2}\right)] \left(\frac{a}{\zeta(p)}  , \frac{b}{\zeta(p)} \right) \right\vert ~\mbox{.}
\end{equation}
\end{theorem}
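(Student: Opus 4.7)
The plan is to normalize $p$, reformulate band measures via inverse functions, and deploy a Chebyshev-type extremal (zero-counting) argument.

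\textbf{Normalization.} Replace $p$ with $\tilde p(x) := p(x)/\zeta(p)$; then $\tilde p \in \mathcal{P}_{n;n}$, $\zeta(\tilde p) = 1$, and $\mathrm{LC}(\tilde p) = \mathrm{LC}(p)/\zeta(p) = c_p^n/2 = \mathrm{LC}(T_n(c_p x/2))$. Setting $\tilde a := a/\zeta(p)$, $\tilde b := b/\zeta(p)$ and $q(x) := T_n(c_p x/2)$, the desired inequality reduces to
\[
|B_j[\tilde p](\tilde a, \tilde b)| \leq |B_j[q](\tilde a, \tilde b)|, \quad -1 \leq \tilde a < \tilde b \leq 1.
\]
Now $\tilde p$ and $q$ have matching leading coefficients and matching $\zeta = 1$. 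Since $|\tilde a|,|\tilde b| \leq 1 = \zeta(\tilde p)$, the critical values of $\tilde p$ and $q$ lie outside $(\tilde a, \tilde b)$, so both polynomials are strictly monotonic on their respective $j$th bands at level $[-1,1]$. Hence there exist smooth inverses $\xi_j^{\tilde p}, \xi_j^q : [-1,1] \to \mathbb{R}$ onto the corresponding bands, and $|B_j[\tilde p](\tilde a, \tilde b)| = |\xi_j^{\tilde p}(\tilde b) - \xi_j^{\tilde p}(\tilde a)|$ (analogously for $q$).

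\textbf{Extremal comparison.} Assume for contradiction that the inequality fails for some $j$ and some $\tilde a, \tilde b$. Pick a translate $q_s(x) := q(x - s)$ so that the left endpoint of $B_j[q_s](\tilde a,\tilde b)$ coincides with the left endpoint of $B_j[\tilde p](\tilde a,\tilde b)$; the assumed failure forces the right endpoint of $B_j[q_s](\tilde a,\tilde b)$ to lie strictly inside $B_j[\tilde p](\tilde a,\tilde b)$. Because $\tilde p$ and $q_s$ share leading coefficient, the difference $R(x) := \tilde p(x) - q_s(x)$ is a real polynomial of degree at most $n-1$. Following the Shamis--Sodin approach \cite{D}, refined as in Peherstorfer--Schiefermayr \cite{E}, I would track the motion of the $n$ real roots of $\tilde p - t$ and of $q_s - t$ as the level $t$ sweeps through $[-1,1]$: the matching $\zeta$-values and leading coefficients force these roots to interleave band-by-band, and the assumed strict containment on the $j$th band produces enough forced crossings between $\tilde p$ and $q_s$ to yield at least $n$ distinct real zeros of $R$, contradicting $\deg R \leq n-1$.

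\textbf{Main obstacle.} The crux is the zero-counting step. The classical Chebyshev extremal argument produces $n$ sign changes of a difference polynomial via $n+1$ equioscillation points, which works cleanly for the aggregate sup-norm statement. The per-band statement demands that a single band's excess length be leveraged into $n$ sign changes globally, which requires exploiting the constraint $\zeta(\tilde p) = \zeta(q_s) = 1$ on the \emph{other} $n-1$ bands as well. Concretely, on each of these remaining bands, the values $\tilde p$ and $q_s$ each traverse (a subset of) $[-1,1]$; the coincidence of leading coefficients pins down the asymptotic behavior of $R$, while the alignment of critical levels forces $R$ to oscillate across the band structure. Executing this sign-counting robustly — independent of whether the other bands of $\tilde p$ are individually longer or shorter than those of $q_s$ — is the genuinely novel technical content, and where the proof most directly ``develops further'' the Shamis--Sodin strategy. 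Once the strict inequality is established in the non-degenerate regime, the boundary cases $\tilde a = -1$ or $\tilde b = 1$ follow by continuity.
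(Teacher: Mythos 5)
Your normalization and the inverse-function reformulation are fine (the leading-coefficient bookkeeping $\mathrm{LC}(\tilde p)=c_p^n/2=\mathrm{LC}(T_n(c_p x/2))$ checks out, and monotonicity on each band follows from $|a|,|b|\le\zeta$). But the proof has a genuine gap exactly where you flag it: the ``extremal comparison'' is asserted, not proved. The classical equioscillation trick gives $n$ sign changes of $R=\tilde p-q_s$ because one controls the sign of $R$ at all $n+1$ alternation points of the Chebyshev polynomial; here you control $R$ only near the single band $B_j$ (via the assumed strict containment of endpoints), and at the alternation points of $q_s$ you know nothing about the sign of $R$, since the critical points of $\tilde p$ sit elsewhere and its critical values are merely $\ge 1$ in modulus, not equal to $\pm1$. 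The constraint $\zeta(\tilde p)=1$ on the other $n-1$ bands does not by itself force the root interleaving you invoke; ``enough forced crossings to yield at least $n$ distinct real zeros of $R$'' is precisely the statement that needs proof, and I do not see how to extract it from a single global translate $q_s$. Without that count the reductio does not close.

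The paper avoids this global comparison entirely. It anchors the level $a$ at the left endpoint of $B_j$, embeds $q=p-a$ in a family $\mathcal{F}$ of degree-$n$ polynomials with the same leading coefficient, vanishing at that endpoint, and with prescribed alternating values $(-1)^k\eta_k$ at interpolation nodes $x_1>\dots>x_{n-1}$ (so $\zeta(h+a)\ge\zeta(p)$ throughout the family). A first-variation computation (Proposition \ref{prop_interpolformula}, adapted from Shamis--Sodin) gives the sign of $\partial_{x_k}|\tau(x^*)|^2$ explicitly and shows that $|f(x^*)|$ decreases as each pair $x_k^-<x_k^+$ is collapsed; the unique fixed point of this deformation is the rescaled, shifted Chebyshev polynomial $f_0$ of Lemma \ref{lem_cheb}. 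Pointwise minimality of $|f_0|$ on $B_j$ then immediately maximizes the sublevel set on that band. If you want to salvage your route, you would have to replace the single translate $q_s$ by a continuous deformation inside such a constrained family -- at which point you have essentially reconstructed the paper's argument.
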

In particular, for discriminants the estimate (\ref{eq_chebmeasest2}) implies:
\begin{coro}
Let $p/q \in \mathbb{Q}$, $(p,q)=1$, $0 \leq a \leq 2 $, and $0 \leq a < b \leq 2$. For all $\theta \in \mathbb{T}$ we have
\begin{equation}
\left\vert B_j [t_{p/q}(\theta, .)](a,b) \right\vert \leq \dfrac{2^{3/2}}{n} \sqrt{b-a} ~\mbox{,}
\end{equation}
for all $1 \leq j \leq q$.\footnote{With, as before, a linear estimate for $b<2$ or $j=0,n.$}
\end{coro}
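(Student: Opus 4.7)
The plan is to apply Theorem \ref{thm_poly} directly to the discriminant $p(E) := t_{p/q}(\theta, E)$, regarded as a polynomial of degree $n = q$ in the energy variable. Fact \ref{fact_floquet}(i)--(iii) supplies everything needed: $p$ is monic with $q$ distinct real roots, so $p \in \mathcal{P}_{q;q}(\mathbb{R})$ and $\mathrm{LC}(p) = 1$; and since $\abs{p}$ is at least $2$ at every local extremum, one also has $\zeta(p) \geq 2$. Consequently the scaling factor in Theorem \ref{thm_poly} satisfies $c_p = (2/\zeta(p))^{1/q} \leq 1$.

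Theorem \ref{thm_poly} then gives
\[
\abs{B_j[p](a,b)} \leq \abs{B_j[T_q(c_p x/2)](a/\zeta(p), b/\zeta(p))},
\]
and the change of variables $y = c_p x/2$, with Jacobian $2/c_p$, transforms the right-hand side into $(2/c_p) \, \abs{B_j[T_q](a/\zeta(p), b/\zeta(p))}$. The hypothesis $0 \leq a < b \leq 2$ together with $\zeta(p) \geq 2$ guarantees $0 \leq a/\zeta(p) < b/\zeta(p) \leq 1$, so (\ref{eq_chebmeasest2}) applies and bounds this by $(4/(c_p q)) \sqrt{(b-a)/\zeta(p)}$. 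Substituting $c_p = (2/\zeta(p))^{1/q}$ one arrives at
\[
\abs{B_j[p](a,b)} \leq \frac{2^{2 - 1/q}}{q} \, \zeta(p)^{1/q - 1/2} \sqrt{b-a}.
\]
For $q \geq 2$ the exponent $1/q - 1/2$ is non-positive, so $\zeta(p) \geq 2$ forces $\zeta(p)^{1/q - 1/2} \leq 2^{1/q - 1/2}$; combining the two powers of $2$ collapses the prefactor to $2^{3/2}/q$, which is precisely the claim.

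The linear refinement mentioned in the footnote -- that one obtains a bound of order $b-a$ rather than $\sqrt{b-a}$ whenever either $b < 2$ or $j \in \{1, q\}$ -- is extracted from the \emph{first} (rather than the second) inequality in (\ref{eq_chebmeasest2}). In both cases the integrand $g_j^{(q)}(x)/\sqrt{1-x^2}$ stays bounded on the relevant subinterval of $[0,1]$: for $b<2$ we remain in a compact subinterval of $[0,1)$, while for the extremal bands $j \in \{1, q\}$ the function $g_j^{(q)}$ vanishes at $x=1$ to the same order as $\sqrt{1-x^2}$. I do not anticipate any genuine obstacle; the only delicate piece is the constant-chasing that extracts exactly $2^{3/2}$ out of $c_p$ and $\zeta(p)$ via the monotonicity of $\zeta \mapsto \zeta^{1/q - 1/2}$.
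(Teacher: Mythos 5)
Your argument is correct and is exactly the route the paper intends: the corollary is stated there as an immediate consequence of Theorem \ref{thm_poly} and (\ref{eq_chebmeasest2}), using $\mathrm{LC}=1$ and $\zeta \geq 2$ from Fact \ref{fact_floquet}, and your constant-chasing $2^{2-1/q}\,\zeta^{1/q-1/2} \leq 2^{3/2}$ (valid for $q\geq 2$ by monotonicity) is the right verification, including the reading of the footnote via the first inequality in (\ref{eq_chebmeasest2}). The only stray case is $q=1$, where $\zeta$ is a minimum over an empty set, but there the band is a single interval of length $b-a \leq \sqrt{2}\sqrt{b-a}$, so the claim is trivial.
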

\begin{remark}
Theorem \ref{thm_poly} is a generalization of a result of Shamis and Sodin \cite{D}, which is the same type of statement for the full spectral bands of the discriminants of Jacobi operators.
\end{remark}

\begin{proof}
First observe that without loss of generality $\mathrm{LC}(p) > 0$ (if not, take $-p(x)$). 

\begin{figure} 
\includegraphics[width=\textwidth]{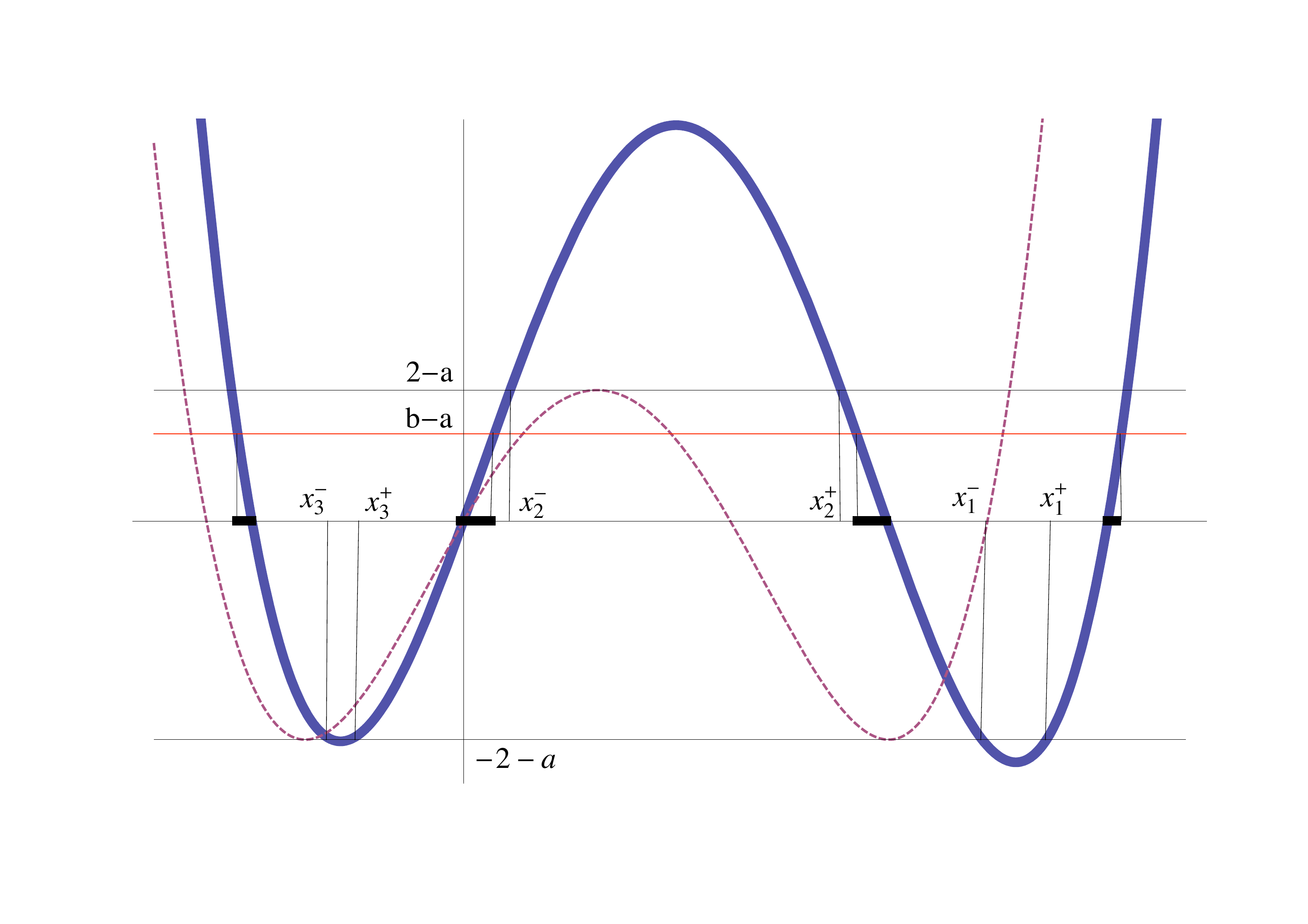} 
\caption{The solid line shows an arbitrary element $\mathrm{LC}(p) x \tau(x;x_1,x_2,x_3) \in \mathcal{F}$ for the case $n=4$, $j=3$ and $\zeta = 2$. The dashed line represents the extremal element $f_0$ given in (\ref{eq_extremalpoly}), for which $x_j^+ = x_j^-$ for all $1 \leq j \leq 3$.} \label{figure_1}
\end{figure}

Fix a band $B_j$, $1 \leq j \leq n$. Let $y_j, \tilde{y_j}$ be the boundary points of $B_j$ such that  $p(y_j) = a$ and $p(\tilde{y_j}) = b$, respectively. Without loss of generality we may assume $y_j = 0$. In particular, we then have $\tilde{y_j} > 0$ if $j$ is odd and $\tilde{y_j} < 0$ if $j$ is even. See Fig. \ref{figure_1} for illustration.

Since vertical shifts of a function do not affect the measure of its level sets, we will apply the line of argument of \cite{D} to the shifted polynomial $q(x) = p(x) - a$. Notice that $q(y_j) = 0$. 

Following, we consider a family $\mathcal{F} $ of deformations  $h \in \mathcal{P}_{n:n}$ of $q$ which all are zero at the reference point $y_j=0$, have $\mathrm{LC}(h)=\mathrm{LC}(p)$, and $\zeta(h+a)\geq \zeta(p).$ Then $\mathcal{F} $ can be represented as 
\begin{equation}
\mathcal{F} =\{ \mathrm{LC}(p) x \tau(x; x_1, \dots, x_{n-1}) , ~ x_{n-1} < x_{n-2} < \dots < x_1\} ~\mbox{,}
\end{equation}
where $\tau(x; x_1, \dots, x_{n-1})$ is the Lagrange interpolating polynomial determined by the conditions
\begin{eqnarray} \label{eq_interpol}
\mathrm{LC}(\tau(.; x_1, \dots x_{n-1})) = 1, \\ \tau(x_k; x_1, \dots x_{n-1}) = (-1)^k \dfrac{\zeta(p) + (-1)^{k+1} a}{\mathrm{LC}(p) x_k} =: (-1)^k \dfrac{\eta_k}{\mathrm{LC}(p) x_k} ~\mbox{.}
\end{eqnarray}
Notice that $\tau(x;x_1,\dots,x_k)$ is defined so that $f \in \mathcal{F}$ satisfies $f(y_j) = 0$ and $f(x_k) = (-1)^k \eta_k$ (see Fig.  \ref{figure_1}), which implicitly defines the points $x_k$. In particular, $q(x) \in \mathcal{F}$. 

It is important to realize that the correspondence between $x_1 > \dots > x_{n-1}$ and members of $\mathcal{F}$ is in general not unique. Given an element $f \in \mathcal{F}$, there may be up to $2^{n-1}$ choices for $x_1 > \dots > x_{n-1}$ representing $f$. This follows from the fact that given $f$, there are in general two possibilities for each $x_k$, denoted $x_k^- \leq x_k^+$ (see Fig.  \ref{figure_1}). 

The crucial observation, however, is that there is a {\em{unique}} member of $\mathcal{F}$ which satisfies $x_k^{-} = x_k^{+}$, for all $1 \leq k \leq n-1$; this is implied by the following Lemma:
\begin{lemma} \label{lem_cheb}
Let $L, \zeta > 0$ be fixed. For $n \in \mathbb{N}$, up to a horizontal shift, there exists a unique $p \in \mathcal{P}_{n;n}$ with $LC(p) = L$, such that at {\em{all}} of its local extrema $\abs{p} = \zeta$.  It is given by
\begin{equation}
p(x) = \zeta T_n\left( \left(\dfrac{L}{\zeta}\right)^{1/n} \dfrac{x}{2^{1-1/n}} \right) ~\mbox{.}
\end{equation}
\end{lemma}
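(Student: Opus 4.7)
The plan is to verify existence by direct computation and to establish uniqueness via an algebraic identity that pins $p$ down to a rescaled Chebyshev polynomial. For existence, the proposed $p(x) = \zeta T_n((L/\zeta)^{1/n} x / 2^{1-1/n})$ has leading coefficient $\zeta \cdot 2^{n-1} \cdot ((L/\zeta)^{1/n}/2^{1-1/n})^n = L$, and since $T_n(\cos\theta) = \cos(n\theta)$ attains $\pm 1$ at each of its $n-1$ interior critical points $\cos(k\pi/n)$, one has $|p| = \zeta$ at all local extrema.

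For uniqueness, let $p \in \mathcal{P}_{n;n}$ have leading coefficient $L > 0$ with $|p(y_i)| = \zeta$ at every local extremum $y_1 < \cdots < y_{n-1}$. Since $p(y_i)^2 = \zeta^2$ and $p'(y_i) = 0$, each $y_i$ is a double root of the degree-$2n$ polynomial $p^2 - \zeta^2$, accounting for $2(n-1)$ of its roots. The remaining two roots are real: on $(-\infty, y_1)$ and $(y_{n-1}, \infty)$, $|p|$ is monotone and tends to $\infty$, so by the intermediate value theorem there exist $a < y_1$ and $b > y_{n-1}$ with $|p(a)| = |p(b)| = \zeta$. Hence
\[ p^2 - \zeta^2 = L^2 (x-a)(x-b) \prod_{i=1}^{n-1}(x-y_i)^2. \]
Since $p' \in \mathcal{P}_{n-1}$ has leading coefficient $nL$ and simple roots at each $y_i$ (distinct critical points interlace the $n$ distinct real roots of $p$), one obtains $(p')^2 = n^2 L^2 \prod_i (x-y_i)^2$. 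Dividing the two expressions yields the key identity
\[ n^2(p^2 - \zeta^2) = (p')^2(x-a)(x-b). \]

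On $[a,b]$, where $|p| \leq \zeta$, this rearranges to $\ud p / \sqrt{\zeta^2 - p^2} = \pm n \, \ud x / \sqrt{(b-x)(x-a)}$. Integrating via the substitution $u = (2x-a-b)/(b-a)$ gives $p(x)/\zeta = \cos(n \arccos u + C')$ for some constant $C'$, which expands as $\cos(C')\, T_n(u) - \sin(C')\, \sin(n \arccos u)$. Since $\sin(n\arccos u)$ is not a polynomial in $u$ (it has branch behaviour at $u = \pm 1$), polynomiality of $p$ forces $\sin(C') = 0$, so $p(x) = \pm \zeta T_n((2x-a-b)/(b-a))$. The sign is $+$ because $L > 0$, and comparing leading coefficients fixes $b - a = 2^{2 - 1/n}(\zeta/L)^{1/n}$; the midpoint $(a+b)/2$ remains a free parameter, which is precisely the asserted horizontal shift, and its choice $(a+b)/2 = 0$ reproduces the stated formula.

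The main obstacle is the algebraic identity $n^2(p^2 - \zeta^2) = (p')^2(x-a)(x-b)$, which hinges on the clean root count of $p^2 - \zeta^2$: the two extra roots must both be real and simple so that the degree and multiplicities match those of $(p')^2(x-a)(x-b)$. Once this identity is in hand, the resulting separable ODE is classical, and eliminating non-polynomial candidates reduces to the observation that $\sin(n\arccos u) = \sqrt{1-u^2}\,U_{n-1}(u)$ is not polynomial in $u$.
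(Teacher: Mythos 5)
Your proof is correct, but it takes a genuinely different route from the paper's. The paper proves uniqueness by the classical alternation (zero-counting) argument behind Chebyshev's theorem: after normalizing the horizontal shift it looks at the $n+1$ points $x_n<\dots<x_0$ where $p$ alternately attains $\pm\zeta$, observes that the candidate $T$ satisfies $\abs{T}\le\zeta$ at those points, and concludes that $f=p-T$, a polynomial of degree at most $n-1$, changes sign (weakly) $n$ times and hence vanishes identically. You instead extract the Pell--Abel identity $n^2(p^2-\zeta^2)=(p')^2(x-a)(x-b)$ from the multiplicity count of the roots of $p^2-\zeta^2$ and of $p'$ (which is airtight: the $2(n-1)$ double roots at the $y_i$ plus the two simple real roots $a,b$ exhaust the degree $2n$), and then integrate the resulting separable equation to force $p=\zeta T_n$ up to the affine change of variable. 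The paper's route is shorter and uses nothing beyond counting sign changes of a low-degree polynomial; yours is constructive, in that it derives the Chebyshev polynomial from the equioscillation condition rather than verifying a guessed extremizer, and the intermediate algebraic identity is of independent interest. One step you should tighten: the division by $\sqrt{\zeta^2-p^2}$ and by $\sqrt{(x-a)(b-x)}$ is illegitimate at the turning points $y_i$ and at $a,b$, so the integration should be carried out on an open subinterval of $(a,b)$ where $p'\neq 0$ and $\abs{p}<\zeta$, with the resulting identity $p/\zeta=\cos C'\,T_n(u)-\sin C'\,\sqrt{1-u^2}\,U_{n-1}(u)$ then propagated to all of $(-1,1)$ by real-analyticity before you invoke non-polynomiality of $\sqrt{1-u^2}\,U_{n-1}(u)$ to kill the $\sin C'$ term. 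With that remark the argument is complete.
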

\begin{remark}
Lemma \ref{lem_cheb} follows easily from the standard proof of Chebyshev's alternation theorem. For completeness, we give the simple argument in Appendix  \ref{app_lemcheb}.
\end{remark}

In particular, Lemma \ref{lem_cheb} identifies the distinguished member where $x_k^{-} = x_k^{+}$ for all $k$,  as
\begin{equation} \label{eq_extremalpoly}
f_0(x):= \zeta(p) T_n \left(  c_p\dfrac{x + \delta_j(a)}{2} \right) - a ~\mbox{,}
\end{equation}
a Chebyshev polynomial, shifted and rescaled so that its leading coefficient equals $\mathrm{LC}(p)$, it oscillates between $(\pm \zeta(p) - a)$, and that $y_j=0$. The last condition implicitly defines $\delta_j(a)$ as the $j$th root of the equation 
\begin{equation}
\zeta(p) T_n \left( c_p\dfrac{x}{2} \right) = a ~\mbox{,}
\end{equation}
where as before $j$ is an index increasing from right to left, i.e. $\delta_n(a) < \delta_{n-1}(a) < \dots < \delta_{j}(a) < \dots < \delta_1(a)$.

In \cite{D}, Shamis and Sodin essentially analyze the dependence of elements of $\mathcal{F}$ on the parameters $x_1, \dots, x_{n-1}$. More specifically, it is shown that for any $x^* \in B_j$ one has:
\begin{lemma} \label{lem_shamissodin}
\begin{equation}
\min_{ f \in \mathcal{F}} \abs{f(x^*)} = \abs{f_0(x^*)} ~\mbox{.}
\end{equation}
\end{lemma}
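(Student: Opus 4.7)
The plan is to view $\mathcal{F}$ through the parametrization $(x_1,\ldots,x_{n-1})\mapsto f$ and reduce the minimization of $|f(x^*)|$ to a critical-point problem on the parameter space. Using the Lagrange interpolation representation
\[
\tau(x; x_1,\ldots,x_{n-1}) = \prod_{k=1}^{n-1}(x-x_k) + \sum_{k=1}^{n-1}\frac{(-1)^k \eta_k}{\mathrm{LC}(p)\, x_k}\prod_{\ell \neq k}\frac{x-x_\ell}{x_k-x_\ell},
\]
the target quantity $F(x_1,\ldots,x_{n-1}) := f(x^*) = \mathrm{LC}(p)\, x^* \tau(x^*;\cdot)$ becomes a rational function of the parameters, smooth on the open region $\{x_{n-1}<\cdots<x_1\}\setminus\{x_k=0,\,x_k=x^*\}$.

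First I would verify that either $|F|\to\infty$ or $f$ exits $\mathcal{F}$ as the parameters approach the boundary of this region (two $x_k$'s colliding, some $x_k\to 0$, or some $x_k\to\pm\infty$), so that the infimum of $|F|$ over $\mathcal{F}$ is attained at an interior critical point. Then I would compute $\partial F/\partial x_k$ directly from the Lagrange formula; a straightforward calculation shows that the critical-point equations are equivalent to demanding that each $x_k$ coincide with a local extremum of the associated polynomial $f$, i.e., that $x_k^- = x_k^+$ for every $k$. By Lemma \ref{lem_cheb}, the unique polynomial in $\mathcal{F}$ satisfying this condition together with $f(0)=0$ is precisely $f_0$. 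Uniqueness of the interior critical point then pins down the minimum at $f_0$.

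The main obstacle is verifying that $f_0$ is in fact a \emph{minimum} of $|F|$ (rather than a saddle). The cleanest route is a sign-change / degree-count argument complementing the critical-point analysis: supposing $|f(x^*)| < |f_0(x^*)|$ for some $f \in \mathcal{F}$, the difference $D := f - f_0$ has $\deg D \leq n-1$ (since both polynomials share leading coefficient $\mathrm{LC}(p)$) and satisfies $D(0) = 0$. A comparison of oscillation profiles --- using that $f \in \mathcal{P}_{n;n}$ has the same number of extrema as $f_0$ and the same end-behavior, together with $\zeta(f+a) \geq \zeta(p) = \zeta(f_0+a)$ --- forces $\mathrm{sgn}\, D(x_k^0)$ to alternate across the extrema $x_k^0$ of $f_0$, giving $n-2$ interior sign changes. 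The assumption $|f(x^*)| < |f_0(x^*)|$ then produces an additional sign change near $x^*$, yielding at least $n$ real zeros of $D$ and contradicting $\deg D \leq n-1$. The nontrivial technical content lies precisely in extracting this alternation from the oscillation constraint, which requires a careful interval-by-interval comparison of $f$ and $f_0$ between consecutive extrema of $f_0$.
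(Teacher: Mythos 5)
Your computational core (differentiating the Lagrange parametrization via Proposition \ref{prop_interpolformula} and identifying the critical-point condition with $x_k^-=x_k^+$) coincides with the paper's, but you stop short of the step that actually closes the argument, and the patch you propose does not work. The paper does not merely locate critical points: the sign computation in (\ref{eq_b5})--(\ref{eq_deform}) shows that $\partial_{x_k}\abs{\tau(x^*;\cdot)}^2$ is \emph{negative} at $x_k=x_k^-$ and \emph{positive} at $x_k=x_k^+$, so the deformation moving $x_k^-$ and $x_k^+$ toward each other decreases $\abs{f(x^*)}$ monotonically. This produces, from an arbitrary $f\in\mathcal{F}$, a monotone path down to the symmetric configuration $f_0$, and thereby settles attainment, uniqueness, and the minimum-versus-saddle question simultaneously --- no boundary analysis of the parameter region (collisions $x_k\to x_{k+1}$, the singularity $x_k\to 0$ of the parametrization, escape to infinity, closure of the open condition $h\in\mathcal{P}_{n;n}$) and no second-order test is needed. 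Your first route requires exactly this sign information, which you do not extract, and the compactness issues you wave at are not trivial.

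The alternation argument you offer as a substitute has a genuine gap at the step you yourself flag. The hypothesis $\zeta(f+a)\geq\zeta(p)$ constrains $f$ only at the local extrema \emph{of $f$}; it says nothing about the values of $f$ at the local extrema of $f_0$, since the bands of $f$ and $f_0$ need not be aligned. At a point where $f_0+a=+\zeta(p)$, the value $f+a$ can be well below $\zeta(p)$ (even negative) if the corresponding extremum of $f$ sits elsewhere, so $\mathrm{sgn}\,D$ need not alternate across the extrema of $f_0$ and the count of $n$ zeros of $D$ collapses. Evaluating instead at the extrema of $f$ does not help: one then needs $\abs{f_0+a}\leq\zeta(p)$ at those points, i.e. that they lie in $(f_0+a)^{-1}([-\zeta(p),\zeta(p)])$, which is again an unverified claim about relative band positions. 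Moreover, even granting alternation, your count is tight: $0$ and $x^*$ both lie in the single band $B_j$, so you must check that the zero at $0$ and the sign change forced at $x^*$ are not the same zero already counted in that gap. The alternation method does prove Lemma \ref{lem_cheb}, where equality at \emph{all} extrema is assumed and the conclusion is an identity of polynomials; it does not, as sketched, yield the pointwise inequality of Lemma \ref{lem_shamissodin}.
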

As mentioned above the analysis in \cite{D} was carried out for a special case, however generalizes to our set-up with only a few changes. For the reader's convenience, we give a proof of Lemma \ref{lem_shamissodin} in Appendix \ref{app_shamissodin}.

Thus,
\begin{eqnarray}
\left\vert \left\{ x \in B_j: a < p(x) \leq b \right\} \right\vert & = & \left\vert \left\{ x \in B_j: 0 < q(x) \leq b-a \right\} \right\vert \nonumber \\
 & \leq & \left\vert \left\{ x \in B_j: 0 < f_0(x) \leq b-a \right\} \right\vert ~\mbox{,} \label{eq_chebyband} 
\end{eqnarray}
which proves the claim of the theorem.
\end{proof}

\section{Proof of Theorem \ref{thm_main}} \label{sec_case3}

In order to finish the proof of Theorem \ref{thm_main2} (and thus of Theorem \ref{thm_main}), we are left to consider Case 3 introduced in Sec. \ref{sec_outline}:
\begin{prop} \label{prop_case3}
\begin{equation} \label{eq_final}
\left \vert \left\{ E \in \mathcal{R}_\alpha^{(l)}: E ~\mbox{satisfying Case 3} \right\} \right \vert = 0 ~\mbox{.}
\end{equation}
\end{prop}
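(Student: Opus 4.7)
The plan is to estimate $\abs{\Omega_n}$ for $\Omega_n := \{E \in \mathcal{R}_\alpha^{(l)} : \abs{a_{q_n,0}(E) \pm 2} \leq \mathrm{e}^{-c_l q_n}\}$ via the level-set bound of Corollary \ref{coro_sublevel_rII} and conclude by Borel--Cantelli, using (\ref{eq_cf2}) for summability. The only subtlety is that the threshold $N=N(E)$ in Proposition \ref{prop_chambers} depends on $E$, so one cannot apply the generalized Chambers' formula uniformly on $\mathcal{R}_\alpha^{(l)}$; this is dealt with by a standard stratification.

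First, for each $M\in \mathbb{N}$, let $\mathcal{R}_{\alpha,M}^{(l)} := \{E \in \mathcal{R}_\alpha^{(l)} : N(E) \leq M\}$, where $N(E)$ is the integer from Proposition \ref{prop_chambers}. These sets are measurable (by the remarks on measurability in Sec. \ref{sec_chambers} and Sec. \ref{sec_dualtiy}), nested, and exhaust $\mathcal{R}_\alpha^{(l)}$. Since a countable union of null sets is null, it suffices to show that for every $M$,
\begin{equation*}
\left\vert \left\{ E \in \mathcal{R}_{\alpha,M}^{(l)} : E \in \Omega_n ~\mbox{i.o.}~n \right\} \right\vert = 0 ~\mbox{.}
\end{equation*}

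Fix $M$, and consider $n$ so large that $q_n > M$. For any $E \in \mathcal{R}_{\alpha,M}^{(l)} \cap \Omega_n$, Proposition \ref{prop_chambers} applies uniformly in $\theta$:
\begin{equation*}
\left\vert t_{p_n/q_n}(\theta,E) \mp 2 \right\vert \leq \left\vert t_{p_n/q_n}(\theta, E) - a_{q_n,0}(E) \right\vert + \left\vert a_{q_n,0}(E) \mp 2 \right\vert \leq 2 \mathrm{e}^{-c_l q_n} ~\mbox{,}
\end{equation*}
so, fixing $\theta_0 = 0$,
\begin{equation*}
\mathcal{R}_{\alpha,M}^{(l)} \cap \Omega_n \subseteq t_{p_n/q_n}(0,\cdot)^{-1}\bigl( 2 - 2\mathrm{e}^{-c_l q_n}, 2 + 2\mathrm{e}^{-c_l q_n} \bigr) \cup t_{p_n/q_n}(0,\cdot)^{-1}\bigl( -2 - 2\mathrm{e}^{-c_l q_n}, -2 + 2\mathrm{e}^{-c_l q_n} \bigr) ~\mbox{.}
\end{equation*}
By Fact \ref{fact_floquet}, $t_{p_n/q_n}(0,\cdot) \in \mathcal{P}_{q_n;q_n}(\mathbb{R})$ with $\zeta \geq 2$, and its zero set lies in $\sigma(p_n/q_n,0) \subseteq [-2-\norm{v}_\mathbb{T}, 2+\norm{v}_\mathbb{T}]$. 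Corollary \ref{coro_sublevel_rII} applied to the first set (with $a = 2-2\mathrm{e}^{-c_l q_n} \leq 2$, $b = 2 + 2\mathrm{e}^{-c_l q_n}$), and Theorem \ref{thm_sublevel_rII} applied to $-t_{p_n/q_n}(0,\cdot)$ for the second, together yield
\begin{equation*}
\left\vert \mathcal{R}_{\alpha,M}^{(l)} \cap \Omega_n \right\vert \leq C \mathrm{e}^{-c_l q_n /2} ~\mbox{,}
\end{equation*}
with $C = C(\norm{v}_\mathbb{T})$ independent of $M$ and $n$.

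Finally, by (\ref{eq_cf2}), $q_n \geq 2^{n/2}$, hence $\sum_n \mathrm{e}^{-c_l q_n /2} < \infty$. Borel--Cantelli produces the desired null set for every $M$, and taking a countable union over $M$ completes the proof. The main potential obstacle, which the stratification resolves, is the $E$-dependence of $N(E)$; crucially the exponent $c_l$ in (\ref{eq_chambers}) is uniform in $E \in \mathcal{R}_\alpha^{(l)}$, which is what makes the Borel--Cantelli summation work irrespective of $M$.
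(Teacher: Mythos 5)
Your proposal is correct and follows essentially the same route as the paper: stratify $\mathcal{R}_\alpha^{(l)}$ by the threshold $N(E)$ from Proposition \ref{prop_chambers} (the paper's sets $\mathcal{R}_\alpha^{(l),N}$), use the generalized Chambers' formula to transfer the condition on $a_{q_n,0}$ to a level-set condition on $t_{p_n/q_n}(\theta_0,\cdot)$ for a single fixed phase, bound its measure by $C(2+\norm{v}_\mathbb{T})\mathrm{e}^{-c_l q_n/2}$ via Corollary \ref{coro_sublevel_rII}, and conclude by Borel--Cantelli using (\ref{eq_cf2}). The only differences are cosmetic (your explicit handling of the $-2$ case via the remark following Theorem \ref{thm_sublevel_rII}, which the paper leaves implicit).
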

\begin{proof}
Using Proposition \ref{prop_chambers}, further decompose $\mathcal{R}_\alpha^{(l)}$ into the countable union of
\begin{equation}
\mathcal{R}_\alpha^{(l),N}:=\{ E \in \mathcal{R}_\alpha^{(l)}: \sup_{\theta \in \mathbb{T}} \left\vert t_{p_n/q_n}(\theta, E) - a_{q_n,0}(E) \right\vert \leq \mathrm{e}^{- c_l q_n} ~\mbox{,  for} ~n \geq N\} ~\mbox{,}
\end{equation}
for $N \in \mathbb{N}$. \footnote{The sets $\mathcal{R}_\alpha^{(l),N}$ are indeed measurable since they are intersections of $ \mathcal{R}_\alpha^{(l)}$ with measurable sets.} It suffices to show, $|\mathcal{R}_\alpha^{(l),N} \cap \mbox{Case 3} |=0$, for each $N \in \mathbb{N}$.

Fix $N \in \mathbb{N}$. For all $n>N$, Proposition \ref{prop_chambers} implies that for all $\theta \in \mathbb{T}$,
\begin{equation} \label{eq_case3red}
\{ E \in \mathcal{R}_\alpha^{(l),N} : \abs{a_{q_n}(E) \pm 2} \leq  \mathrm{e}^{- c_l q_n} \} \subseteq \left \{E: \left\vert t_{p_n/q_n}(\theta,E) \pm 2 \right\vert \leq 2\mathrm{e}^{- c_l q_n} \right \} ~\mbox{.}
\end{equation}

Hence, with a Borel-Cantelli argument in mind, it suffices to estimate the measure of the right hand side of (\ref{eq_case3red}) for {\em{some}} fixed $\theta_0 \in \mathbb{T}$. This estimate is taken care of by Corollary \ref{coro_sublevel_rII}, whence
\begin{equation}
\left\vert \left \{E: \left\vert t_{p_n/q_n}(\theta_0,E) \pm 2 \right\vert \leq  2\mathrm{e}^{- c_l q_n} \right \} \right\vert \leq C (2 + \norm{v}_\mathbb{T}) \mathrm{e}^{-c_l q_n/2} ~\mbox{,}
\end{equation}
which is summable in $n$ for each fixed $l \in \mathbb{N}$.
\end{proof}

We mention that the proof of Proposition \ref{prop_case3} together with Proposition \ref{prop_case2}, implies the following reformulation of Theorem \ref{thm_main2}, which we believe to be of independent interest:
\begin{theorem} \label{thm_mainindividphase}
Given $\alpha$ irrational. For all $l \in \mathbb{N}$ and a.e. $E \in \mathcal{R}_\alpha^{(l)}$ we have
\begin{equation}
\sup_{\theta \in \mathbb{T}} \abs{t_{p_n/q_n}(\theta,E)} \leq 2 - 2 \mathrm{e}^{-c_l q_n} ~\mbox{, eventually.}
\end{equation}
\end{theorem}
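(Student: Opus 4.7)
The plan is a direct synthesis of Propositions \ref{prop_chambers}, \ref{prop_case2}, and \ref{prop_case3}. Morally, the statement is just the assertion that a.e. $E \in \mathcal{R}_\alpha^{(l)}$ eventually falls into Case 1 of Section \ref{sec_outline}, with a modest inflation of the Case thresholds so as to absorb the Chambers deviation into the final bound.

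First, I would decompose $\mathcal{R}_\alpha^{(l)} = \bigcup_{N \in \mathbb{N}} \mathcal{R}_\alpha^{(l),N}$, where $\mathcal{R}_\alpha^{(l),N}$ consists of those $E \in \mathcal{R}_\alpha^{(l)}$ for which the generalized Chambers formula in Proposition \ref{prop_chambers} holds uniformly for all $n \geq N$; this is the same decomposition used in the proof of Proposition \ref{prop_case3}. By countable additivity, it suffices to establish the claim for a.e. $E$ in a fixed $\mathcal{R}_\alpha^{(l),N}$.

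Second, I would re-run the measure estimates of Propositions \ref{prop_case2} and \ref{prop_case3} with the threshold $e^{-c_l q_n}$ appearing in the definitions of Cases 2 and 3 replaced by $3 e^{-c_l q_n}$. This only changes the estimates by a bounded multiplicative constant: the bound $2 M \Gamma / q_{n+1}$ in Proposition \ref{prop_case2} is unaffected up to a constant, and the $e^{-c_l q_n / 2}$ decay extracted from Corollary \ref{coro_sublevel_rII} in Proposition \ref{prop_case3} is preserved. Hence Borel--Cantelli still yields that, for a.e. $E \in \mathcal{R}_\alpha^{(l),N}$, both $|a_{q_n,0}(E)| \leq 2 + 3 e^{-c_l q_n}$ (absence of Case 2) and $|a_{q_n,0}(E) \pm 2| > 3 e^{-c_l q_n}$ (absence of Case 3) hold eventually. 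Combining these two inequalities forces $|a_{q_n,0}(E)| \leq 2 - 3 e^{-c_l q_n}$ eventually.

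Finally, the triangle inequality together with the Chambers bound (which holds for all $n \geq N$ on $\mathcal{R}_\alpha^{(l),N}$) gives
\[
\sup_{\theta \in \mathbb{T}} |t_{p_n/q_n}(\theta, E)| \leq |a_{q_n,0}(E)| + \sup_{\theta \in \mathbb{T}} |t_{p_n/q_n}(\theta, E) - a_{q_n,0}(E)| \leq (2 - 3 e^{-c_l q_n}) + e^{-c_l q_n} = 2 - 2 e^{-c_l q_n}
\]
eventually, which is the desired conclusion. Since the argument is essentially a reorganization of already-established results, no genuine obstacle is expected; the only mild point of care is the bookkeeping of constants in the Case thresholds in order to extract the explicit factor of $2$ in $2 e^{-c_l q_n}$, which is handled by the threshold inflation described above.
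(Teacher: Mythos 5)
Your proof is correct and follows exactly the route the paper intends. The paper does not in fact supply a proof of Theorem~\ref{thm_mainindividphase} --- it merely remarks that it ``is implied by'' the proofs of Propositions~\ref{prop_case2} and~\ref{prop_case3} --- and you are right that the naive combination of those two statements (with thresholds $e^{-c_l q_n}$) only yields $\sup_\theta |t_{p_n/q_n}(\theta,E)| < 2$ after adding the Chambers deviation, so the inflation to $3e^{-c_l q_n}$ is genuinely needed to extract the stated bound $2 - 2e^{-c_l q_n}$. Your observation that the measure estimates in both Propositions survive this constant change (the duality bound $2M\Gamma/q_{n+1}$ is unchanged, and $\sqrt{4e^{-c_lq_n}}$ remains summable) is exactly the ``obvious but must be checked'' bookkeeping. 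One minor organizational point worth noting: in the Proposition~\ref{prop_case2} argument, one should intersect $\Omega_n$ with $\mathcal{R}_\alpha^{(l),N}$ before applying Chambers to conclude $\Omega_n \cap S_+(p_n/q_n) = \emptyset$ (since Chambers holds only for $n \geq N$ depending on $E$); your initial decomposition into $\mathcal{R}_\alpha^{(l),N}$ already handles this, so the issue is cosmetic, but it is worth making the interaction between the two countable decompositions ($N$-cutoff and localization constant $M$) explicit when re-running the Borel--Cantelli argument.
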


In order to prove Theorem \ref{thm_main} (ii), first note that continuity of $S_+(\beta)$ in Hausdorff metric \cite{AVS} implies
\begin{equation} \label{lem_thm_main_3}
\limsup_{\frac{p}{q} \to \alpha} S_+\left(\frac{p}{q}\right) \subseteq \Sigma(\alpha)  ~\mbox{,}
\end{equation}
for any irrational $\alpha \in \mathbb{T}$ (inclusion holds set-wise). 

For the remainder of the proof of Theorem \ref{thm_main} (ii), we have to distinguish between Diophantine and non-Diophantine $\alpha$. Similar to Sec. \ref{sec_dualtiy}, it is the modulus of continuity of $S_+$ in the Hausdorff metric which requires separate treatment of these two cases. 

We start with $\alpha$ Diophantine. Will make use of the following result, established in \cite{JK}, which we formulate in a way useful to the present application:
\begin{theorem}[\cite{JK}; Theorem 3 and Remark 1.] \label{thm_jk}
Let $\alpha \in \mathbb{T}$ be Diophantine. For $\eta >0$, consider the set $\mathscr{E}_\eta:=\{E \in \Sigma(\alpha): L(\alpha, A^E) \geq \eta\}$. There exist $h(\alpha, \eta)>0$, $c(\alpha, \eta) <\infty$ and $\gamma(\alpha) \geq 3$ such that for any $E \in \mathscr{E}_\eta$ and $\beta \in \mathbb{T}$ with $\abs{\alpha - \beta} < h(\alpha, \eta)$,
\begin{equation}
\mathrm{dist}(E, \Sigma_+(\beta)) \leq c(\alpha, \eta) \left\vert (\alpha -\beta) \log^{\gamma(\alpha)} \abs{\alpha - \beta} \right\vert ~\mbox{.}
\end{equation}
\end{theorem}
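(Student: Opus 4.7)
The plan is to combine Weyl's criterion with a quantitative localization estimate: when $L(\alpha, A^{E_0}) \geq \eta > 0$, any approximate eigenfunction for $H_{\alpha,\theta}$ at energy $E_0$ can be taken supported on an interval whose length grows only polylogarithmically in the inverse tolerance, making it robust under perturbations of the frequency.

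Fix $E_0 \in \mathscr{E}_\eta \cap \Sigma(\alpha)$ and $\beta$ with $\abs{\alpha - \beta}$ small. First I would construct, for a suitable $\theta_0 \in \mathbb{T}$ and a tolerance $\epsilon>0$ to be optimized, a normalized $\psi \in \ell^2(\mathbb{Z})$ supported on $[-N, N]$ with
\begin{equation}
\norm{(H_{\alpha,\theta_0} - E_0)\psi} < \epsilon
\end{equation}
and $N \leq C(\alpha,\eta)\log^{\gamma(\alpha)}(1/\epsilon)$. The existence of such a $\psi$ with this control on $N$ rests on large-deviation estimates for analytic $SL(2,\mathbb{R})$-cocycles with positive Lyapunov exponent (in the spirit of Bourgain--Goldstein--Schlag), together with an avalanche-principle argument that exploits the Diophantine hypothesis to avoid resonant scales: outside a small set of $\theta$, the $n$-step transfer matrix has norm $\geq \mathrm{e}^{(L-o(1))n}$, and the resulting exponential decay of the Green's function, modulo a controlled number of resonant blocks, produces the desired $\psi$.

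Next I would transfer $\psi$ to $H_{\beta,\theta_0}$. Since only the potential differs and $v$ is Lipschitz on $\mathbb{T}$,
\begin{equation}
\norm{(H_{\beta,\theta_0} - H_{\alpha,\theta_0})\psi}^2 \leq \sum_{\abs{n}\leq N} \abs{v(\beta n + \theta_0) - v(\alpha n + \theta_0)}^2 \abs{\psi_n}^2 \leq C_v^2\, N^2 \abs{\alpha-\beta}^2.
\end{equation}
Choosing $\epsilon \asymp \abs{\alpha-\beta}$, so that $N \lesssim \log^{\gamma(\alpha)}(1/\abs{\alpha-\beta})$, yields
\begin{equation}
\norm{(H_{\beta,\theta_0} - E_0)\psi} \leq \epsilon + C_v N \abs{\alpha-\beta} \leq c(\alpha,\eta)\, \bigl\vert(\alpha-\beta)\log^{\gamma(\alpha)}\abs{\alpha-\beta}\bigr\vert.
\end{equation}
Weyl's criterion then places $E_0$ within this distance of $\sigma(H_{\beta,\theta_0}) \subseteq \Sigma_+(\beta)$, as claimed.

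The main obstacle is the localization-length bound $N \lesssim \log^{\gamma(\alpha)}(1/\epsilon)$. This is where both analyticity of $v$ (needed for strong large-deviation bounds with subexponentially small exceptional sets) and the Diophantine condition on $\alpha$ (needed to prevent accumulation of resonances through the avalanche iteration) are essential. The exponent $\gamma(\alpha) \geq 3$ reflects the compounding of the Diophantine exponent $r$ from \eqref{eq_diophdef} across the iterated length scales, and weakening either hypothesis would degrade the estimate to the standard $1/2$-H\"older bound of \cite{A}.
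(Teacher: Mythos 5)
This theorem is not proved in the paper at all: it is imported verbatim from \cite{JK} (Theorem 3 and Remark 1 there), so there is no internal proof to compare against. What you have written is, in outline, precisely the argument of \cite{JK}: produce an $\epsilon$-approximate eigenfunction of $H_{\alpha,\theta_0}$ at $E_0$ supported on a box of length $N\lesssim\log^{\gamma}(1/\epsilon)$, perturb the frequency (picking up an error $\lesssim N\abs{\alpha-\beta}$ from the Lipschitz potential), optimize $\epsilon\asymp\abs{\alpha-\beta}$, and invoke the Weyl criterion. That reduction is correct, and your error bookkeeping in the second display is fine.

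The issue is that the reduction is the easy half. The entire content of \cite{JK} is the quantitative localization-length bound $N\leq C(\alpha,\eta)\log^{\gamma(\alpha)}(1/\epsilon)$, \emph{uniformly over} $E\in\mathscr{E}_\eta$, and your proposal asserts it rather than proves it. To make this step honest you would need: (i) a large deviation theorem for $\frac1n\log\norm{\prod A^E}$ that is uniform in $E$ on $\mathscr{E}_\eta$ (this is where analyticity and continuity of $L$ in $E$ enter, and where the constant $c(\alpha,\eta)$ acquires its $\eta$-dependence); (ii) an elimination of double resonances using the Diophantine condition, which is what forces the box length to be only polylogarithmic in $1/\epsilon$ and produces the exponent $\gamma(\alpha)$ in terms of the Diophantine exponent $r$ of (\ref{eq_diophdef}); and (iii) a truncation argument showing that a generalized eigenfunction cut off between two non-resonant sites yields the desired $\psi$. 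None of this is routine, and the avalanche principle by itself does not deliver the uniformity in $E$ without the extra work of \cite{C,JK}. As a blind reconstruction of the strategy behind the cited result your sketch is on target, but as a proof it leaves the theorem's actual substance as a black box --- which is acceptable only because the paper itself treats it the same way, by citation.
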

\begin{remark}
$\gamma$ depends on $\alpha$ only through its Diophantine constants. For $\alpha \in \mathbb{T}$ whose continued fraction expansion forms a {\em{bounded}} sequence, $\gamma (\alpha) = 3$.
\end{remark}

Using (\ref{lem_thm_main_3}), we are left to show that 
\begin{equation}
\Sigma(\alpha) \subseteq \liminf_{p_n/q_n \to \alpha} S_+\left(\frac{p_n}{q_n}\right) ~\mbox{.}
\end{equation}
Employing Remark \ref{rem_conjtorot}(i), we partition $\Sigma(\alpha)$ according to
\begin{equation}
\Sigma(\alpha) = \mathcal{R}_\alpha \bigcup \left\{ E \in \Sigma(\alpha): L(\alpha, A^E) > 0 \right\}  ~\mbox{.}
\end{equation}

Clearly, Theorem \ref{thm_main2} already shows
\begin{equation}
\mathcal{R}_\alpha \subseteq \liminf_{p_n/q_n \to \alpha} S_+\left(\frac{p_n}{q_n}\right) ~\mbox{,}
\end{equation}
whence it remains to prove that
\begin{equation}
\left\{ E \in \Sigma(\alpha): L(\alpha, A^E) > 0 \right\} \subseteq \liminf_{p_n/q_n \to \alpha} S_+\left(\frac{p_n}{q_n}\right) ~\mbox{.}
\end{equation}
In turn, this will follow by showing,
\begin{equation}
\mathscr{E}_{1/k} \subseteq \liminf_{p_n/q_n \to \alpha} S_+\left(\frac{p_n}{q_n}\right) ~\mbox{,}
\end{equation}
for all $k \in \mathbb{N}$.

Let $k \in \mathbb{N}$ be fixed and arbitrary. Note that by continuity of the spectrum in Hausdorff metric, $S_+\left(\frac{p_n}{q_n}\right)$ consists of at most $q_n$ disjoint closed intervals. Thus, employing Theorem \ref{thm_jk}, analogous arguments as in the proof of Proposition \ref{prop_case2} yield
\begin{multline}
\left\vert \left\{ E \in \mathscr{E}_{1/k}: E \not\in S_+\left(\frac{p_n}{q_n}\right) \right\} \right\vert \leq 2 q_n c(\alpha, 1/k) \left\vert \left(\alpha - \frac{p_n}{q_n}\right) \log^{\gamma(\alpha)} \left\vert \alpha - \frac{p_n}{q_n}\right\vert \right\vert \\
\leq 2 c(\alpha, 1/k) \frac{1}{q_{n+1}}\log^{\gamma(\alpha)} \left( q_n q_{n+1} \right) \leq 2^{1+\gamma(\alpha)}  c(\alpha, 1/k) \dfrac{\log^{\gamma(\alpha)}(q_{n+1})}{q_{n+1}} ~\mbox{,}
\end{multline}
which is summable by (\ref{eq_cf2}).

Finally, if $\alpha$ is non-Diophantine, employing the same arguments as in the end of Sec. \ref{sec_dualtiy}, we directly conclude that
\begin{equation}
\left\vert \Sigma(\alpha) \setminus \liminf_{n \to \infty} S_+\left( \dfrac{p_n}{q_n} \right) \right\vert = 0 ~\mbox{,}
\end{equation}
since by 1/2-H\"older continuity of $S_+$ in the Hausdorff metric \cite{A},
\begin{equation}
\left\vert \left\{ E \in \Sigma(\alpha): E \notin S_+\left( \dfrac{p_n}{q_n} \right) \right\} \right\vert \leq 2 C \left\vert \alpha - \frac{p_n}{q_n} \right\vert^{1/2} q_n < 2 C \sqrt{\frac{1}{q_{n}^{r-1}}} ~\mbox{.}
\end{equation}
This completes the proof of Theorem \ref{thm_main} (ii).

\section{Some general facts on Duality} \label{sec_genduality}

The purpose of this final section is to present an approach to $S_+$ and duality through the study of decomposable operators. This leads to a simple proof of Theorem \ref{thm_duality_inv}, which has only been explicit in the literature for the almost Mathieu operator \cite{AVS}. All considerations in this section apply to Schr\"odinger operators with {\em{continuos}} potential $v$.

The following is based on the elegant approach originally introduced for almost Mathieu by Chulaevsky and Delyon \cite{ChuDel}. Later, similar ideas were employed in \cite{FFF,DD, EE}.

Physically, duality may be viewed as a change to ``momentum eigenstates'', thus on a heuristic level giving rise to the correspondence between ``localized states'' and Bloch waves. To make this rigorous we consider the constant fiber direct integral,
\begin{equation}
\mathcal{H}^\prime := \int_\mathbb{T}^\oplus \mathit{l}^2(\mathbb{Z}) \ud \theta ~\mbox{,}
\end{equation}
which, as usual, is defined as the space of $\mathit{l}^2(\mathbb{Z})$-valued, $L^2$-functions over the measure space $(\mathbb{T},\ud \theta)$. For the general theory of fiber direct integrals we refer the reader to e.g. \cite{RS}. 

Let $\beta \in \mathbb{T}$ be fixed. Interpreting $H_{\beta,\theta}$ as fibers of the decomposable operator,
\begin{equation} \label{eq_defopdualDF}
H_\beta^\prime=\int_\mathbb{T}^\oplus H_{\beta,\theta} \ud \theta ~\mbox{,}
\end{equation}
the family $\{H_{\beta,\theta}\}_{\theta \in \mathbb{T}} $ naturally induces an operator on the space $\mathcal{H}^\prime$,
\begin{equation}
(H_\beta^\prime \psi)(\theta, .) := H_{\beta,\theta} \psi(\theta, .) ~\mbox{,}
\end{equation}
with equality viewed in $L^2$. Similarly, with the dual $\{ \hat{H}_{\beta,\theta}  \}_{\theta \in \mathbb{T}}$, defined in (\ref{eq_defdual}), we associate the 
decomposable operator,
\begin{equation}
\hat{H}_\beta^\prime:=\int_\mathbb{T}^\oplus \hat{H}_{\beta,\theta} \ud \theta ~\mbox{.}
\end{equation}

We mention that spectral measures for $H_\beta^\prime$ just amount to spectral averages w.r.t. $\ud \theta$, i.e. given $\psi \in \mathcal{H}_\beta^\prime$, the spectral measure $\ud \mu_\psi$ associated with $\psi$ and $H_\beta^\prime$ is
\begin{equation} \label{eq_spectralaverag}
\ud \mu_\psi = \int_\mathbb{T} \ud \mu_{\psi(\theta)} \ud \theta ~\mbox{.}
\end{equation}
Here, $\ud \mu_{\psi(\theta)}$ is the spectral measure for $\psi(\theta,.)$ and $H_{\beta,\theta}$. Similar holds for the dual $\hat{H}_\beta^\prime$. Within the present framework, an important example of (\ref{eq_spectralaverag}) is the density of states, in which case $\psi(\theta, n) = \delta_{1,n}$.

The correspondence between dual operators is mediated by the unitary, $\mathcal{U}: \mathcal{H}^\prime \to \mathcal{H}^\prime$,
\begin{equation} \label{eq_unit}
(\mathcal{U} \psi)(\eta, m):=\sum_{n \in \mathbb{Z}} \int_\mathbb{T} \ud \theta \mathrm{e}^{2 \pi i m \theta} \mathrm{e}^{2 \pi i n (m\alpha+\eta)} \psi(\theta,n) ~\mbox{.}
\end{equation}
Note that $\mathcal{U}^{-1}$ is obtained from (\ref{eq_unit}) by simply reversing the signs in the exponentials. The unitary (\ref{eq_unit}) had first been introduced in context of the almost Mathieu operator \cite{ChuDel}.
\begin{remark}
We mention that combining (\ref{eq_spectralaverag}) and (\ref{eq_unit}) we immediately conclude invariance of the density of states under duality. In \cite{DD} this had already been established using different means. Another proof of invariance of the density of states using (\ref{eq_unit}), written for almost Mathieu but immediately generalizable, is given in \cite{FFF}.
\end{remark}

Duality is expressed as a unitary equivalence of the operators $H_\beta^\prime$ and $\hat{H}_\beta^\prime$,
\begin{equation} \label{eq_unitequiv}
\mathcal{U}^{-1} H_\beta^\prime \mathcal{U} = \hat{H}_\beta^\prime ~\mbox{.}
\end{equation}
We mention, the computation leading to (\ref{eq_unitequiv}), can be simplified using density of trigonometric polynomials in $\mathcal{C}(\mathbb{T})$, in which case
verification of the following identities suffices:
\begin{eqnarray}
\left(\mathcal{U}^{-1} T_{e_k} \mathcal{U} \psi\right) (\theta,n) & = & \psi(\theta,n-k) ~\mbox{,} \\
\left(\mathcal{U}^{-1} T \mathcal{U} \psi \right) (\theta,n) & = & \mathrm{e}^{-2\pi i (\alpha n + \theta)} \psi(\theta,n) ~\mbox{,}
\end{eqnarray}
where for $k \in \mathbb{Z}$, we define $T_{e_k} , T: \mathcal{H}^\prime \to \mathcal{H}^\prime$,
\begin{eqnarray}
\left(T_{e_k} \psi\right)(\theta,n) & := & \mathrm{e}^{2 \pi i k (\alpha n + \theta)} \psi(\theta,n) ~\mbox{,} \\
\left( T \psi\right)(\theta,n) & := &\psi(\theta, n+1) ~\mbox{.}
\end{eqnarray}
Again, all equations here are interpreted in $L^2$.

Denoting the spectra of $H_\beta^\prime$ and $\hat{H}_\beta^\prime$ by $\sigma^\prime(\beta)$ and $\hat{\sigma}^\prime(\beta)$, respectively, (\ref{eq_unitequiv}) implies
\begin{equation} \label{eq_spectradual}
\sigma^\prime(\beta) = \hat{\sigma}^\prime(\beta) ~\mbox{.}
\end{equation}

The following proposition interprets the sets $S_+(\beta)$ and $\hat{S}_+(\beta)$ as the spectra of the decomposable operators $H_\beta^\prime$ and $\hat{H}_\beta^\prime$. In particular, this shows why these sets are the natural quantities to reflect the spectral properties of the {\em{family}} $\{H_{\beta,\theta}\}_{\theta \in \mathbb{T}} $ and $\{ \hat{H}_{\beta,\theta}  \}_{\theta \in \mathbb{T}}$, respectively. 

\begin{prop} \label{prop_ints+}
Assume $v(\theta)$ is continuous and let $\beta \in \mathbb{T}$. Then,
\begin{eqnarray}
\sigma^\prime(\beta) & = & S_+(\beta) ~\mbox{,} \\ \label{eq_ints+wts}
\hat{\sigma}^\prime(\beta) & = & \hat{S}_+(\beta) ~\mbox{.}
\end{eqnarray}
\end{prop}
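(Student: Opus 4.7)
The plan is to invoke the standard characterization of the spectrum of a direct integral of self-adjoint operators (Reed--Simon IV, Theorem XIII.85): for a measurable family $\{A_\theta\}_{\theta\in\mathbb{T}}$ of uniformly bounded self-adjoint operators,
\begin{equation*}
\sigma\!\left(\int_\mathbb{T}^\oplus A_\theta\,\ud\theta\right) = \left\{\lambda\in\mathbb{R}:\ \abs{\{\theta\in\mathbb{T}:\sigma(A_\theta)\cap(\lambda-\epsilon,\lambda+\epsilon)\neq\emptyset\}}>0\ \forall\epsilon>0\right\}.
\end{equation*}
The conclusions $\sigma'(\beta)=S_+(\beta)$ and $\hat\sigma'(\beta)=\hat S_+(\beta)$ will follow from this characterization once I establish enough continuity of the fibers in the respective parameters.

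First I would check that $\theta\mapsto H_{\beta,\theta}$ is norm-continuous. Since $v$ is continuous on the compact torus $\mathbb{T}$ it is uniformly continuous, with modulus $\omega_v$, and therefore
\begin{equation*}
\norm{H_{\beta,\theta_1}-H_{\beta,\theta_2}}\leq \sup_{n\in\mathbb{Z}}\abs{v(\alpha n+\theta_1)-v(\alpha n+\theta_2)}\leq \omega_v(\abs{\theta_1-\theta_2}).
\end{equation*}
An analogous estimate holds for the dual family since $\xi\mapsto 2\cos(2\pi(\beta n+\xi))$ is uniformly Lipschitz in $\xi$ and the convolution piece $\hat v\ast \cdot$ does not depend on $\xi$. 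In particular both families are measurable in the sense required for the direct integral, and by standard perturbation theory for self-adjoint operators the maps $\theta\mapsto\sigma(H_{\beta,\theta})$ and $\xi\mapsto\sigma(\hat H_{\beta,\xi})$ are continuous in the Hausdorff metric on compact subsets of $\mathbb{R}$.

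Next I would observe that $S_+(\beta)$ is closed in $\mathbb{R}$: it is the union, over the compact parameter space $\mathbb{T}$, of a Hausdorff-continuous family of compact sets, hence the image is compact in the Hausdorff metric and its union is closed. The same holds verbatim for $\hat S_+(\beta)$.

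With these ingredients in place, the two inclusions follow directly from the Reed--Simon criterion. For $S_+(\beta)\subseteq\sigma'(\beta)$: if $\lambda\in\sigma(H_{\beta,\theta_0})$, then Hausdorff continuity produces, for each $\epsilon>0$, an open neighborhood $U_\epsilon\ni\theta_0$ with $\sigma(H_{\beta,\theta})\cap(\lambda-\epsilon,\lambda+\epsilon)\neq\emptyset$ for all $\theta\in U_\epsilon$; since $\abs{U_\epsilon}>0$, the criterion forces $\lambda\in\sigma'(\beta)$. For $\sigma'(\beta)\subseteq S_+(\beta)$: if $\lambda\notin S_+(\beta)$ then $\epsilon_0:=\mathrm{dist}(\lambda,S_+(\beta))>0$ by closedness, so $\sigma(H_{\beta,\theta})\cap(\lambda-\epsilon_0,\lambda+\epsilon_0)=\emptyset$ for \emph{every} $\theta\in\mathbb{T}$, and the criterion excludes $\lambda$ from $\sigma'(\beta)$. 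The identical argument applied to $\hat H_{\beta,\xi}$ yields $\hat\sigma'(\beta)=\hat S_+(\beta)$. There is no real obstacle here; the only point requiring care is verifying that mere continuity of $v$ is enough to make the fiber operators vary continuously in operator norm (which it is, because translation of the argument is isometric on $\mathcal{C}(\mathbb{T})$), so that the classical direct-integral spectrum formula is applicable.
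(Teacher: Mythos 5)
Your proposal is correct and follows essentially the same route as the paper: both rest on the Reed--Simon direct-integral criterion (Theorem XIII.85) combined with norm continuity of $\theta\mapsto H_{\beta,\theta}$ (from uniform continuity of $v$) and the resulting Hausdorff continuity of the fiber spectra. The only cosmetic difference is that you handle the inclusion $\sigma^\prime(\beta)\subseteq S_+(\beta)$ by first proving $S_+(\beta)$ is closed and arguing by contraposition, whereas the paper extracts a convergent sequence of phases directly; the two arguments use identical ingredients.
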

\begin{proof}
Since the argument for the dual is analogous, we shall focus on establishing (\ref{eq_ints+wts}). First, recall from the general theory of decomposable operators (see e.g. \cite{RS}, Theorem XIII.85)
that $E \in \sigma^\prime(\beta)$ if and only if $\forall \epsilon > 0$,
\begin{equation}
\left\vert \left\{ \theta \in \mathbb{T}: (E - \epsilon, E + \epsilon) \cap \sigma(\beta,\theta) \neq \emptyset \right\} \right\vert > 0 ~\mbox{.}
\end{equation}
We shall make use of the following standard fact
\begin{fact} \label{fact_contspectr}
Let $\mathcal{H}$ be separable Hilbert space, and denote by $\mathscr{S} \subseteq \mathscr{L}(\mathcal{H})$ the Banach-subspace of bounded self adjoint operators on $\mathcal{H}$. Then, 
\begin{equation}
\rho_H( \sigma(A) , \sigma(B) ) \to 0 ~\mbox{, as} ~A \to B ~\mbox{ in} ~\mathscr{L}(\mathcal{H}) ~\mbox{.}  
\end{equation}
Here, $\rho_H( . , .)$ is the Hausdorff metric.
\end{fact}

Let $E \in S_+(\beta)$, then $E \in \sigma(\beta, \theta_0)$, some $\theta_0 \in \mathbb{T}$. By continuity of the potential, 
\begin{equation}
\Vert H_{\beta,\theta} - H_{\beta, \theta_0} \Vert \to 0 ~\mbox{, as} ~\theta \to \theta_0 ~\mbox{,}
\end{equation}
whence Fact \ref{fact_contspectr} implies that given $\epsilon>0$ there exists $\delta>0$ such that
\begin{equation}
\sigma(\beta, \theta) \cap (E-\epsilon,E+\epsilon) \neq \emptyset ~\mbox{,}
\end{equation}
for all $\abs{\theta - \theta_0} < \delta$. In particular, $E \in \sigma^\prime(\beta)$.

Conversely, suppose $E \in \sigma^\prime(\beta)$. Then, by compactness, for some convergent sequence $\theta_n \to \theta_\infty$ and some $\theta_\infty \in \mathbb{T}$,
\begin{equation}
\mathrm{dist}(E, \sigma(\beta, \theta_n)) \to 0 ~\mbox{, as } n \to \infty ~\mbox{.}
\end{equation}
We claim $E \in \sigma(\beta, \theta_\infty) \subseteq S_+(\beta)$. 

Indeed, using Fact \ref{fact_contspectr} and 
\begin{equation}
\mathrm{dist}(E, \sigma(\beta,\theta_\infty)) \leq \mathrm{dist}(E, \sigma(\beta, \theta_n)) + \rho_H(\sigma(\beta, \theta_n), \sigma(\beta, \theta_\infty)) ~\mbox{,}
\end{equation}
yields $\mathrm{dist}(E,\sigma(\beta, \theta_\infty)) = 0$, as claimed.
\end{proof}

As an immediate corollary we obtain Theorem \ref{thm_duality_inv}. We mention that for irrational $\beta$, this could have also been concluded from invariance of the density of states, which, as mentioned earlier had already been known for general operators of the form (\ref{eq_defn_op}) \cite{DD}. In the present framework it simply follows from (\ref{eq_spectralaverag}). The point here is that we obtain Theorem \ref{thm_duality_inv}, by treating rational and irrational $\beta$ on the same footing. For the almost Mathieu operator, Theorem  \ref{thm_duality_inv}  had been obtained in \cite{AVS}, where rational and irrational $\beta$ were considered separately.

\appendix

\section{Avila's quantization of the acceleration for analytic SL(2,$\mathbb{C}$)-cocycles} \label{app_2}

In this section, we provide a proof of Lemma \ref{lemma_avila}, which as mentioned earlier, is a more detailed version of Avila's theorem on quantization of the acceleration \cite{B}. Since the result is general to analytic $SL(2,\mathbb{C})$-cocycles, following we replace $A^E$ by an arbitrary analytic matrix valued function $D:\mathbb{T} \to SL(2,\mathbb{C})$, extending holomorphically to a neighborhood of $\abs{\im(z)} \leq \delta$, for some fixed $\delta>0$. We set $D_\epsilon(x) := D(x + i\epsilon)$, for $\abs{\epsilon} \leq \delta$.

Given $\beta \in \mathbb{T}$, the Lyapunov exponent of the $SL(2,\mathbb{C})$-cocycle $(\beta,D)$ is defined in analogy to (\ref{eq_defle}). 

Subharmonicity of $L(\beta, D_\epsilon)$ viewed as a function of $\epsilon \in \mathbb{C}$, is easily seen to imply that $L(\beta, D_\epsilon)$ is convex in $\mathrm{Re} (\epsilon)$. This shows existence of the right derivative in (\ref{eq_quantaccel}). In context of his global theory of one-frequency operators \cite{A}, Avila introduces the {\em{acceleration}}
\begin{equation} \label{def_accel}
\omega(\alpha, D_\epsilon):=\frac{1}{2 \pi} D_+\left(L(\alpha, D_\epsilon)\right) ~\mbox{,}
\end{equation}
for a fixed {\em{irrational}} $\alpha\in \mathbb{T}$. 

Finally, we mention that when applying the general result proven below to the Schr\"odinger cocycle $(\alpha, A^E)$, just recall that
\begin{equation}
\norm{A^E}_\delta \leq C (1+\norm{v}_\delta) ~\mbox{, for} ~E \in \Sigma(\alpha) ~\mbox{,}
\end{equation}
which yields the claimed uniformity of Lemma \ref{lemma_avila} over $\Sigma(\alpha)$.

\begin{proof}
For $n \in \mathbb{N}$ let $r_{n}=\frac{p_{n}}{q_{n}}$ with $(p_{n},q_{n})=1$ be {\em{any}} sequence of rationals approximating $\alpha$ (not necessarily the canonical approximants from the continued fraction expansion of $\alpha$). Set
\begin{equation*}
D_n(x):=D(x+(q_{n}-1) r_{n}) \dots D(x) ~\mbox{.}
\end{equation*}
Then,
\begin{equation} \label{eq_rationallyap1}
L(r_{n}, D) = \dfrac{1}{q_{n}} \int_{\mathbb{T}} \log{\rho\left(D_n\right)} \ud x ~\mbox{.}
\end{equation}
Here, $\rho\left(D_n\right)$ denotes the spectral radius of the matrix $D_n$. To simplify notation we write $\rho_n := \rho(D_n)$ and $t_n:=\mathrm{tr}(D_n)$, $n \in \mathbb{N}$.

We first claim that uniformly over  $\abs{\epsilon} \leq \delta$ we have
\begin{equation} \label{eq_qa1}
L(r_n, D_\epsilon) = \dfrac{1}{q_n} \int_{\abs{t_n(.+i\epsilon)} \geq 2} \log \abs{\rho_n(x+ i \epsilon)} \ud x + o(1) \mbox{,}
\end{equation}
as $n \to \infty$. 

We shall make use of the following simple fact for $SL(2, \mathbb{C})$ matrices:
\begin{claim} 
Let $A \in SL(2,\mathbb{C})$, then
\begin{equation} \label{eq_trrho}
\max\left\{1, \frac{1}{2} \abs{\mathrm{tr}(A)} \right\} \leq \rho(A) \leq \left( 1 + \sqrt{2} \right) \max\left\{1, \frac{1}{2} \abs{\mathrm{tr}(A)} \right\}
\end{equation}
\end{claim}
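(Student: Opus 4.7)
The claim is a straightforward two-by-two matrix fact. The plan is to parametrize the eigenvalues of $A \in SL(2,\mathbb{C})$ by a single complex number using $\det A = 1$, compute $\rho(A)$ and $\mathrm{tr}(A)$ in these terms, and then verify each of the two inequalities by elementary estimates. The only thing that needs a moment's thought is why the constant $1+\sqrt{2}$ on the right is the correct threshold.

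First, since $A \in SL(2,\mathbb{C})$, its eigenvalues are $\lambda$ and $\lambda^{-1}$ for some $\lambda \in \mathbb{C}^\times$. Passing to $\lambda^{-1}$ if necessary, I may assume $r := |\lambda| \geq 1$, so that $\rho(A) = r$ and $\mathrm{tr}(A) = \lambda + \lambda^{-1}$.

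For the lower bound, $\rho(A) = r \geq 1$ is immediate, and the triangle inequality gives
\begin{equation*}
|\mathrm{tr}(A)| = |\lambda + \lambda^{-1}| \leq r + r^{-1} \leq 2 r = 2\rho(A),
\end{equation*}
so $\rho(A) \geq \tfrac{1}{2}|\mathrm{tr}(A)|$. Combining yields the left inequality of (\ref{eq_trrho}).

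For the upper bound, split into two cases according to whether $r \leq 1+\sqrt{2}$ or not. If $r \leq 1+\sqrt{2}$, then $\rho(A) = r \leq (1+\sqrt{2}) \cdot 1 \leq (1+\sqrt{2}) \max\{1,\tfrac{1}{2}|\mathrm{tr}(A)|\}$ and we are done. If $r > 1+\sqrt{2}$, the reverse triangle inequality gives $|\mathrm{tr}(A)| \geq r - r^{-1}$, and the point is to check that this already forces $\tfrac{1}{2}|\mathrm{tr}(A)| \geq r/(1+\sqrt{2})$. Rearranging, this amounts to $(1+\sqrt{2})(r^2 - 1) \geq 2 r^2$, i.e., $(\sqrt{2}-1) r^2 \geq 1 + \sqrt{2}$, i.e., $r^2 \geq (1+\sqrt{2})/(\sqrt{2}-1) = (1+\sqrt{2})^2$, which is precisely $r \geq 1+\sqrt{2}$. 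Hence in this case $\rho(A) = r \leq (1+\sqrt{2}) \cdot \tfrac{1}{2}|\mathrm{tr}(A)| \leq (1+\sqrt{2}) \max\{1,\tfrac{1}{2}|\mathrm{tr}(A)|\}$.

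There is no real obstacle here; the only ``choice'' is the constant $1+\sqrt{2}$, and the computation above shows it is exactly the value at which the trivial estimate $r \leq 1+\sqrt{2}$ and the linear estimate $r \leq (1+\sqrt{2})(r - r^{-1})/2$ meet. Any constant strictly larger than $1$ works after adjusting the threshold, but $1+\sqrt{2}$ is the smallest for which the two-case argument with threshold equal to the constant itself goes through cleanly.
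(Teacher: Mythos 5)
Your proof is correct and follows essentially the same route as the paper: both arguments rest on the eigenvalue parametrization $\lambda,\lambda^{-1}$ and the reverse triangle inequality $\abs{\mathrm{tr}(A)}\geq\rho(A)-\rho(A)^{-1}$, followed by an elementary two-case check that the constant $1+\sqrt{2}$ suffices (the paper splits on $\abs{\mathrm{tr}(A)}\gtrless 2$ and solves the resulting quadratic, while you split on $\rho(A)\gtrless 1+\sqrt{2}$, a purely cosmetic difference).
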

\begin{remark}
Both inequalities in (\ref{eq_trrho}) are sharp as can be seen from
\begin{equation}
\rho(A) = \left\vert \dfrac{\mathrm{tr}(A) + \sqrt{\mathrm{tr}(A)^2 - 4}}{2} \right\vert ~\mbox{,}
\end{equation}
for an appropriate branch of the root, and  taking $A$ with, correspondingly, $\mathrm{tr}(A) = 2i$ and $\mathrm{tr}(A) = 2$.
\end{remark}
\begin{proof}
The lower bound in (\ref{eq_trrho}) for $\rho(A)$ is obvious. The upper bound follows since
\begin{equation}
\abs{\mathrm{tr}(A)} \geq \rho(A) - \frac{1}{\rho(A)} ~\mbox{,}
\end{equation}
which implies that the spectral radius and the trace satisfy
\begin{equation}
\rho^2(A) - \abs{\mathrm{tr}(A)} \rho(A) - 1 \leq 0 ~\mbox{.}
\end{equation}
Thus,
\begin{equation} \label{eq_trrho1}
\rho(A) \leq \frac{1}{2} \left\{ \abs{\mathrm{tr}(A)} + \sqrt{4 + \abs{\mathrm{tr}(A)}^2}    \right\} ~\mbox{,}
\end{equation}
which upon considering separately the two cases $\abs{ \mathrm{tr}(A)} \geq 2$ and $\abs{ \mathrm{tr}(A)} < 2$ yields the rightmost inequality of (\ref{eq_trrho}).
\end{proof}

Equation (\ref{eq_trrho}) shows that $1 \leq \rho_n \leq 1+ \sqrt{2}$ whenever $\abs{t_n} < 2$; hence we conclude,
\begin{equation}
0 \leq \frac{1}{q_n} \int_{\abs{t_n(. + i \epsilon)} < 2} \log \rho_n \ud x \leq C/q_n \to 0 ~\mbox{,}
\end{equation}
uniformly over $\abs{\epsilon} \leq \delta$ as $n \to \infty$, giving rise to (\ref{eq_qa1}).

Notice that $(p_n, q_n) = 1$ implies that $t_n$ is a $q_n$-periodic, analytic function with extension to a neighborhood of $\abs{\im(z)} \leq \delta$. Due to analyticity, it is desirable to replace $\rho_n$ in the integrand of (\ref{eq_qa1}) by $t_n$. To justify this, we employ (\ref{eq_trrho}) and conclude,
\begin{equation}
0 \leq \frac{1}{q_n} \int_{\abs{t_n(. + i \epsilon)}\geq 2} \log \left\vert \dfrac{\rho_n}{t_n} \right\vert \ud x \leq \frac{1}{q_n} \log(1+\sqrt{2}) \to 0 ~\mbox{,}
\end{equation}
uniformly over $\abs{\epsilon} \leq \delta$ as $n \to \infty$. Correspondingly we obtain the following basic expression for the LE,
\begin{equation}
L(r_n, D_\epsilon) = \frac{1}{q_n} \int_{\abs{t_n(.+i \epsilon)}\geq 2} \log \abs{t_n(x+i\epsilon)} \ud x + o(1) ~\mbox{,}
\end{equation}
uniformly in $\abs{\epsilon} \leq \delta$ as $n \to \infty$.

Writing $t_n(x+i \epsilon) :=\sum_{k \in \mathbb{Z}} a_{n,k} \mathrm{e}^{2 \pi i k q_n (x + i \epsilon)}$, analyticity in a neighborhood of $\abs{\im(z)} \leq \delta$ implies the following decay of Fourier-coefficients,
\begin{equation} \label{eq_trrho2}
\abs{a_{n,k}} \leq 2 \norm{D}_\delta^{q_n} \mathrm{e}^{- 2 \pi \abs{k} q_n \delta} ~\mbox{,} ~k \in \mathbb{Z} ~,~n \in \mathbb{N} ~\mbox{.}
\end{equation}
Choosing $K \in \mathbb{N}$ sufficiently large so that 
\begin{equation} \label{eq_trrho3}
2 \pi \delta K - \log \norm{D}_\delta > 0 ~\mbox{,}
\end{equation}
ensures exponential decay of $a_{n,k}$ in (\ref{eq_trrho2}) {\em{independent}} of $n$ for $k \geq K$, i.e. for any fixed $0<\delta_1<\delta$, $\exists$ $K = K(\norm{D}_\delta,\delta_1)$ and a constant $C=C(\delta_1)$ such that
\begin{equation} \label{eq_trrho4}
\max_{x \in \mathbb{T}} \left\vert \sum_{\abs{k} > K} a_{n,k} \mathrm{e}^{2 \pi i q_n k (x+i \epsilon)} \right\vert \leq C \mathrm{e}^{- 2 \pi q_n \delta_1} ~\mbox{,}
\end{equation}
for all $0 \leq \abs{\epsilon} \leq \delta_1$. 

Let $0< \delta_1 < \delta$ and the corresponding $K = K(\norm{D}_\delta,\delta_1)$ be fixed. Furthermore, for $0 \leq \abs{\epsilon} \leq \delta_1$, define $k_n \in \{-K,\dots,K\}$ so that 
\begin{equation}
\left\vert a_{n,k_n} \mathrm{e}^{-2\pi q_n \epsilon k_n} \right\vert = \max_{\abs{k} \leq K} \left\vert a_{n,k} \mathrm{e}^{-2 \pi q_n \epsilon k} \right\vert =: M_n ~\mbox{.}
\end{equation}
Note that both $k_n$ as well as $M_n$ depend on $\epsilon$.

We emphasize the importance of (\ref{eq_trrho4}) in that it allows a cut-off of $t_n$, i.e. uniformly over 
$0\leq \abs{\epsilon} \leq \delta_1$ and $x \in \mathbb{T}$ we obtain
\begin{equation} \label{eq_trrho5}
t_n(x+i\epsilon) = \sum_{\abs{k} \leq K} a_{n,k} \mathrm{e}^{2 \pi i k q_n (x + i \epsilon)} + \mathcal{O}(\mathrm{e}^{-2 \pi q_n \delta_1}) ~\mbox{,}
\end{equation}
as $n \to \infty$. Applied to the cocycle $(\alpha, A^E)$, Eq. (\ref{eq_trrho5}) proves (\ref{eq_chambers3}) of Lemma \ref{lemma_avila}.

\begin{lemma}
Uniformly over $0\leq \abs{\epsilon} \leq \delta_1$ we have,
\begin{equation} \label{eq_trrho7_1}
\frac{1}{q_n} \int_{\abs{t_n(.+i\epsilon)}\geq 2} \log \left\vert \sum_{\abs{k} \leq K} a_{n,k} \mathrm{e}^{2 \pi i k q_n (x+i \epsilon)} \right\vert \ud x= \frac{1}{q_n} \log M_n + o(1) ~\mbox{,}
\end{equation}
as $n \to \infty$.
\end{lemma}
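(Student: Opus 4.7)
The plan is to normalize by $M_n$, reduce to the log-integral of a polynomial of bounded degree with uniformly bounded coefficients, combine the triangle inequality with a Mahler-measure bound, and finally pass from $\mathbb{T}$ to the subset $\{\abs{t_n(\cdot+i\epsilon)}\geq 2\}$. Concretely, I write $P_n(x+i\epsilon):=\sum_{\abs{k}\leq K} a_{n,k}\mathrm{e}^{2\pi ikq_n(x+i\epsilon)} = M_n\,Q_n(x)$, where $Q_n(x) := \sum_{\abs{k}\leq K} \tilde{b}_{n,k}\,\mathrm{e}^{2\pi ikq_nx}$ and $\tilde{b}_{n,k} := a_{n,k}\mathrm{e}^{-2\pi kq_n\epsilon}/M_n$ satisfies $\abs{\tilde{b}_{n,k}}\leq 1$ for every $k$, with $\abs{\tilde{b}_{n,k_n}}=1$ by definition of $M_n$. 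Setting $z=\mathrm{e}^{2\pi iq_nx}$, $Q_n$ may be viewed as $z^{-K}\tilde{R}_n(z)$, where $\tilde{R}_n(z) := \sum_{j=0}^{2K} \tilde{b}_{n,j-K}\,z^j$ is a polynomial in $z$ of degree at most $2K$ whose coefficients are bounded by $1$, at least one having modulus exactly $1$. By periodicity, $\int_0^1 \log\abs{Q_n(x)}\,\ud x = \int_0^1 \log\abs{\tilde{R}_n(\mathrm{e}^{2\pi i\theta})}\,\ud\theta$.

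The upper bound $\int_0^1\log\abs{\tilde{R}_n(\mathrm{e}^{2\pi i\theta})}\,\ud\theta \leq \log(2K+1)$ is immediate from $\abs{\tilde R_n}\leq 2K+1$ on the unit circle. For the matching lower bound I invoke the classical coefficient--Mahler-measure inequality: for any polynomial $R(z) = \sum_{j=0}^d \gamma_j z^j$, one has $\abs{\gamma_j} \leq \binom{d}{j}\,M(R)$, where $M(R) := \exp\int_0^1\log\abs{R(\mathrm{e}^{2\pi i\theta})}\,\ud\theta$ is the Mahler measure. Applied to $\tilde R_n$ at the index where the coefficient has modulus $1$, this yields $M(\tilde{R}_n) \geq \binom{2K}{K}^{-1}$ and hence $\int_0^1\log\abs{\tilde{R}_n(\mathrm{e}^{2\pi i\theta})}\,\ud\theta \geq -2K\log 2$. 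Combining both bounds, $\int_\mathbb{T}\log\abs{P_n(x+i\epsilon)}\,\ud x = \log M_n + O(1)$ uniformly in $\abs{\epsilon}\leq\delta_1$.

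To restrict to $\{\abs{t_n(\cdot+i\epsilon)}\geq 2\}$, I use the cut-off (\ref{eq_trrho4}): since $\abs{P_n - t_n} = O(\mathrm{e}^{-2\pi q_n\delta_1})$ uniformly, eventually $\abs{P_n}>1$ on $\{\abs{t_n}\geq 2\}$ (so $\log\abs{P_n} = \log_+\abs{P_n}$ there) and $\abs{P_n}\leq 3$ on $\{\abs{t_n}<2\}$ (so $0\leq\log_+\abs{P_n}\leq\log 3$ there). Since $\log_+\abs{P_n}\geq\log\abs{P_n}$ and $\log_+\abs{P_n}\leq\log(1+(2K+1)M_n)$, the preceding estimate also yields $\int_\mathbb{T}\log_+\abs{P_n}\,\ud x = \log M_n + O(1)$; combined with the trivial bound on $\int_{\abs{t_n}<2}\log_+\abs{P_n}$, one obtains
\begin{equation*}
\int_{\abs{t_n(\cdot+i\epsilon)}\geq 2}\log\abs{P_n(x+i\epsilon)}\,\ud x = \int_\mathbb{T}\log_+\abs{P_n}\,\ud x - \int_{\abs{t_n}<2}\log_+\abs{P_n}\,\ud x = \log M_n + O(1).
\end{equation*}
Dividing by $q_n$ completes the proof uniformly in $\abs{\epsilon}\leq\delta_1$.

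I expect the principal obstacle to lie in the lower bound for $\int\log\abs{\tilde R_n}$. The leading coefficient $\tilde b_{n,K}$ of $\tilde R_n$ may itself vanish, and only the coefficient at the $\epsilon$-dependent index $k_n$ is guaranteed to have modulus $1$, so Jensen's formula applied at the leading coefficient provides no useful information. The coefficient--Mahler-measure inequality applied precisely at index $k_n$ is exactly what is needed to secure a lower bound independent of $n$ and $\epsilon$.
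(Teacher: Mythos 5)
Your proof is correct and reaches the same conclusion as the paper's, but the key lower bound is obtained by a genuinely different device. The paper also normalizes $P_n = M_n Q_n$ and must show that $\int \log\abs{Q_n}$ is bounded below uniformly; it does so by decomposing the domain into the level sets $\Omega_{j,n}=\{\abs{P_n}\geq 1,\ \mathrm{e}^{-(j+1)}\leq\abs{Q_n}\leq \mathrm{e}^{-j}\}$, bounding $\abs{\Omega_{j,n}}$ by $C\mathrm{e}^{-j/(2K+1)}$ via the Remez/Cartan-type inequality of Theorem \ref{ref_polyn} (using that $\norm{Q_n}_{\mathbb{T}}\geq 1$ because one Fourier coefficient has modulus $1$), and summing $\sum_j (j+1)\abs{\Omega_{j,n}}$. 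You instead pass to the algebraic polynomial $\tilde R_n$ of degree at most $2K$ and apply the coefficient--Mahler-measure inequality $\abs{\gamma_j}\leq\binom{d}{j}M(R)$ at the index where the coefficient is unimodular, obtaining the two-sided bound $-2K\log 2\leq\int_0^1\log\abs{\tilde R_n(\mathrm{e}^{2\pi i\theta})}\,\ud\theta\leq\log(2K+1)$ in one stroke; the reduction $\int_0^1\log\abs{Q_n}\,\ud x=\int_0^1\log\abs{\tilde R_n(\mathrm{e}^{2\pi i\theta})}\,\ud\theta$ by $q_n$-fold periodicity is correct, and applying the inequality at the $\epsilon$-dependent index $k_n$ rather than at the (possibly vanishing) leading coefficient is, as you say, the essential point. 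Both arguments exploit the same structural fact --- a trigonometric polynomial of degree $2K$ in $\mathrm{e}^{2\pi i q_n x}$ with a coefficient of modulus one has log-integral bounded below independently of $n$ and $\epsilon$ --- but yours is shorter and avoids the level-set bookkeeping, at the price of not producing the measure bounds on $\{\abs{Q_n}\ \mbox{small}\}$ that the Remez route gives as a by-product (not needed here).

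One caveat, which you share with the paper rather than introduce: the step $\int_{\mathbb{T}}\log_+\abs{P_n}\leq\log(1+(2K+1)M_n)\leq\log M_n+O(1)$ requires $M_n$ bounded below; if $M_n$ were exponentially small the identity with $\log M_n$ (as opposed to $\log_+ M_n$) would fail, since the left-hand side of (\ref{eq_trrho7_1}) is then $0$. The paper's own reduction has the same implicit gap, which is resolved only afterwards by Lemma \ref{lem_qac2} and the passage to $\max\{\log M_n,0\}$ in (\ref{eq_trrho7a2}), so this does not count against your argument.
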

\begin{proof}
Using (\ref{eq_trrho5}), 
\begin{equation} \label{eq_trrho8}
\left\vert \sum_{\abs{k}\leq K} a_{n,k} \mathrm{e}^{2 \pi i k q_n(x+i\epsilon)} \right\vert \geq 1 ~\mbox{,}
\end{equation}
for sufficiently large $n$ (uniformly over $0\leq\abs{\epsilon}\leq \delta_1$) on the set $\{ \abs{t_n(.+i\epsilon)} \geq 2\}$.

Setting $P_n(x):=\sum_{\abs{k} \leq K} a_{n,k} \mathrm{e}^{2\pi i k q_n (x+i \epsilon)}$, it thus suffices to show  that 
\begin{equation}
\frac{1}{q_n} \int_{\abs{P_n(x+i\epsilon)} \geq 1} \log \left( \dfrac{\abs{P_n(x+i\epsilon)}}{M_n} \right) \ud x \to 0 ~\mbox{,}
\end{equation}
uniformly over $0 \leq \abs{\epsilon} \leq \delta_1$ as $n \to \infty$.

First, we note that 
\begin{equation}
\left\vert \dfrac{P_n(x+i\epsilon)}{M_n} \right\vert= \left\vert 1 + \sum_{\stackrel{\abs{k} \leq K}{k \neq k_n}}\dfrac{ a_{n,k} \mathrm{e}^{- 2 \pi \epsilon k  q_n} }{a_{k_n,n} \mathrm{e}^{-2 \pi  \epsilon k_n q_n}} \mathrm{e}^{2 \pi i x q_n (k - k_n)} \right\vert =: \left \vert Q_n(x + i \epsilon) \right\vert ~\mbox{,}
\end{equation}
where $Q_n(x + i \epsilon) = \sum_{j=0}^{2K} c_{j,n}(\epsilon) \mathrm{e}^{2 \pi i x q_n j}$ with
\begin{equation} \label{eq_trrho7}
\abs{c_{j,n}(\epsilon)} \begin{cases}  \leq 1 & \mbox{, if} ~ j-k_n \neq -K ~\mbox{,}  \\
= 1 & \mbox{, if} ~ j-k_n = -K ~\mbox{.}      \end{cases}
\end{equation}

Let $\epsilon \leq \delta_1$ be arbitrary. For $j \in \mathbb{N}_0$, consider the level sets $\Omega_{j,n}:= \{ x\in \mathbb{T}: 1 \leq \abs{P_n(x+i\epsilon)} ~\mbox{,} ~ \mathrm{e}^{-(j+1)} \leq \abs{Q_n(x+i\epsilon)} \leq \mathrm{e}^{-j} \}$. Then,
\begin{equation} \label{eq_trrho6}
\int_{\abs{P_n(.+i\epsilon)}\geq 1} \log \left\vert \dfrac{P_n(x+i\epsilon)}{M_n} \right\vert = \sum_{j=0}^{\infty} \int_{\Omega_{j,n}} \log \abs{Q_n(x+i\epsilon)} \ud x + \int_{\abs{Q_n(x+i\epsilon)} \geq 1,\abs{P_n(x+i\epsilon)} \geq 1} \log \abs{Q_n(x+i\epsilon)} \ud x ~\mbox{.}
\end{equation}

The second contribution on the right hand side of (\ref{eq_trrho6}) is easily dealt with,
\begin{equation}
0 \leq \frac{1}{q_n} \int_{\abs{Q_n(x+i\epsilon)} \geq 1,\abs{P_n(x+i\epsilon)} \geq 1} \log \abs{Q_n(x+i\epsilon)} \ud x \leq \frac{1}{q_n} \log(2 K + 1) \to 0 ~\mbox{.}
\end{equation}

We estimate $\vert \Omega_{j,n} \vert$ using the following well-known Remez-type inequality which e.g. can be obtained from Cartan's Lemma \footnote{We thank Sasha Sodin for enlightening discussions on the history of such statements.}. For a review on statements of this type for algebraic and trigonometric polynomials see e.g. \cite{T}. We mention that a related fact was rediscovered in \cite{Q}, see Theorem 8 therein.
\begin{theorem} \label{ref_polyn}
Let $Q(x) : = \sum_{j=1}^{r} c_j \mathrm{e}^{2 \pi i x j}$ be a polynomial of degree $r$ in the variable  $\mathrm{e}^{2 \pi i x}$. There exists a {\em{universal}} constant $C$ such that for a given measurable set $X \subseteq \mathbb{T}$, $\vert X \vert >0$, the following holds:
\begin{equation}
\norm{Q}_\mathbb{T} \leq (C/\vert X \vert)^{r} \sup_{x \in X} \abs{Q(x)} ~\mbox{.} 
\end{equation}
\end{theorem}

Notice that (\ref{eq_trrho7}) implies $\norm{Q_n(.+i\epsilon)}_\mathbb{T} \geq \abs{c_{k_n - K}} = 1$, hence Theorem \ref{ref_polyn} enables to bound the first term on the right hand side of (\ref{eq_trrho6})
\begin{equation}
\left\vert \sum_{j=0}^{\infty} \int_{\Omega_{j,n}} \log \abs{Q_n(x+i\epsilon)} \ud x \right\vert \leq C \sum_{j=0}^{\infty} (j+1) \mathrm{e}^{-j/(2K+1)} ~\mbox{,}
\end{equation}
which completes the proof of the Lemma.
\end{proof}

Recalling (\ref{eq_trrho5}), bounded convergence accounts for the deviation of $t_n$ from its cut-off $P_n$, since by (\ref{eq_trrho8})
\begin{equation}
\frac{1}{q_n} \log \left\vert 1 \pm \frac{\mathcal{O}(\mathrm{e}^{-2 \pi q_n \delta_1})}{P_n(x+i \epsilon)} \right\vert \to 0 ~\mbox{,}
\end{equation}
uniformly on $\{ \abs{t_n(x+i\epsilon)} \geq 2, ~\abs{\epsilon} \leq \delta_1 \}$.

Equation (\ref{eq_trrho7_1}) is thus improved giving,
\begin{equation} \label{eq_trrho7a}
\frac{1}{q_n} \int_{\abs{t_n(.+i\epsilon)}\geq 2} \log \left\vert t_n(x+i\epsilon) \right\vert \ud x= \frac{1}{q_n} \log M_n + o(1) ~\mbox{,}
\end{equation}
uniformly on $0\leq \abs{\epsilon}\leq \delta_1$, as $n \to \infty$, as $n \to \infty$.

Finally we mention that in principle one could imagine the right hand side of (\ref{eq_trrho7a}) to diverge if $M_n$ becomes arbitrarily close to $0$ as $n \to \infty$. That this is not the case is the subject of the following:
\begin{lemma} \label{lem_qac2}
Let $[\epsilon_1, \epsilon_2] \subseteq [-\delta_1, \delta_1]$ a closed interval so that  $\min_{\epsilon \in [\epsilon_1,\epsilon_2]} L(\alpha, D_\epsilon) > 0$. Then, there exists $N \in \mathbb{N}$ such that $\min_{\epsilon \in [\epsilon_1,\epsilon_2]} M_n(\epsilon) \geq \frac{1}{2 K}$, whenever $n \geq N$.
\end{lemma}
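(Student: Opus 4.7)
I would proceed by contradiction. Suppose the conclusion fails; then, passing to a subsequence (still indexed by $n$), there exist $\epsilon_n \in [\epsilon_1,\epsilon_2]$ with $M_n(\epsilon_n) < \frac{1}{2K}$, and by compactness of $[\epsilon_1,\epsilon_2]$ we may further assume $\epsilon_n \to \epsilon_* \in [\epsilon_1,\epsilon_2]$. (Note the statement is only meaningful for $K \geq 1$, in which case $\frac{2K+1}{2K} < 2$; if $K=0$ then $M_n=\abs{a_{n,0}}$ and the analogous assertion follows by the same line of argument with trivial modifications.)

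The first step is to control the discriminant along this subsequence. By the cut-off identity (\ref{eq_trrho5}) together with the definition of $M_n(\epsilon)$ as a maximum over $\abs{k}\leq K$,
\[
\sup_{x\in\mathbb{T}}\abs{t_n(x+i\epsilon_n)} \;\leq\; (2K+1)\,M_n(\epsilon_n) \,+\, O\!\left(\mathrm{e}^{-2\pi q_n \delta_1}\right) \;<\; \frac{2K+1}{2K} + o(1),
\]
which is strictly below $2$ for $n$ large. Consequently the domain of integration in (\ref{eq_qa1}) is eventually empty along this subsequence, so the identity (\ref{eq_qa1}) forces $L(r_n, D_{\epsilon_n}) = o(1)$.

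Next, I would contradict this with the positivity hypothesis by showing $L(r_n, D_{\epsilon_n}) \to L(\alpha, D_{\epsilon_*})$. Split
\[
\abs{L(r_n, D_{\epsilon_n}) - L(\alpha, D_{\epsilon_*})} \;\leq\; \abs{L(r_n, D_{\epsilon_n}) - L(r_n, D_{\epsilon_*})} \,+\, \abs{L(r_n, D_{\epsilon_*}) - L(\alpha, D_{\epsilon_*})}.
\]
The second term vanishes by continuity of the Lyapunov exponent in the frequency at the irrational $\alpha$ for analytic $SL(2,\mathbb{C})$-cocycles (a variant of Bourgain--Jitomirskaya, as used throughout the paper). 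For the first term, convexity of $\epsilon \mapsto L(r_n, D_\epsilon)$ together with the uniform-in-$n$ bound $L(r_n,D_\epsilon)\leq \log\norm{D}_{\delta}$ on $\abs{\epsilon}\leq\delta$ makes the family $\{L(r_n,D_\cdot)\}_n$ equi-Lipschitz on any compact sub-interval of $(-\delta,\delta)$, so this term is $O(\abs{\epsilon_n-\epsilon_*}) \to 0$. Combining, $L(r_n, D_{\epsilon_n}) \to L(\alpha, D_{\epsilon_*}) \geq \min_{\epsilon\in[\epsilon_1,\epsilon_2]} L(\alpha, D_\epsilon) > 0$, contradicting $L(r_n, D_{\epsilon_n}) \to 0$.

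The main obstacle is the underlying continuity of $L$ in the frequency at an irrational $\alpha$ for analytic $SL(2,\mathbb{C})$-cocycles; once this is in hand, the upgrade to the joint limit with $\epsilon_n \to \epsilon_*$ is routine from convexity and uniform boundedness. As a byproduct of the argument above, one in fact obtains the much stronger statement that $\frac{1}{q_n}\log M_n(\epsilon) \geq \tfrac{1}{2}\min_{[\epsilon_1,\epsilon_2]} L(\alpha, D_\epsilon)$ uniformly on $[\epsilon_1,\epsilon_2]$ for $n$ large, which amply implies $M_n(\epsilon) \geq \frac{1}{2K}$ eventually.
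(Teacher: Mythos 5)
Your argument is correct and is essentially the contrapositive of the paper's own proof: both rest on the same two ingredients, namely that continuity of the Lyapunov exponent in the frequency forces $\{\,x:\abs{t_n(x+i\epsilon)}\geq 2\,\}$ to be nonempty (equivalently, that an empty set there would make $L(r_n,D_\epsilon)=o(1)$ via (\ref{eq_qa1})), and the cut-off bound $\abs{t_n}\leq(2K+1)M_n+O(\mathrm{e}^{-2\pi q_n\delta_1})$ from (\ref{eq_trrho5}). The paper argues directly (uniform convergence plus compactness produces a single $x_0$ with $\abs{t_n(x_0+i\epsilon)}\geq 2$ for all $\epsilon$ in the interval), while you extract a convergent subsequence of bad phases and derive a contradiction; the difference is purely organizational.
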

\begin{proof}
Continuity of the LE for non-singular cocycles w.r.t. $r_n \to \alpha \notin \mathbb{Q}$ \cite{C}, implies that $L(r_n, D_\epsilon) \to L(\alpha, D_\epsilon)$ uniformly on $0 \leq \abs{\epsilon} \leq \delta_1$ as $n\to\infty$. Hence, since $\min_{\epsilon \in [\epsilon_1,\epsilon_2]} L(\alpha, D_\epsilon) > 0$, for any given $\epsilon \in [\epsilon_1,\epsilon_2]$, 
$\{ \abs{t_n(x+i\epsilon)}\geq 2 \} \neq \emptyset$ for $n \geq \tilde{N}(\epsilon)$. By compactness of $[\epsilon_1, \epsilon_2]$ this however already produces $N \in \mathbb{N}$ such that for any $n \geq N$, $\{ \inf_{\epsilon \in [\epsilon_1, \epsilon_2]}  \abs{t_n(. + i \epsilon)} > 2\} \neq \emptyset$.

In summary there exists $x_0\in \mathbb{T}$ satisfying
\begin{equation}
2 \leq \inf_{\epsilon\in [\epsilon_1,\epsilon_2]} \abs{t_n(x_0 + i\epsilon)} \leq (2 K +1) M_n(\epsilon) + \mathcal{O}(\mathrm{e}^{ - 2 \pi q_n \delta_1}) ~\mbox{,}
\end{equation}
which implies the claim of the Lemma.
\end{proof}

Lemma \ref{lem_qac2} immediately strengthens  (\ref{eq_trrho7a})  in the sense:
\begin{equation}
\frac{1}{q_n} \int_{\abs{t_n(.+i\epsilon)}\geq 2} \log \left\vert t_n(x+i\epsilon) \right\vert \ud x= \frac{1}{q_n} \max\{\log M_n(\epsilon),0\} + o(1) ~\mbox{,}
\end{equation}
uniformly on any interval $[\epsilon_1,\epsilon_2]$ where $\min_{\epsilon \in [\epsilon_1,\epsilon_2]} L(\alpha, D_\epsilon) > 0$.

On the other hand considering the compact set $S:=\{ \epsilon \in [-\delta_1, \delta_1] :  L(\alpha, D_\epsilon) = 0\}$, it is automatically true that $\frac{1}{q_n} \log M_n(\epsilon) \to 0$ uniformly in $\epsilon \in S$ as $n\to \infty$. Thus, in summary we obtain the following asymptotic expression for the complexified LE under rational approximation of $\beta$:
\begin{equation} \label{eq_trrho7a2}
L(\alpha, D_\epsilon) = \frac{1}{q_n} \max\{\log M_n(\epsilon) , 0\} + o(1) = \max\{\max_{\abs{k} \leq K} \{\frac{1}{q_n} \log \abs{a_{n,k}} - 2 \pi \epsilon k \} ,0\} + o(1) ~\mbox{,}
\end{equation}
uniformly over $0\leq \abs{\epsilon} \leq \delta_1$ as $n\to \infty$. In the context of Schr\"odinger cocycles, we have thus established (\ref{eq_chambers1}) of Lemma \ref{lemma_avila}.

Equation (\ref{eq_trrho7a2}) shows that $L(r_{n}, D_{\epsilon})$ is uniformly close on $0 \leq \epsilon \leq \delta_1$ to a piecewise linear, convex function with right derivatives in
$\{ - 2 \pi K, \dots , 2 \pi K\}$. On the other hand as $r_{n} \to \alpha$, the continuity statement of \cite{C} for the Lyapunov exponent implies uniform convergence of $L(r_{n}, D_{\delta})$  to $L(\alpha, D)$, $\abs{\delta} \leq \epsilon$, which completes the proof.
\end{proof}

\section{Proof of Lemma \ref{lem_cheb}} \label{app_lemcheb}

Set
\begin{equation}
T(x) := \zeta T_n\left( \left(\dfrac{L}{\zeta}\right)^{1/n} \dfrac{x}{2^{1-1/n}} \right) ~\mbox{.}
\end{equation}
One checks that $T(x)$ shares the properties of $p(x)$ specified in Lemma  \ref{lem_cheb}. 

Consider appropriate horizontal shifts, so that $0$ is the leftmost point where both $\abs{T}$ and $\abs{p}$ equal $\zeta$. To simplify notation, we will still denote these shifted polynomials by $T$ and $p$, respectively. Consequently, let $N$ ($\widetilde{N}$) be the rightmost point where $\abs{T}$ ($\abs{p}$) attain $\zeta$. 

If $N \geq \widetilde{N}$, let $0=x_n < \dots < x_0=\widetilde{N}$ such that $p(x_k) = (-1)^k \zeta$, $0 \leq k \leq n$. In particular, since $\Vert T \Vert_{[0,\widetilde{N}]} \leq \zeta$, the definition of $x_k$ implies, $\abs{T(x_k)} \leq \abs{p(x_k)} = \zeta$.

Hence, considering $f:= p - T$ ($\deg f \leq n-1$), we conclude
\begin{equation}
\begin{cases}
f(x_k) \geq 0 & \mbox{, if $k$ even,} \\
f(x_k) \leq 0 & \mbox{, if $k$ odd.}
\end{cases}
\end{equation}
Since all $x_k$ are distinct, this requires $f$ to have at least $n$ zeros which, if $f \not \equiv 0$, however contradicts $\deg(f) \leq n-1$.

In case $N \leq \widetilde{N}$, interchange the roles of $T$ and $p$ in above proof. Thus, in summary we must have $p = T,$ as claimed.

\section{Proof of Lemma \ref{lem_shamissodin}} \label{app_shamissodin}

The proof of the Lemma is based on the following formula, which is only a slight alteration of Proposition 4.1 in \cite{D}. Proposition \ref{prop_interpolformula} is verified by a straightforward, albeit tedious computation:
\begin{prop} \label{prop_interpolformula}
Let $a_1, \dots, a_n \in \mathbb{R}$ and $s(x)$ be a differentiable function in a neighborhood of the points $x_j$ with $s(x_j) \neq 0$, $1 \leq j \leq n$. Consider,
\begin{eqnarray} \label{eq_definterpol}
\mathcal{T}(x;x_1, \dots, x_{n}) & := & \sum_{j=1}^{n} \dfrac{a_j}{s(x_j)} \prod_{ k \neq j} \dfrac{x - x_k}{x_j - x_k} + \prod_{j=1}^{n} (x-x_j) ~\mbox{,} \\
\mathcal{B}_j(x) & := & \prod_{ k \neq j} (x - x_k) ~\mbox{.}
\end{eqnarray}

Then, for any $x^*$ and $1 \leq k \leq n$ we have
\begin{equation}\label{b3}
\dfrac{\partial}{\partial x_k} \mathcal{T}(x^*; x_1, \dots, x_n) = - \dfrac{ \mathcal{B}_k(x^*)}{\mathcal{B}_k(x_k) s(x_k)} \dfrac{\partial}{\partial x} \vert_{x=x_k} \left[ \mathcal{T}(x; x_1, \dots, x_n) s(x) \right] ~\mbox{.}
\end{equation}
\end{prop}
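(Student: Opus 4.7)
\textbf{Proof plan for Proposition \ref{prop_interpolformula}.}

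The plan is to avoid differentiating the explicit formula (\ref{eq_definterpol}) termwise, and instead exploit the characterization of $\mathcal{T}$ as an interpolant. Observe that $\mathcal{T}(x;x_1,\dots,x_n)$ is a polynomial in $x$ of degree exactly $n$ with leading coefficient $1$, and that by construction
\begin{equation*}
\mathcal{T}(x_j;x_1,\dots,x_n)\, s(x_j) = a_j, \qquad j = 1,\dots,n.
\end{equation*}
In fact, among monic polynomials of degree $n$ in $x$, this interpolation condition determines $\mathcal{T}$ uniquely. It will be much cleaner to differentiate the interpolation identity than to differentiate the Lagrange formula directly.

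First, I will total-differentiate $\mathcal{T}(x_j;\vec x)\,s(x_j) = a_j$ with respect to $x_k$, treating $x_j$ as a variable and using the chain rule. Writing $F(x,\vec x) := \mathcal{T}(x;\vec x)\,s(x)$, this gives
\begin{equation*}
\partial_x F(x_j,\vec x)\,\delta_{jk} + (\partial_{x_k}\mathcal{T})(x_j;\vec x)\, s(x_j) = 0.
\end{equation*}
For $j\neq k$, since $s(x_j)\neq 0$, this yields $(\partial_{x_k}\mathcal{T})(x_j;\vec x)=0$. For $j=k$, it yields
\begin{equation*}
(\partial_{x_k}\mathcal{T})(x_k;\vec x) = -\frac{1}{s(x_k)}\,\partial_x\!\left[\mathcal{T}(x;\vec x)\,s(x)\right]\Big|_{x=x_k}.
\end{equation*}

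Next, I will observe that $\partial_{x_k}\mathcal{T}(x;\vec x)$ is itself a polynomial in $x$ of degree at most $n-1$: the coefficient of $x^n$ in $\mathcal{T}$ is the constant $1$, independent of $\vec x$, so differentiating in $x_k$ kills the top-degree term. Combined with the $n-1$ vanishings at the $x_j$, $j\neq k$, this forces
\begin{equation*}
(\partial_{x_k}\mathcal{T})(x;\vec x) = C(\vec x)\,\mathcal{B}_k(x)
\end{equation*}
for some coefficient $C(\vec x)$. Evaluating at $x=x_k$ and using the $j=k$ identity above identifies
\begin{equation*}
C(\vec x) = -\frac{1}{\mathcal{B}_k(x_k)\,s(x_k)}\,\partial_x\!\left[\mathcal{T}(x;\vec x)\,s(x)\right]\Big|_{x=x_k},
\end{equation*}
and substituting $x=x^*$ produces exactly (\ref{b3}).

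There is essentially no obstacle beyond the initial bookkeeping: the main thing to be careful about is distinguishing the partial derivative $\partial_{x_k}$ (at fixed $x$) from the total derivative with respect to $x_k$ when $x$ itself is set equal to $x_j$, and verifying that $\mathcal{T}$ is indeed monic of fixed degree in $x$ so that $\partial_{x_k}\mathcal{T}$ drops one degree. Once those two points are clear, the structure (zeros at $x_j$ for $j\neq k$, plus a single normalization at $x=x_k$) pins down $\partial_{x_k}\mathcal{T}$ completely and delivers the claimed formula without any explicit manipulation of the Lagrange sum.
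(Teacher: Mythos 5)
Your argument is correct, and it is a genuinely different (and cleaner) route than the one the paper gestures at. The paper simply states that Proposition \ref{prop_interpolformula} ``is verified by a straightforward, albeit tedious computation'' (a slight variation on Proposition 4.1 of \cite{D}), which is an invitation to differentiate the explicit Lagrange-plus-monic formula termwise in $x_k$ and simplify. You instead characterize $\mathcal{T}(\cdot;\vec x)$ structurally: it is the unique monic degree-$n$ polynomial with $\mathcal{T}(x_j;\vec x)\,s(x_j)=a_j$, and you differentiate that interpolation identity implicitly in $x_k$. The chain-rule step correctly produces $(\partial_{x_k}\mathcal{T})(x_j;\vec x)=0$ for $j\neq k$ and the normalization $(\partial_{x_k}\mathcal{T})(x_k;\vec x) = -s(x_k)^{-1}\,\partial_x\bigl[\mathcal{T}(x;\vec x)s(x)\bigr]\big|_{x=x_k}$; since the leading coefficient of $\mathcal{T}$ in $x$ is the constant $1$, $\partial_{x_k}\mathcal{T}$ has degree at most $n-1$, so the $n-1$ zeros force $\partial_{x_k}\mathcal{T}(x;\vec x)=C(\vec x)\,\mathcal{B}_k(x)$, and evaluating at $x_k$ pins down $C$. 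This buys you a proof with essentially no computation and makes transparent \emph{why} the formula has the product structure it has (a Lagrange basis polynomial $\mathcal{B}_k(x)/\mathcal{B}_k(x_k)$ times a normalization), at the small cost of the bookkeeping you already flagged, namely keeping the partial $\partial_{x_k}$ at fixed $x$ cleanly separated from the total $x_k$-derivative when $x$ is then set to $x_j$. The paper's direct computation is self-contained but opaque; your argument also generalizes immediately to any family of monic degree-$n$ interpolants in $x$ whose values at the nodes are prescribed independently of $\vec x$.
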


We refer to the set-up and notation from the proof of Theorem \ref{thm_poly}. Comparing (\ref{eq_interpol}) with (\ref{eq_definterpol}) we conclude that the family of interpolating polynomials $\tau(x;x_1,\dots, x_{n-1})$, $x_1 < \dots < x_{n-1}$, with $s(x)= \mathrm{LC}(p) x$, satisfies the hypotheses of Proposition \ref{prop_interpolformula} with $a_j = (-1)^j \eta_j$.
Let $x^*$ be an arbitrary fixed point, $x^* \in B_j \setminus \{x_{j-1},x_j\}$. Then
\begin{eqnarray} \label{eq_xjs}
x_{n-1} < \dots < x_{j+1} < x_j < 0 < x^* < x_{j-1} < \dots < x_1 ~\mbox{, if $j$ is odd ~,} \\
x_{n-1} < \dots < x_{j+1} < x_j < x^* < 0 < x_{j-1} < \dots < x_1 ~\mbox{, if $j$ is even ~.} \nonumber
\end{eqnarray}

Let $1 \leq k \leq {n-1}$ be fixed and arbitrary. Based on Proposition \ref{prop_interpolformula}, we can estimate the sign of 
\begin{equation}\label{eq_b4}
\mathrm{sgn} \dfrac{\partial}{\partial x_k} \left \vert \tau(x^*;x_1,\dots,x_{n-1})\right \vert^2 = \mathrm{sgn}  (\tau(x^*;x_1,\dots,x_{n-1}) \dfrac{\partial}{\partial x_k} \tau(x^*;x_1,\dots,x_{n-1}) )~\mbox{.}
\end{equation}

Using that $\mathrm{LC}(p)\tau(x;x_1,\dots,x_{k-1}) s(x) \in \mathcal{F}$, $\mathrm{LC}(p) >0,$ and (\ref{eq_xjs}), we obtain (see Fig. \ref{figure_1}):
\begin{eqnarray} \label{eq_b5}
\mathrm{sgn} \left\{ \dfrac{\partial}{\partial x} 
 \left[ \tau(x;x_1,\dots,x_{n-1}) s(x) \right] \right\} & = & \begin{cases} (-1)^{k+1} & , ~ x=x_k^{+} ~\mbox{,} \\ (-1)^k &, ~x=x_k^- ~\mbox{.} \end{cases} \nonumber \\
\mathrm{sgn} \mathcal{B}_k(x^*) & = & \begin{cases} (-1)^{j-1} & , ~ k \geq j ~\mbox{and} ~ x^* \notin \{x_{j-1}, x_j\} ~\mbox{,} \\
                                                                                         (-1)^{j-2} & ,~k \leq j-1 ~\mbox{and} ~x^* \notin \{x_{j-1}, x_j\} ~\mbox{,} \\
                                                                                         0 &,~ x^* \in \{x_{j-1}, x_j\} ~\mbox{.} \end{cases} \nonumber \\
\mathrm{sgn} \mathcal{B}_k(x_k) & = & (-1)^{k-1} \nonumber \\
\mathrm{sgn} s(x_k) & = & \begin{cases} 1 &, ~k\leq j-1\mbox{,} \\ -1 &, ~ k \geq j \mbox{.} \end{cases}
\end{eqnarray}

Recall that, since we shifted the band $B_j$ so that $y_j=0$, $x^*$ is positive (i.e. $s(x^*) > 0$) if and only if $j$ is odd, and negative (i.e. $s(x^*) < 0$) otherwise (see (\ref{eq_xjs})). Thus in either case,
\begin{eqnarray} \label{eq_b6}
\mathrm{sgn} \tau(x^*) & = & \begin{cases} \mathrm{sgn}\left\{ \tau(x^*) s(x^*) \right\} = +1 &, ~ 0 < x^* ~\mbox{,} \\
                                                         - \mathrm{sgn}\left\{ \tau(x^*) s(x^*) \right\} = -1 &, ~ 0 > x^* ~\mbox{.} \end{cases} = (-1)^{j+1}
\end{eqnarray}

In summary, (\ref{b3}) and (\ref{eq_b4}) - (\ref{eq_b6}) imply
\begin{equation} \label{eq_deform}
\mathrm{sgn} \dfrac{\partial}{\partial x_k} \left \vert \tau(x^*;x_1,\dots,x_{n-1})\right \vert^2 = \begin{cases} -1 &, ~ x_k=x_k^- \mbox{,} \\ +1 &, ~x_k=x_k^+ \mbox{.} \end{cases}
\end{equation}

Finally, recall that for any $x$
\begin{equation}
\tau(x; x_1, \dots, x_{k-1}, x_k^+, x_{k+1}, \dots, x_{n-1}) = \tau(x; x_1, \dots, x_{k-1}, x_k^-, x_{k+1}, \dots, x_{n-1}) ~\mbox{,}
\end{equation}
whence any change of $x_k^-$ results in a corresponding change in $x_k^+$ and vice versa.  In fact, (\ref{eq_deform}) shows that increasing $x_k^-$ (thus decreasing $\left \vert \tau(x^*;x_1,\dots,x_{n-1})\right \vert$) results in a decrease of $x_k^+$. In particular, $\left \vert \tau(x^*;x_1,\dots,x_{n-1})\right \vert$ will be at minimum if and only if $x_k^- = x_k^+$. 

Since $1\leq k \leq n-1$ was arbitrary, we can deform $q(x)$ such that $x_k^- = x_k^+$ for all $k$, which yields the conclusion of Lemma \ref{lem_shamissodin} upon use of Lemma \ref{lem_cheb}.

\bibliographystyle{amsplain}

\end{document}